\crefname{hypothesis}{Hypothesis}{Hypotheses}
\newcommand{\norm}[1]{\left \| #1 \right \|}
 \newcommand{\lsup}[1]{\underset{#1\to\infty}{\overline{\lim}}}
\title{The Hydrodynamic Limit of Neural Networks with Balanced Excitation and Inhibition}
\author{  James MacLaurin, Pedro Vilanova }
\begin{document}
\maketitle

\begin{abstract}
The theory of `Balanced Neural Networks' is a very popular explanation for the high degree of variability and stochasticity in the brain's activity. We determine equations for the hydrodynamic limit of a balanced all-to-all network of $2n$ neurons for asymptotically large $n$. The neurons are divided into two classes (excitatory and inhibitory). Each excitatory neuron excites every other neuron, and each inhibitory neuron inhibits all of the other neurons. The model is of a stochastic hybrid nature, such that the synaptic response of each neuron is governed by an ordinary differential equation. The effect of neuron $j$ on neuron $k$ is dictated by a spiking Poisson Process, with intensity given by a sigmoidal function of the synaptic potentiation of neuron $j$. The interactions are scaled by $O(n^{-1/2})$, which is much stronger than the $O(n^{-1})$ scaling of classical interacting particle systems. We demonstrate that, under suitable conditions, the system does not blow up as $n\to \infty$ because the network activity is balanced between excitatory and inhibitory inputs. The limiting population dynamics is proved to be Gaussian: with the mean determined by the balanced between excitation and inhibition, and the variance determined by the Central Limit Theorem for inhomogeneous Poisson Processes. The limiting equations can thus be expressed as autonomous Ordinary Differential Equations for the means and variances.
\end{abstract}

\section{Introduction}

In theoretical neuroscience, it is widely conjectured that neurons are typically dynamically balanced, with a high number of excitatory and inhibitory inputs \cite{Shadlen1994,Tsodyks1995,Amit1997,Vreeswijk1996,VanVreeswijk1998,Shadlen1998,Engelken2022}. It is thought that the dynamic balance could explain the high degree of stochasticity and variability in cortical discharge, which is indicated by the fact that the coefficient of variation in cortical spike trains is typically $O(1)$ \cite{Softky1993}. Roughly speaking, the theory is that the mean excitation and inhibition approximately `cancel', and what is left are the stochastic fluctuations about the mean \cite{Shadlen1994,Amit1997}. This balanced paradigm has proved extremely popular and has been explored in numerous directions. Some applications include: explaining oscillations and rhythms in brain activity \cite{Brunel2000}, UP / DOWN transitions \cite{Tartaglia2017}, working memory models \cite{BrunelWang2001Working,Lim2014,Rubin2017}, pattern formation \cite{Rosenbaum2014} and spatially-distributed neural activity \cite{Litwin-Kumar2012,Rosenbaum2017}.

We consider the following simple `balanced stochastic network' model, motivated by the model of Van Vreeswijk and Sompolinsky \cite{Vreeswijk1996,VanVreeswijk1998}. The model consists of two types of neurons: $n \gg 1$ excitatory neurons, and $n \gg 1$ inhibitory neurons. Each neuron is connected to every other neuron. Excitatory neurons excite all of the other neurons and inhibitory neurons inhibit all of the other neurons. Although Sompolinsky and Van Vreeswijk considered sparse networks, we limit ourselves to studying all-to-all networks. The model is of a hybrid nature (also termed a Piecewise Deterministic Markov Process \cite{Bressloff2018c,Bressloff2021a}): the spiking of neurons is Poissonian and stochastic (representing the popular view that the dominant source of stochasticity in neurons is spike-transmission failure \cite{Manwani1999}), and the synaptic response is given by a linear ordinary differential equation \cite{Mcdonnell2017}. Note that in most high-dimensional neuronal network models, there are $O(n)$ sources of noise (whether Poissonian or Brownian), i.e. the noise is directly afferent on the neuron \cite{Baladron2012,Chevallier2017,Locherbach2018,Duval2022}. However in our model, there are $O(n^2)$ sources of noise, i.e. the noise is afferent on the synaptic connections, rather than the neurons themselves.

The scaling of the interaction is different from the standard $n^{-1}$ scaling for particle systems with weak interactions (i.e. for Mckean-Vlasov systems \cite{Sznitman1989,Baladron2012,DeMasi2014,Jabin2017,Lucon2020a,Chaintron2021,Bramburger2023,Avitabile2024} or high-dimensional Poissonian chemical reaction networks \cite{Anderson2015,Chevallier2017,Agazzi2018,AgatheNerine2022,Avitabile2024_2}). The scaling factor is $n^{-1/2}$ (so the effect of one neuron on another is relatively stronger than in the Mckean-Vlasov case). The reason that the dynamics does not blowup is that each neuron receives a balance of excitatory and inhibitory inputs which approximately cancel out. This strong scaling requires the development of new techniques to make sense of the high-dimensional limit.% In the theoretical neuroscience community it is widely believed that there exists a hydrodynamic limit for balanced networks like in this paper. However (to the knowledge of these authors) there does not yet exist a rigorous mathematical derivation.

To obtain a hydrodynamic limit, we make use of the fact that the `balanced state' is strongly attracting. This effectively damps down the $O(n^{-1/2})$ terms that could cause blowup, and we are left with the fluctuations about the mean \cite{Katzenberger1991}. See for instance the textbook by Berglund and Gentz for an overview of methods for studying stochastic systems near strongly-attracted manifolds \cite{Berglund2006}, or the recent papers \cite{Parsons2017a,Adams2025} that study the quasi-steady distribution for stochastic systems near attracting manifolds over long timescales. We find that the empirical measure of the system concentrates at a unique Gaussian limit as $n\to\infty$: the variance of the system evolves analogously to the Ornstein-Uhlenbeck stochastic process, and the mean activity is determined by the strong interactions pulling the system towards a balance of excitation and inhibition.

There are several other recent works that have explored the effect of inhibition on mean-field interacting particle systems. Erny, Locherbach and Loukianova consider interacting Hawkes Processes in the diffusive regime (so that interactions are scaled by $(n^{-/2})$). In their work, the noise is directly afferent on the neurons, and they find that in the large $n$ limit the average neural activity does not concentrate but has Gaussian fluctuations \cite{Erny2021}. Pfaffelhuber, Rotter and Stiefel \cite{Pfaffelhuber2022} consider a system of Hawkes Process in the balanced regime; the chief difference from our paper is that in their paper the noise is afferent on the neurons, rather than the synaptic connections. Duval, Lucon and Pouzat \cite{Duval2022} consider a network of Hawkes Processes with multiplicative inhibition. Their work differs from this one insofar as the noise is directly afferent on the neurons, and the effect of one neuron on another scales as $O(n^{-1})$.

It is also worth comparing these equations to the `spin-glass' dynamical models \cite{Sompolinsky1981a,Crisanti1993,BenArous1995,BenArous2006,Faugeras2015,Crisanti2018,Helias2019,MacLaurin2024}. These models also have $n^{-1/2}$ scaling of the interactions. However, the interactions themselves are mediated by static Gaussian random variables, of zero mean and unit variance. These models have also been heavily applied to neuroscience \cite{Moynot2002,Faugeras2015,Parisi2023}. One of the most important differences is that in the spin glass model, an individual neuron has both excitatory and inhibitory effects on other neurons. However, in the balanced model of this paper, individual neurons are either purely excitatory, or purely inhibitory. Indeed, the hydrodynamic limiting equations are different (one can compare the limiting equations of this paper to for instance the equations in \cite{Crisanti2018,MacLaurin2024}).

\section{Model Outline}

For $\beta \in \lbrace e,i \rbrace$ (i.e. `excitatory' or `inhibitory') and $j\in I_n := \lbrace 1,2,\ldots, n \rbrace$, the synaptic dynamics of the $(\beta,j)$ neuron is indicated by the state variable $u^j_{\beta,t} \in \mathbb{R}$. We let $Z^{jk}_{\alpha\beta}(t) \in \mathbb{N}$ count the number of spikes from neuron $ (\beta,k) \to (\alpha,j) $ up to time $t$. The synaptic dynamics is taken to be linear, assuming the form
\begin{align}
u^{j}_{\alpha,t} &= u^j_{\alpha,0} + \int_0^t -  \tau_{\alpha}^{-1} u^{j}_{\alpha,s} ds + C_{\alpha e} n^{-1/2} \sum_{k\in I_n}Z^{jk}_{\alpha e}(t) - C_{\alpha i} n^{-1/2} \sum_{k\in I_n}Z^{jk}_{  \alpha i}(t) 
\end{align}
where $ \lbrace C_{ \alpha\beta} \rbrace_{\alpha,\beta \in \lbrace e,i\rbrace }$ are non-negative constants indicating the relative effects of excitation and inhibition. 

%Enumerating the eigenvalues of $L$ as $\lbrace \sigma_i \rbrace_{1\leq i \leq q}$, it is assumed that there exists $b > 0$ such that
%\begin{align}
%\sup \big\lbrace \mathfrak{Re}(\sigma_i) \big\rbrace \leq - b.
%\end{align}
The system is Poissonian \cite{Bremaud2020}. It is thus assumed that there are propensity functions $f_{\alpha\beta}: \mathbb{R} \to \mathbb{R}_{\geq 0}$ such that the probability that there is a spike on the $(\alpha,k) \mapsto (\beta,j)$ synapse (for $\alpha,\beta  \in \lbrace e,i \rbrace$) over the time interval $[t, t+h]$ ($h \ll 1$) is
\begin{align}
h f_{\beta\alpha}( u^{k}_{\alpha,t}) + O(h^2).
\end{align}
The solution may be represented using the following time-rescaled formulation \cite{Anderson2015}. Let \newline $\lbrace Y^{jk}_{\alpha\beta}(t) \rbrace_{j,k\leq n \fatsemi \alpha,\beta\in \lbrace e,i\rbrace}$ be independent Poisson counting processes, and define
\begin{align}
Z^{jk}_{\alpha \beta}(t) = Y^{jk}_{\alpha\beta}\bigg( \int_0^t f_{\alpha\beta}(u^{k}_{\beta,s}) ds \bigg) 
\end{align}
We underscore the fact that, in this model, the stochasticity is afferent on the synaptic connections, rather than the neurons themselves. This corresponds to the view that the dominant source of stochasticity in high-dimensional neural networks is synaptic transmission failure.
%It is also assumed that
%\begin{align}
%\lim_{x \to -\infty} f_e(x) &= 0 \\
%\lim_{x\to-\infty} f_i(x) &= 0.
%\end{align}
%Similarly, the probability that there is a spike on the $(i,k) \mapsto (\beta,j)$ synapse (for $\beta  \in \lbrace e,i \rbrace$) over the time interval $[t, t+\Delta]$ ($\Delta \ll 1$) is 
%\begin{align}
%\Delta f_i( u^{k,q}_{i,t}).
%\end{align}

\subsection{Notation}
%We let $\mathbb{M}_q$ denote the set of all $q\times  q$ symmetric matrices that are positive-semi-definite. We endow $\mathbb{M}_q$ with the Euclidean topology inherited from $\mathbb{R}^4$ (it is a closed subset of $\mathbb{R}^4$). Let $\mathcal{X}_T \subset \mathcal{D}([0,T],\mathbb{R})$ consist of all cadlag functions that are (i) non-decreasing, and (ii) equal to $0$ at time $0$.

The topology on the space of measures is always that of weak convergence \cite{Billingsley1999}. In other words, a sequence $\mu_n$ converges to $\mu$ if and only if for all bounded continuous functions $f$,
\[
\mathbb{E}^{\mu_n}[f] \to \mathbb{E}^{\mu}[f].
\]
We endow $\mathcal{C}([0,t],\mathbb{R})$ with the topology generated by the supremum norm,
\begin{align}
\norm{x}_t = \sup_{s\leq t}\big| x_s \big|.
\end{align}
For two probability measures $\mu,\nu \in \mathcal{P}\big( \mathcal{C}([0,t],\mathbb{R}) \big)$, the Wasserstein Distance is written as
\begin{align}
d(\mu , \nu ) = \inf_{\xi}\bigg\lbrace \mathbb{E}^{\xi}\big[ \norm{x-y}_t \big] \bigg\rbrace
\end{align}
where the infimum is over all probability measures $\xi \in \mathcal{P}\big( \mathcal{C}([0,t],\mathbb{R})^2 \big)$, such that the law of $x$ is $\mu$, and the law of $y$ is $\nu$.
\section{Assumptions}

We start by making some basic assumptions on the regularity of the propensity functions.
%We require that the matrix governing the local synaptic dynamics is lower triangular. This is a standard assumption, and it is motivated by the fact that synaptic processing consists of $q-1$ sequential events.

\begin{hypothesis}
It is assumed that for all $\alpha,\beta \in \lbrace e,i\rbrace$, $z \mapsto f_{\alpha\beta}(z)$ are globally Lipschitz functions. It is assumed that there is a constant $C_f $ such that for all $x \in \mathbb{R}$ and all $\alpha,\beta \in \lbrace e,i\rbrace$,
\begin{align}
0 < f_{\alpha\beta}(x) \leq C_f.
\end{align}
%This implements the essential property that inhibition dominates excitation at high levels of network activity. This prevents the system from blowing up as $n\to\infty$. We also assume that $f_e(x)$ and $f_i(x)$ converge to straight lines
\end{hypothesis}
Typically the intensity functions are sigmoidal (smooth, bounded, and monotonically increasing), but we do not necessarily need to make this assumption \cite{Ermentrout2010}.
%\begin{hypothesis}
%$x \mapsto f_{\alpha\beta}(x)$ is strictly increasing.
%\end{hypothesis}

Thus far, we have not yet made sufficient assumptions that ensure that the system does not blow up in finite time as $n\to\infty$. One option would be to make various assumptions on the propensity functions that guarantee that for all time, the synaptic activity is with high probability below some threshold independent of $n$ (in fact we do this in Section ). However, for the moment we avoid making stringent assumptions on the propensity functions, and instead we seek to determine which parts of the phase space are stable (in the limit as $n\to\infty$). We term this part of the phase space the `balanced manifold'. Our main theorem concerns the prediction of the dynamics as long as it stays near the balanced manifold (predicting the dynamics away from the balanced manifold is more difficult and will be pursued in a subsequent paper).

Intuitively, the balanced manifold is such that (i) the net excitatory inputs throughout the system are almost identical to the net inhibitory inputs throughout the network, and (ii) the manifold is locally stable (as long as the fluctuations are Gaussian). To this end, let us first define the auxiliary functions
\begin{align}
F_e, F_i  :& \mathbb{R}^2 \times \mathbb{R}^+ \times \mathbb{R}^+ \mapsto \mathbb{R} \\
F_e (v_{e} , v_i,  K_e ,  K_i ) =&  \int_{\mathbb{R}}\big( C_{ee} \rho(K_e,x)  f_{ee} ( v_e + x ) - C_{ei} \rho(K_i,x)   f_{ei}( v_i + x)  \big) dx  \\
F_i (v_{e} , v_i,  K_e ,  K_i ) =&  \int_{\mathbb{R}}\big( C_{ie} \rho(K_e,x)  f_{ie} ( v_e + x ) - C_{ii} \rho(K_i,x)   f_{ii}( v_i + x)  \big) dx   \text{ where } \\
\rho(K,x) =& \big( 2\pi K \big)^{-1/2} \exp\bigg( -x^2 / (2K) \bigg), \label{eq: rho K x definition}
\end{align}
and notice that $\rho(K,x)$ is the density of a centered Gaussian distribution with variance $K$. We next define the Jacobian to be such that
\begin{align} \label{eq: J v definition}
 J_v(v_e , v_i ,  K_e, K_i) = \left( \begin{array}{c c}
 \partial_{v_e} F_e(v_e , v_i ,  K_e, K_i) &  \partial_{v_i} F_e(v_e , v_i ,  K_e, K_i) \\
  \partial_{v_e} F_i(v_e , v_i ,  K_e, K_i) &  \partial_{v_i} F_i(v_e , v_i ,  K_e, K_i) 
 \end{array}\right)
\end{align}
Define the balanced manifold to be
\begin{align}
\mathcal{U} &\subseteq \mathbb{R} \times \mathbb{R} \times  \mathbb{R}^+  \times \mathbb{R}^+  \\
\mathcal{U} &= \bigg\lbrace (v_e, v_i ,  K_e , K_i)  \; : F_e (v_e, v_i ,  K_e , K_i)  =0 \text{ and } F_i(v_e, v_i ,  K_e , K_i) =0 \text{ and }\nonumber \\ & \text{ The real parts of the eigenvalues of }J_v(v_e, v_i, K_e , K_i)  \text{ are strictly negative.}    \bigg\rbrace .
\end{align}
Our next assumption is that the initial conditions concentrate on the balanced manifold as $n\to\infty$. In more detail,
\begin{hypothesis} \label{Hypothesis Initial Conditions on the Balanced Manifold}
Write the empirical measure of initial conditions
\begin{align}
\hat{\mu}^n_e = n^{-1}\sum_{j\in  I_n} \delta_{u^j_{e,0}} \in \mathcal{P}(\mathbb{R})\; \; \text{ and } \; \; \hat{\mu}^n_i = n^{-1}\sum_{j\in  I_n} \delta_{u^j_{i,0}} \in \mathcal{P}(\mathbb{R}).
\end{align}
It is assumed that the initial conditions are such that there exist Gaussian measures $\mu_{e,ini}, \mu_{i,ini} \in \mathcal{P}\big( \mathbb{R} \big)$ such that
\begin{align}
\lim_{n\to\infty} \hat{\mu}^n_e &= \mu_{e,ini} \\
\lim_{n\to\infty} \hat{\mu}^n_i &= \mu_{i,ini}  .
\end{align}
Write $m_e , m_i \in \mathbb{R}$ to be the means, and $V_e, V_i \in \mathbb{R}^+$ to be the respective covariances. It is assumed that the initial conditions enforce balanced excitation and inhibition, i.e. $(m_e, m_i, V_e, V_i) \in \mathcal{U}$.
\end{hypothesis}

\section{Results}

We define the (empirical) average levels of excitation and inhibition to be
\begin{align}
\hat{v}^n_e(t) =& n^{-1}\sum_{j\in I_n} u_e(t) \\
\hat{v}^n_i(t) =& n^{-1}\sum_{j\in I_n} u_i(t) .
\end{align}
We similarly define the empirical covariances to be
\begin{align}
\hat{K}^n_e(t) =& n^{-1}\sum_{j\in I_n} \big(u^j_e(t) - \hat{v}^n_e(t) \big)^2 \\
\hat{K}^n_i(t) =& n^{-1}\sum_{j\in I_n} \big(u^j_i(t) - \hat{v}^n_i(t) \big)^2 .
\end{align}
We are going to define $\eta \in (0,\infty]$ to be the time that (in the hydrodynamic limit) the system first leaves the balanced manifold. Our main result (in the following theorem) is that for all times less than or equal to $\eta$, the hydrodynamic limit exists and can be precisely specified.
\begin{theorem} \label{Big Theorem}
With unit probability, for any $T < \eta$, %and any $r \in (0,1/2)$,
\begin{align}
\lim_{n\to\infty} \sup_{t\leq T}\bigg\lbrace   \big| \hat{K}^n_e(t) - K_e(t) \big| + \big| \hat{K}^n_i(t) - K_i(t) \big| + \big| \bar{v}_e(t) - \hat{v}^n_e(t) \big| + \big| \bar{v}_i(t) - \hat{v}^n_i(t) \big|   \bigg\rbrace  = 0.
\end{align}
where $(\bar{v}_e,\bar{v}_i,K_e,K_i)$ and $\eta$ are defined (below) in Lemma \ref{Hypothesis Manifold Evolution}. %Furthermore, there exists a constant $C_r > 0$ such that
%\begin{multline}
%\lsup{n} n^{1-r} \log \mathbb{P}\bigg( \bigg\lbrace \big| K^n_e(t) - K_e(t) \big| + \big| K^n_i(t) - K_i(t) \big| + \big| v_e(t) - v^n_e(t) \big| + \big| v_i(t) - v^n_i(t) \big|  \bigg\rbrace \geq C_r n^{-r} \bigg) < 0.
%\end{multline}
\end{theorem}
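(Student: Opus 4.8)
The plan is to derive a closed system of SDEs (driven by compensated Poisson noise) for the four empirical observables $(\hat v^n_e, \hat v^n_i, \hat K^n_e, \hat K^n_i)$, show that the $O(n^{-1/2})$-scaled martingale fluctuations are controlled by the strong attraction of the balanced manifold, and then pass to the limit via a Katzenberger-type averaging argument over the fast directions. Concretely, from the synaptic SDE I first write, for each neuron, $u^j_{\alpha,t} = u^j_{\alpha,0} + \int_0^t -\tau_\alpha^{-1} u^j_{\alpha,s}\,ds + C_{\alpha e}n^{-1/2}\sum_k Z^{jk}_{\alpha e}(t) - C_{\alpha i}n^{-1/2}\sum_k Z^{jk}_{\alpha i}(t)$, and split each $Z^{jk}_{\alpha\beta}$ into its compensator $\int_0^t f_{\alpha\beta}(u^k_{\beta,s})\,ds$ plus a martingale. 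Averaging over $j\in I_n$ then gives
\begin{align}
\hat v^n_{\alpha,t} = \hat v^n_{\alpha,0} + \int_0^t \Big( -\tau_\alpha^{-1}\hat v^n_{\alpha,s} + C_{\alpha e}\, n^{-1/2}\!\!\sum_k f_{\alpha e}(u^k_{e,s}) - C_{\alpha i}\, n^{-1/2}\!\!\sum_k f_{\alpha i}(u^k_{i,s}) \Big) ds + M^{n,\alpha}_t,
\end{align}
where $M^{n,\alpha}_t$ is a martingale of quadratic variation $O(n^{-1})\cdot O(\sum_k 1) = O(1)$, hence of size $O(n^{-1/2})$ by Doob and the boundedness of $f$ in Hypothesis 1. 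The drift contains the dangerous $n^{-1/2}\sum_k f_{\alpha\beta}(\cdot)$ term, which is $O(n^{1/2})$ and would blow up — \emph{unless} the excitatory and inhibitory pieces nearly cancel, which is exactly the defining property of $\mathcal U$.

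To make the cancellation rigorous I would introduce the rescaled ``imbalance'' variables $w^n_{\alpha,t} := n^{1/2}\big( C_{\alpha e}n^{-1/2}\sum_k f_{\alpha e}(u^k_{e,s})\,\cdots\big)$ — more precisely track $n^{1/2}(\hat v^n_\alpha - v_\alpha)$ along the putative solution. The key structural input is that, to leading order, the empirical measures $\hat\mu^n_{\alpha,t}$ of $\{u^j_{\alpha,t}\}_j$ remain approximately Gaussian with mean $\hat v^n_{\alpha,t}$ and variance $\hat K^n_{\alpha,t}$: this is because each $u^j_{\alpha,t}$ is, conditionally, a sum of $2n$ nearly-independent small increments plus a common mean-field drift, so a Lindeberg/CLT argument for inhomogeneous Poisson processes (as advertised in the abstract) forces asymptotic Gaussianity of the fluctuation field. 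Granting this, $n^{-1/2}\sum_k f_{\alpha\beta}(u^k_{\beta,s}) \approx n^{1/2}\int_{\mathbb R} \rho(\hat K^n_{\beta,s},x) f_{\alpha\beta}(\hat v^n_{\beta,s}+x)\,dx + (\text{fluctuation})$, and the drift of $\hat v^n_\alpha$ becomes $-\tau_\alpha^{-1}\hat v^n_\alpha + n^{1/2} F_\alpha(\hat v^n_e,\hat v^n_i,\hat K^n_e,\hat K^n_i) + O(1)$. Since the strictly-negative-eigenvalue condition on $J_v$ makes $F=(F_e,F_i)=0$ a stable manifold, the fast $O(n^{1/2})$ dynamics relaxes $(\hat v^n_e,\hat v^n_i)$ onto $\mathcal U$ on a timescale $O(n^{-1/2})$, and the slow drift is the projection of $-\tau_\alpha^{-1}v_\alpha$ onto the tangent space — this is what Lemma \ref{Hypothesis Manifold Evolution} should encode, together with the definition of $\eta$ as the first exit time from $\mathcal U$.

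In parallel I would write the SDE for $\hat K^n_\alpha$: applying Itô's formula to $(u^j_{\alpha,t}-\hat v^n_{\alpha,t})^2$ and averaging, the $O(n^{1/2})$ pieces again cancel on $\mathcal U$, the quadratic-variation contribution of the Poisson martingales supplies a deterministic diffusive term $C_{\alpha e}^2\, n^{-1}\sum_k f_{\alpha e}(u^k_{e,t}) + C_{\alpha i}^2\, n^{-1}\sum_k f_{\alpha i}(u^k_{i,t}) \to \int\rho(K_e,x)(C_{\alpha e}^2 f_{\alpha e}(v_e+x)+\cdots)\,dx$, and the linear leakage gives $-2\tau_\alpha^{-1}\hat K^n_\alpha$; this is the Ornstein–Uhlenbeck-type variance equation. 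Assembling the four limiting ODEs and matching with Lemma \ref{Hypothesis Manifold Evolution} yields the candidate limit $(\bar v_e,\bar v_i,K_e,K_i)$. The proof then closes by a Gronwall estimate on $t\le T<\eta$: one bounds $\sup_{t\le T}(|\hat v^n_\alpha - \bar v_\alpha| + |\hat K^n_\alpha - K_\alpha|)$ by (a) the martingale terms, which vanish as $n\to\infty$ by Doob's $L^2$ inequality and Hypothesis 1, (b) the Gaussian-approximation error in replacing $n^{-1}\sum_k f_{\alpha\beta}(u^k_\beta)$ by the $\rho$-integral, and (c) the initial-condition error, which vanishes by Hypothesis \ref{Hypothesis Initial Conditions on the Balanced Manifold}; the strong attraction of $\mathcal U$ (negative eigenvalues of $J_v$) is what prevents the Gronwall constant from being amplified by the $n^{1/2}$ factor.

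The main obstacle is precisely step (b): controlling, uniformly in $t\le T$ and with the right rate in $n$, the deviation of the empirical law $\hat\mu^n_{\alpha,t}$ from a Gaussian, since the observable we must estimate — $n^{1/2}$ times a difference of $f_{\alpha\beta}$-averages — is scaled up by $n^{1/2}$ and therefore a naive $O(n^{-1/2})$ law-of-large-numbers bound is exactly critical, not subcritical. This is where the balanced-manifold stability must be exploited quantitatively (the contraction it induces must beat the $n^{1/2}$ amplification), and where a careful propagation-of-chaos / higher-moment estimate on the correlations between distinct neurons $u^j$, $u^k$ (which are $O(n^{-1})$-correlated but summed $n^2$ times) is needed; handling these $O(n^2)$ synaptic noise sources, rather than the usual $O(n)$, is the genuinely new technical ingredient.
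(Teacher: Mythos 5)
Your heuristic is in the right spirit — split the dynamics into a system-wide mean and a fluctuation field, invoke the strong attraction of $\mathcal{U}$ to damp the $O(n^{1/2})$ drift, and argue the fluctuation field is asymptotically Gaussian — but the step you label (b) is a genuine gap, not an incidental one, and your proposal does not supply the tool that closes it. The paper does not attempt to prove a CLT/Lindeberg approximation directly at the level of the empirical observables $(\hat v^n,\hat K^n)$. Instead it first fixes the exact decomposition $u^j_{\alpha,t} = v^n_{\alpha}(t) + x^j_{\alpha,t}$, where $v^n_\alpha$ absorbs \emph{all} of the $O(n^{1/2})$ drift and each $x^j_\alpha$ satisfies a linear equation driven by a compensated Poisson martingale $W^j_\alpha(t)$. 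It then replaces the $y^j_{\alpha\beta}$ processes by Brownian motions via the Koml\'os--Major--Tusn\'ady strong coupling, which gives a pathwise $O(\epsilon)$ bound on $\sup_{t}|x^j_\alpha - \tilde x^j_\alpha|$ with probability exponentially close to $1$; combined with a compactness/covering argument in the time-change variable $\Lambda^n_{\alpha\beta}$ and Sanov's theorem, this yields Lemma~\ref{Lemma Convergence to Gaussian}: an exponentially small tail for $d_W(\hat\mu^n_\alpha,\nu_\alpha)\geq\epsilon$. That exponential tail, fed into Lemma~\ref{Lemma Balanced Manifold} and the Gronwall estimate of Lemma~\ref{Lemma bound v norm difference}, is precisely what lets the contraction $-\zeta_t/2$ beat the $n^{1/2}$ amplification: one obtains $|q^n_\alpha| \lesssim n^{-1/2} + d_W(\hat\mu^n_\alpha,\mu_\alpha)$, not merely an $O(1)$ bound. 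Your Lindeberg sketch, by contrast, would at best give a polynomial tail for the Gaussian-approximation error, which (as you note) sits exactly at criticality against the $n^{1/2}$ factor and does not close the argument.

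A second, smaller, error: you describe the limiting mean dynamics as ``the projection of $-\tau_\alpha^{-1}v_\alpha$ onto the tangent space.'' That is not what Lemma~\ref{Hypothesis Manifold Evolution} says. The limiting $\bar v$ equation \eqref{eq: v ode} is $\frac{d\vec v}{dt}=-J_v^{-1}J_K\,\frac{d\vec K}{dt}$, i.e.\ it is entirely determined by demanding $F_e = F_i = 0$ is preserved as $\vec K$ evolves; the slow leakage drift $-\tau_\alpha^{-1}\bar v_\alpha$ is cancelled by the $O(n^{-1/2})$ displacement of $v^n_\alpha$ off the manifold, and does not survive as a projected slow drift. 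Finally, the paper's proof of Theorem~\ref{Big Theorem} itself also requires Lemma~\ref{Lemma bound v n e t v n i t} and Corollary~\ref{Corollary Empirical Mean} (which identify $\hat v^n_\alpha$ with $v^n_\alpha$ up to the vanishing average fluctuation $n^{-1}\sum_j x^j_{\alpha,t}$) and a Borel--Cantelli step from the exponential estimate of Lemma~\ref{Lemma Convergence to Gaussian}; your Gronwall-plus-Doob sketch does not produce the almost-sure (``with unit probability'') statement that the theorem asserts.
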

%Define the functions
%\begin{align}
%F_e:& \mathbb{R} \times \mathbb{R}^+ \times \mathbb{R}^+ \mapsto \mathbb{R} \\
%F_e(v , K_e , K_i) =& \int_{\mathbb{R}}\int_{\mathbb{R}}\rho(K_e,x)\rho(K_i,y)f_e\big( v + x - y \big)  dx dy \text{ where } \\
%\rho(K,x) =& \big( 2\pi K \big)^{-1/2} \exp\bigg( -x^2 / (2K) \bigg)
%\end{align}
%and similarly define
%\begin{align}
%F_i:& \mathbb{R} \times \mathbb{R}^+ \times \mathbb{R}^+ \mapsto \mathbb{R} \\
%F_i(v , K_e , K_i) =& \int_{\mathbb{R}}\int_{\mathbb{R}}\rho(K_e,x)\rho(K_i,y)f_i\big( v + x - y \big)  dx dy.
%\end{align}
%We let $\Phi_{t}: \mathbb{M}_q \mapsto \mathbb{M}_q $ be the flow operator that one might expect for the limiting dynamics.

 %That is, we define 
%\begin{align}
%\Phi_t(v, k) := \big( v(t), K(t) \big)
%\end{align}
In the following lemma, we outline the equations that must be satisfied by the mean and variance in the large size limit.
\begin{lemma} \label{Hypothesis Manifold Evolution}
Let $\eta \geq 0$ be the largest number (possibly infinite) such that (i) there exists a unique $(\bar{v}_e , \bar{v}_i,  K_e , K_i ) \in  \mathcal{C}^1\big( [0,\eta], \mathbb{R}^2 \times \mathbb{R}_{\geq 0}^2\big)$ such that for all $t < \eta$
\begin{align}
\frac{dK_e}{dt} =& -2 K_e / \tau_e +  \Sigma_e(\bar{v}_e(t), \bar{v}_i(t) ,K_e(t) , K_i(t) ) \label{eq: invariant covariance 1} \\
\frac{dK_i}{dt} =& - 2 K_i /  \tau_i  +  \Sigma_i(\bar{v}_e(t), \bar{v}_i(t),K_e(t) , K_i(t) ) \label{eq: invariant covariance 2} \\
 K_e (0) =& k_e  \\
  K_i (0) =& k_i  \\
F_e\big(\bar{v}_e(t), \bar{v}_i(t), K_e(t) , K_i(t) \big) =& 0 \label{eq: Fe zero equation} \\
F_i\big(\bar{v}_e(t), \bar{v}_i(t), K_e(t) , K_i(t) \big) =& 0  \label{eq: Fi zero equation}. \\
\bar{v}_e(0) =& m_e \\
\bar{v}_i(0) =& m_i.
\end{align}
where
\begin{align}
\Sigma_{e}(\bar{v}_e,\bar{v}_i, K_e, K_i) &=   \int_{\mathbb{R}}\big(  \rho(K_{e},x) C^2_{ee} f_{ee} ( \bar{v}_e + x) + \rho(K_{i},x)  C^2_{ei} f_{ei}(\bar{v}_i + x)  \big) dx \\
\Sigma_{i}(\bar{v}_e, \bar{v}_i ,K_e,K_i) &=   \int_{\mathbb{R}}\big(  \rho(K_{e},x) C^2_{ie} f_{ie} ( \bar{v}_e + x) + \rho(K_{i},x)  C^2_{ii} f_{ii}( \bar{v}_i + x)  \big) dx \text{ and }
\end{align}
$\rho$ is defined in \eqref{eq: rho K x definition} and for all $t < \eta$,
 \begin{align}
  (\bar{v}_e(t), \bar{v}_i(t) ,K_e(t) , K_i(t) ) \in \mathcal{U}.
 \end{align}
\end{lemma}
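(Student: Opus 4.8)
The statement is the well-posedness of the differential--algebraic system \eqref{eq: invariant covariance 1}--\eqref{eq: Fi zero equation} together with the construction of the exit time $\eta$; the plan is to view it as an index-one DAE, eliminate $(\bar v_e,\bar v_i)$ via the implicit function theorem, and then invoke Picard--Lindel\"of plus a continuation argument. \textbf{Regularity.} First I would record that for $K>0$ the Gaussian kernel $\rho(K,\cdot)$ of \eqref{eq: rho K x definition} is smooth and that its $K$- and $x$-derivatives are dominated, locally uniformly on $(0,\infty)$, by integrable functions of $x$; together with the bounds $0<f_{\alpha\beta}\le C_f$ and the global Lipschitz property of the propensity functions, dominated convergence justifies differentiating under the integral sign. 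Hence $F_e,F_i,\Sigma_e,\Sigma_i$ are $C^1$ (indeed $C^\infty$) on $\mathbb R^2\times(0,\infty)^2$, and the map $J_v$ of \eqref{eq: J v definition} is continuous there.

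\textbf{Reduction to an ODE.} By Hypothesis~\ref{Hypothesis Initial Conditions on the Balanced Manifold} the point $(m_e,m_i,V_e,V_i)$ lies in $\mathcal U$, so $J_v(m_e,m_i,V_e,V_i)$ has all eigenvalues with strictly negative real part, hence is invertible. The implicit function theorem then produces a neighbourhood of $(V_e,V_i)$ in $(0,\infty)^2$ and $C^1$ maps $(K_e,K_i)\mapsto\big(\mathcal V_e(K_e,K_i),\mathcal V_i(K_e,K_i)\big)$ with $\mathcal V_e(V_e,V_i)=m_e$ and $\mathcal V_i(V_e,V_i)=m_i$, such that near this point the constraints \eqref{eq: Fe zero equation}--\eqref{eq: Fi zero equation} are equivalent to $\bar v_e=\mathcal V_e(K_e,K_i)$, $\bar v_i=\mathcal V_i(K_e,K_i)$. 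Substituting into \eqref{eq: invariant covariance 1}--\eqref{eq: invariant covariance 2} yields a closed, locally Lipschitz system
\begin{align}
\frac{dK_e}{dt}=G_e(K_e,K_i),\qquad \frac{dK_i}{dt}=G_i(K_e,K_i),\qquad (K_e,K_i)(0)=(k_e,k_i)=(V_e,V_i).
\end{align}
Equivalently, differentiating \eqref{eq: Fe zero equation}--\eqref{eq: Fi zero equation} along a solution gives $J_v(\dot{\bar v}_e,\dot{\bar v}_i)^{\top}=-J_K(\dot K_e,\dot K_i)^{\top}$, where $J_K$ is the Jacobian of $(F_e,F_i)$ with respect to $(K_e,K_i)$; invertibility of $J_v$ is exactly what makes the DAE index one.

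\textbf{Maximal solution and definition of $\eta$.} Picard--Lindel\"of gives a unique solution on a maximal interval. Since $f_{\alpha\beta}>0$ we have $\Sigma_e,\Sigma_i>0$, so $K_e,K_i$ cannot decrease to $0$; since $f_{\alpha\beta}\le C_f$ the quantities $\Sigma_e,\Sigma_i$ are bounded, so on any finite interval $K_e,K_i$ remain in a fixed compact subset of $(0,\infty)$ and the solution cannot blow up. Recovering $\bar v_e(t)=\mathcal V_e(K_e(t),K_i(t))$ and $\bar v_i(t)=\mathcal V_i(K_e(t),K_i(t))$ produces a $C^1$ path solving all of \eqref{eq: invariant covariance 1}--\eqref{eq: Fi zero equation}. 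Membership $(\bar v_e,\bar v_i,K_e,K_i)\in\mathcal U$ is an \emph{open} condition (strict negativity of the real parts of the eigenvalues of $J_v$), so it holds on a relatively open interval containing $0$; let $\eta\in(0,\infty]$ be its supremum. For $t<\eta$, reapplying the implicit function theorem at $(\bar v_e(t),\ldots)\in\mathcal U$ shows that near $t$ the solution set of $F_e=F_i=0$ is exactly the graph of $(\mathcal V_e,\mathcal V_i)$, so any $C^1$ path is locally forced onto this graph and its $K$-components are pinned down by the reduced ODE; a standard connectedness argument then propagates uniqueness over all of $[0,\eta)$. Thus $\eta$ is precisely the largest number with property (i), and, if finite, the path reaches $\partial\mathcal U$ there --- the system ``first leaves the balanced manifold''.

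\textbf{Main obstacle.} The crux, and the only place the precise definition of $\mathcal U$ is used, is the reduction step: because \eqref{eq: invariant covariance 1}--\eqref{eq: Fi zero equation} is a DAE rather than an ODE, one must know the algebraic constraint can be solved smoothly for $(\bar v_e,\bar v_i)$, which is exactly the non-degeneracy of $J_v$ built into the eigenvalue condition of $\mathcal U$. The remaining work is bookkeeping: maintaining a uniform positive lower bound on $K_e,K_i$ so the Gaussian densities, and hence $G_e,G_i$, stay $C^1$; and verifying that ``largest number such that'' is realised as a genuine supremum, i.e.\ that continuation inside $\mathcal U$ can fail only through the eigenvalue condition degenerating, never through finite-time blow-up of $(K_e,K_i)$ or $(\bar v_e,\bar v_i)$.
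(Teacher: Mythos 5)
Your proposal is correct and follows essentially the same strategy as the paper: treat \eqref{eq: invariant covariance 1}--\eqref{eq: Fi zero equation} as an index-one DAE, use the nondegeneracy of $J_v$ on $\mathcal U$ together with the implicit function theorem to close it into an ODE with locally Lipschitz right-hand side, invoke Picard--Lindel\"of, and define $\eta$ as the time the nondegeneracy first fails. The only cosmetic difference is that the paper keeps all four variables and writes the companion ODE $\dot{\vec v}=-J_v^{-1}J_K\dot{\vec K}$ rather than substituting $\bar v_\alpha=\mathcal V_\alpha(K_e,K_i)$ into the $K$-equations as you do, and the paper phrases the exit time directly in terms of $\det J_v\to 0$ rather than exit from the open set $\mathcal U$; your handling of the blow-up and positivity bookkeeping is, if anything, slightly more explicit than the paper's.
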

For $t < \eta$, let $\zeta_t < 0$ be the largest real component of the eigenvalues of $J_v\big(\bar{v}_e(t), \bar{v}_i(t), K_e(t) , K_i(t) \big)$ (defined in \eqref{eq: J v definition}).

\section{Proof Overview}

%For $\alpha \in \lbrace e,i \rbrace$, we make the definition
%W^j_{\alpha i}(t) =&  C_i n^{-1/2} \sum_{k\in I_n}Z^{jk}_{   \alpha i }(t) -  n^{-1/2} C_i \sum_{k\in I_n}  \int_0^t f_i\big(u^{k,q}_{i,s}\big) ds
%and define
%\begin{align}
%x^j_{\alpha,t} = u^j_{\alpha,0} - \hat{v}^n_{\alpha}(0) - \tau_{\alpha}^{-1} \int_0^t x^j_{\alpha,s} ds + W^j_{\alpha}(t).
%\end{align}
%The rest of this paper is focussed on proving Theorem \ref{Big Theorem}. 
We fix some $T < \eta$ for the rest of this paper. We are going to see that the probability distribution of the empirical measure $\hat{\hat{\mu}}^n_{\alpha} := n^{-1}\sum_{j\in I_n} \delta_{u^j_{\alpha,[0,T]}}$ ( for $\alpha \in \lbrace e,i \rbrace$) becomes Gaussian in the large $n$ limit. To this end, to facilitate the proofs, it is convenient to split the dynamics into (i) a `system-wide mean activity' and (ii) `local fluctuations about the mean'; a decomposition which is straightforward thanks to the fact that the synaptic dynamics is linear. In more detail, we make the decomposition, for $\alpha \in \lbrace e,i \rbrace$, 
\begin{equation}
u^k_{\alpha,t} = v^n_{\alpha}(t) + x^k_{\alpha,t}
\end{equation}
where  $v^n_{\alpha}(t)$ is such that
\begin{align}
v^n_{\alpha}(t) &= v^n_{\alpha}(0) +  \int_0^t \bigg( -  \tau_{\alpha}^{-1} v^n_{\alpha}(s)  + C_{\alpha e} n^{-1/2} \sum_{k\in I_n}f_{\alpha e}\big( u^k_{e,s} \big) - C_{\alpha i} n^{-1/2} \sum_{k\in I_n}f_{  \alpha i}\big(u^k_{i,s} \big) \bigg) ds .
\end{align}
and
\begin{align}
x^j_{\alpha,t} &= u^j_{\alpha,0} - v^n_{\alpha}(0) - \tau_{\alpha}^{-1} \int_0^t x^j_{\alpha,s} ds + W^j_{\alpha}(t) \label{eq: x j dynamics} \\
 W^j_{\alpha }(t) &=   n^{-1/2} \sum_{k\in I_n}\bigg( C_{\alpha e } Z^{jk}_{\alpha e}(t) - C_{\alpha i} Z^{jk}_{\alpha i}(t)   -  \int_0^t \big\lbrace  C_{\alpha e}  f_{\alpha e} \big(u^{k}_{e,s}\big) - C_{ \alpha i}  f_{\alpha i} \big(u^{k}_{i,s}\big) \big\rbrace  ds  \bigg).
 \end{align}
Furthermore, thanks to the time-rescaled representation of Poisson Processes \cite{Anderson2015}, we can make the representation
\begin{multline}
W^j_{\alpha}(t) =  C_{\alpha e} n^{-1/2} Y^j_{\alpha e}\bigg( \sum_{k\in I_n} \int_0^t  f_{\alpha e} \big(u^{k}_{e,s}\big) ds \bigg)
- C_{\alpha i} n^{-1/2} Y^j_{\alpha i}\bigg( \sum_{k\in I_n} \int_0^t  f_{\alpha i} \big(u^{k}_{i,s}\big) ds \bigg) \\
 - n^{-1/2}\sum_{k\in I_n}  \int_0^t \big\lbrace  C_{\alpha e}  f_{\alpha e} \big(u^{k}_{e,s}\big) - C_{\alpha i}  f_{\alpha i} \big(u^{k}_{i,s}\big) \big\rbrace  ds ,
\end{multline}
where $\lbrace Y^j_{\alpha\beta}(t) \rbrace_{j\in I_n \fatsemi \alpha,\beta \in \lbrace e,i \rbrace, t\geq 0}$ are independent unit-intensity counting processes. Write
\begin{align}
w^j_{\alpha\beta}(t) = n^{-1/2} Y^j_{\alpha\beta}(nt) - \sqrt{n} t.
\end{align}
Writing
\begin{align}
 Q^n_{\alpha\beta}(t) = n^{-1} \sum_{k\in I_n} f_{\alpha\beta}(u^k_{\beta,t})  \text{ and }
 \Lambda^n_{\alpha\beta}(t) = \int_0^t Q^n_{\alpha\beta}(s) ds,
 \end{align}
 we thus have that
 \begin{equation}
 W^j_{\alpha}(t) =   C_{\alpha e} w^j_{\alpha e}\big(  \Lambda^n_{\alpha e}(t) \big) - C_{\alpha i} w^j_{\alpha i}\big(  \Lambda^n_{\alpha i}(t) \big). 
 \end{equation} 
 For $\alpha \in \lbrace e,i \rbrace$, let $\nu^n_{\alpha} \in \mathcal{P}\big( \mathcal{C}([0,T], \mathbb{R})  \big)$ be the law of a centered Gaussian variable $x_{\alpha} \in \mathcal{C}([0,T], \mathbb{R}) $, defined as follows. Let $\big\lbrace W_{\alpha\beta}(t) \big\rbrace_{\alpha,\beta\in \lbrace e,i \rbrace}$ be independent Brownian Motions \newline(independent of $\lbrace Q^n_{\alpha\beta}(s) \rbrace_{\alpha,\beta \in \lbrace e,i \rbrace, s\leq T}$). Define $\nu^n_{\alpha}$ to be the law of $\big( x_{\alpha}(t) \big)_{t\leq T}$, where
\begin{equation}
x_{\alpha}(t) = x_{\alpha}(0) - \tau_{\alpha}^{-1} \int_0^t x_{\alpha}(s) ds + C_{\alpha e} W_{\alpha e}\bigg( \int_0^t Q^n_{\alpha e}(s) ds \bigg) - C_{\alpha i} W_{\alpha i }\bigg( \int_0^t Q^n_{\alpha i}(s) ds \bigg)
\end{equation}
Standard Gaussian arithmetic implies that, conditionally on $\lbrace Q^n_{\alpha\beta}(s) \rbrace_{s\leq T}$, $\nu^n_{\alpha}$ is a centered Gaussian random variable \cite[Section 5.6]{Karatzas1991}. Let us underscore the fact that $\nu^n_{\alpha}$ is itself a random variable because its covariance function is random.

Write
\begin{align}
L^n_t = \sup_{r,s \leq t} \big| \mathbb{E}^{\nu^n_e}[ x_r x_s ] - \mathbb{E}^{\mu_e}[x_r x_s] \big| + \sup_{r,s \leq t} \big| \mathbb{E}^{\nu^n_i}[ x_r x_s ] - \mathbb{E}^{\mu_i}[x_r x_s] \big|.
\end{align}

\begin{lemma} \label{Lemma bound v n e t v n i t}
There exists a constant $\tilde{C}_T > 0$ such that for all $n\geq 1$,
\begin{align}
\sup_{t\leq T} \big\lbrace | v^n_{e}(t) -  \bar{v}_e(t) |
+ | v^n_{i}(t) -  \bar{v}_{i}(t) | + L^n_t \big\rbrace \leq \tilde{C}_T n^{-1/2} +  \tilde{C}_T d_W\big(\hat{\mu}^n_{e,t} , \nu^n_{e,t} \big) + \tilde{C}_T d_W\big(\hat{\mu}^n_{i,t} , \nu^n_{i,t} \big)
\end{align}
\end{lemma}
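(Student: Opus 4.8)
The plan is to set up a coupled Gronwall-type inequality for the three quantities $|v^n_e(t)-\bar v_e(t)|$, $|v^n_i(t)-\bar v_i(t)|$ and $L^n_t$, closing the loop via the defining ODE/algebraic system for $(\bar v_e,\bar v_i,K_e,K_i)$ in Lemma \ref{Hypothesis Manifold Evolution} and the fact that, on $[0,T]$ with $T<\eta$, the Jacobian $J_v$ has eigenvalues with real part bounded above by $\zeta_T<0$. First I would write the integral equation satisfied by $v^n_\alpha(t)$ and compare it term-by-term with the algebraic constraints $F_e=F_i=0$: rewriting $n^{-1/2}\sum_k f_{\alpha\beta}(u^k_{\beta,s}) = n^{1/2}\big(\mathbb{E}^{\hat\mu^n_{\beta,s}}[f_{\alpha\beta}] \big)$, and then using that $F_\alpha$ involves $\mathbb{E}^{\nu^n_{\beta,s}}[f_{\alpha\beta}(\bar v_\beta(s)+\cdot)]$ where $\nu^n_{\beta,s}$ is a centered Gaussian with variance $\int_0^s Q^n$. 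The key algebraic point is that the $O(n^{1/2})$ prefactor is multiplied by a quantity that is $O(n^{-1/2})$ precisely because of the balance $F_\alpha=0$; so the leading-order "force" on $v^n_\alpha - \bar v_\alpha$ is $n^{1/2}\big(F_\alpha(v^n_e,v^n_i,\hat K^n_e,\hat K^n_i)-F_\alpha(\bar v_e,\bar v_i, K_e,K_i)\big)$ plus lower-order fluctuation terms, and a first-order Taylor expansion of $F_\alpha$ around the balanced manifold turns this into $n^{1/2} J_v (v^n-\bar v) + n^{1/2}(\text{error from }\hat K - K) + n^{1/2}\cdot O(|v^n-\bar v|^2)$.

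Next I would control the discrepancy between the empirical law $\hat\mu^n_{\alpha,t}$ (the law of the $x^j_{\alpha,\cdot}$) and the conditional Gaussian $\nu^n_{\alpha,t}$: this is where $d_W(\hat\mu^n_{\alpha,t},\nu^n_{\alpha,t})$ enters the right-hand side. Since $\mathbb{E}^{\hat\mu^n_{\beta,s}}[f_{\alpha\beta}]$ and the covariance functionals $\mathbb{E}^{\hat\mu^n_\alpha}[x_rx_s]$ are Lipschitz functionals of the measure (using the global Lipschitz bound on $f_{\alpha\beta}$ from the Hypothesis and the fact that the relevant test functions are Lipschitz on the relevant range), I can replace $\hat\mu^n$ by $\nu^n$ at the cost of a $d_W$ term, and then replace $\nu^n$ by the limiting $\mu_\alpha$ at the cost of an $L^n_s$ term. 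For the $L^n_t$ piece itself, I would write the covariance of $\nu^n_\alpha$ explicitly as the Ornstein–Uhlenbeck covariance driven by time-change $\int_0^\cdot Q^n_{\alpha\beta}$, compare with the limiting covariance (governed by \eqref{eq: invariant covariance 1}–\eqref{eq: invariant covariance 2}), and observe that $|Q^n_{\alpha\beta}(s) - \Sigma\text{-type integrand}|$ is again Lipschitz-controlled by $|v^n-\bar v| + L^n_s + d_W$ terms, so $L^n_t$ satisfies the same self-referential inequality. Collecting everything yields, for $\Phi^n(t) := |v^n_e(t)-\bar v_e(t)| + |v^n_i(t)-\bar v_i(t)| + L^n_t$, a bound of the shape $\Phi^n(t) \le C n^{-1/2} + C\sup_{s\le t} d_W(\hat\mu^n_{e,s},\nu^n_{e,s}) + C\sup_{s\le t} d_W(\hat\mu^n_{i,s},\nu^n_{i,s}) + \int_0^t \big( -c\,\Phi^n(s) + C\Phi^n(s)\big)\,ds$ — but crucially the bad $n^{1/2}$ factor in front of the Taylor expansion is absorbed because the leading $n^{1/2} J_v(v^n-\bar v)$ term is a \emph{contraction} (negative real parts of eigenvalues, uniformly $\le \zeta_T<0$ on $[0,T]$), so exponentiating the linearized flow $e^{n^{1/2}\int J_v}$ damps the $n^{1/2}$-scaled forcing down to $O(n^{-1/2})$ in the stable directions; the quadratic remainder $n^{1/2}O(\Phi^n{}^2)$ is controlled by a bootstrap/continuity argument as long as $\Phi^n$ stays small, which it does on $[0,T]$ for $n$ large.

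The main obstacle I anticipate is handling the $n^{1/2}$ prefactor rigorously: a naive Gronwall estimate would blow up, and one must genuinely use the stable linearization to see that the fast $O(n^{1/2})$ dynamics contracts toward the slow manifold rather than amplifying errors — this is the "strongly attracting manifold" mechanism à la Katzenberger \cite{Katzenberger1991}, Berglund–Gentz \cite{Berglund2006}. Concretely I would introduce the variable projected onto the stable eigendirections of $J_v(\bar v_e,\bar v_i,K_e,K_i)$, write a variation-of-constants formula with the (time-ordered) exponential $\mathcal{T}\exp\big(n^{1/2}\int_s^t J_v\,dr\big)$, bound its operator norm by $C e^{-c n^{1/2}(t-s)}$ using that $\operatorname{Re}\,\mathrm{spec}\,J_v \le \zeta_T$ uniformly on $[0,T]$ (with a little care about non-normality, absorbed into the constant $C$ over the compact time interval), and then the convolution $\int_0^t C e^{-c n^{1/2}(t-s)} (\text{forcing of size } n^{1/2}\cdot\text{err}(s))\,ds$ produces $O(\text{err})$ with no net power of $n$, where $\text{err}(s)$ collects the genuinely small contributions ($n^{-1/2}$ and the $d_W$-terms and the quadratic-in-$\Phi^n$ term). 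A secondary technical point is that $\hat K^n_\alpha$ appears inside $F_\alpha$ and one must check $|\hat K^n_\alpha(s) - K_\alpha(s)|$ is itself $\le \Phi^n(s) + d_W$-terms plus $O(n^{-1/2})$, which follows since $\hat K^n_\alpha(s)$ is (up to $v^n$-centering corrections) $\mathbb{E}^{\hat\mu^n_{\alpha,s}}[x_s^2]$ and hence controlled by $L^n_s$ and the Wasserstein distances; this keeps the system closed.
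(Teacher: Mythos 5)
Your proposal reaches the same conclusion through the same essential mechanism---the $n^{1/2}$ forcing is damped because the balanced manifold is strongly attracting, and the remaining error is controlled by Wasserstein distances plus an Ornstein--Uhlenbeck covariance comparison---but the technical device for closing the estimate is genuinely different from the paper's. The paper splits the work across Lemmas \ref{Lemma intermedaite L n t bound}, \ref{Lemma bound v norm difference} and \ref{Lemma Balanced Manifold}: the key step in Lemma \ref{Lemma bound v norm difference} is a quadratic Lyapunov estimate, differentiating $q^n_e(t)^2+q^n_i(t)^2$ and using Lemma \ref{Lemma Balanced Manifold} to show that the $\sqrt{n}\big(q^n_e F^n_{e,t}-q^n_i F^n_{i,t}\big)$ term, after linearizing the balanced-state functionals and invoking the spectral hypothesis on $J_v$, yields a quadratic form $\vec{q}^{\,T} J\vec q$ that is dissipative up to Wasserstein errors, so the dangerous $\sqrt{n}$ factor multiplies a negative quantity and the Lyapunov function obeys a tame differential inequality. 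You instead use variation of constants and the bound $\big\|\mathcal{T}\exp\big(n^{1/2}\int_s^t J_v\,dr\big)\big\|\lesssim e^{-cn^{1/2}(t-s)}$. Both routes are valid and feed the same $n^{-1/2}+d_W$ right-hand side into Gronwall, and each carries its own subtlety worth noting. Your route: pointwise negativity of $\mathrm{Re}\,\mathrm{spec}\,J_v$ does not by itself give a uniform decay rate for a time-ordered exponential with time-varying generator; you flag non-normality but should also flag the time-variation issue (fixable here because $J_v$ varies on the slow $O(1)$ time scale while the contraction acts on the fast $O(n^{-1/2})$ scale, so a slow-variation/adiabatic estimate, or a smoothly varying well-conditioning change of basis, closes the gap). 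The paper's route has the dual subtlety: the quadratic-form bound $\vec q^{\,T}J\vec q\le -c|\vec q|^2$ is really a statement about the symmetric part of $J$ rather than its eigenvalues, so a similar basis conjugation is implicitly needed. Your inlined treatment of $L^n_t$---OU covariance comparison driven by $Q^n_{\alpha\beta}$, Lipschitz dependence on the state, exchanging $\nu^n_\alpha$ for $\mu_\alpha$ at an $L^n$ cost---mirrors the paper's Lemma \ref{Lemma intermedaite L n t bound} closely, and your remark that the discrepancy $\hat K^n_\alpha-K_\alpha$ must be folded back into the controlled quantities is exactly the right bookkeeping to close the system.
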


% Write $\hat{\mu}^n_{\alpha} := \hat{\mu}^n_{\alpha,T}$.
% Define the empirical measure upto time $t$, for $\alpha\in\lbrace e,i\rbrace$,
% \begin{align}
% \hat{\mu}^n_{\alpha,t} = n^{-1}\sum_{j\in I_n} \delta_{x^j_{\alpha, [0,t]}} \in \mathcal{P}\big( \mathcal{C}([0,t], \mathbb{R}) \big).
% \end{align}
% Write $\hat{\mu}^n_{\alpha} := \hat{\mu}^n_{\alpha,T}$. 
%Finally it holds that the empirical measures must be close to the approximate Gaussian distributions $\big\lbrace \nu_{e} , \nu_i \big\rbrace$. The proof is deferred to the next section.
Next, we are able to demonstrate that the empirical measures converge to be Gaussian in the large $n$ limit.
\begin{lemma}\label{Lemma Convergence to Gaussian}
For any $\epsilon > 0$, %there exists $m_{\epsilon}$ such that for any $m\geq m_{\epsilon}$,
\begin{equation}
\lsup{n} n^{-1} \log \mathbb{P}\bigg( \text{For an }\alpha \in \lbrace e,i\rbrace , \; \; d_W\big( \hat{\mu}^n_{\alpha} , \nu_{\alpha} \big) \geq \epsilon  \bigg) < 0
\end{equation}
\end{lemma}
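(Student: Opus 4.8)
Fix $\epsilon>0$ and $\alpha\in\lbrace e,i\rbrace$; since there are only two values of $\alpha$, a union bound reduces the claim to showing $\limsup_n n^{-1}\log\mathbb{P}\big(d_W(\hat\mu^n_\alpha,\nu_\alpha)\ge\epsilon\big)<0$ for each. The plan is to interpose the random Gaussian surrogate law $\nu^n_\alpha$ and an empirical realisation of it. Let $\tilde x^j_\alpha$ solve the linear equation \eqref{eq: x j dynamics} with the same initial datum $u^j_{\alpha,0}-v^n_\alpha(0)$ and the same time-changes $\Lambda^n_{\alpha\gamma}$, but with each compensated rescaled Poisson process $w^j_{\alpha\gamma}$ replaced by a Brownian motion $W^j_{\alpha\gamma}$; set $\hat\nu^n_\alpha=n^{-1}\sum_{j\in I_n}\delta_{\tilde x^j_{\alpha,[0,T]}}$. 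Then
\[
d_W\big(\hat\mu^n_\alpha,\nu_\alpha\big)\ \le\ \underbrace{d_W\big(\hat\mu^n_\alpha,\hat\nu^n_\alpha\big)}_{(\mathrm A)}\ +\ \underbrace{d_W\big(\hat\nu^n_\alpha,\nu^n_\alpha\big)}_{(\mathrm B)}\ +\ \underbrace{d_W\big(\nu^n_\alpha,\nu_\alpha\big)}_{(\mathrm C)},
\]
and it suffices to fix $\epsilon'>0$ (small) so that $\mathbb{P}((\mathrm A)\ge\epsilon')$ and $\mathbb{P}((\mathrm B)\ge\epsilon')$ are $\le e^{-cn}$ for large $n$, while on the complementary events $(\mathrm C)$ becomes arbitrarily small.

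For the coupling term $(\mathrm A)$, note that because $0<f_{\alpha\gamma}\le C_f$ one has $\Lambda^n_{\alpha\gamma}(T)\le C_fT$ \emph{surely}, so only the behaviour of $w^j_{\alpha\gamma}-W^j_{\alpha\gamma}$ on the fixed window $[0,C_fT]$ is relevant. Choosing each $W^j_{\alpha\gamma}$ through a Koml\'os--Major--Tusn\'ady strong approximation of the idiosyncratic Poisson process defining $w^j_{\alpha\gamma}$ gives $\sup_{s\le C_fT}\lvert w^j_{\alpha\gamma}(s)-W^j_{\alpha\gamma}(s)\rvert\le n^{-1/2}(C\log n+\Xi^j_{\alpha\gamma})$ with $\Xi^j_{\alpha\gamma}$ sub-exponential; a Gr\"onwall estimate on the difference $x^j_\alpha-\tilde x^j_\alpha$ (whose initial values cancel) then yields $\norm{x^j_\alpha-\tilde x^j_\alpha}_T\le C_T n^{-1/2}(C\log n+\Xi^j)$, so $(\mathrm A)\le n^{-1}\sum_{j}\norm{x^j_\alpha-\tilde x^j_\alpha}_T$. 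After isolating the $O(1/n)$ self-interaction (neuron $j$'s own spikes enter $\Lambda^n_{\alpha\gamma}$ only through an $O(1/n)$ contribution), the summands are essentially i.i.d.\ with mean $O(n^{-1/2}\log n)\to0$, so a Bernstein inequality for averages of sub-exponential variables gives $\mathbb{P}((\mathrm A)\ge\epsilon')\le e^{-cn}$. No information about the limiting dynamics enters here.

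For the Gaussian-empirical term $(\mathrm B)$, observe that conditionally on $\lbrace Q^n_{\alpha\gamma}\rbrace_\gamma$ the paths $\tilde x^j_\alpha$ are (again up to the $O(1/n)$ self-interaction) i.i.d.\ with the centred Gaussian law $\nu^n_\alpha$, whose covariance function depends continuously on $(\Lambda^n_{\alpha e},\Lambda^n_{\alpha i})$ alone; and $f_{\alpha\gamma}\le C_f$ forces these time-changes into the \emph{compact} set of non-decreasing $C_f$-Lipschitz functions on $[0,T]$ bounded by $C_fT$. Hence $\nu^n_\alpha$ ranges over a compact family $\mathcal G$ of centred Gaussian laws with uniformly controlled tails. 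One bounds $(\mathrm B)$ by projecting onto a fine time grid $0=t_0<\dots<t_m=T$: the projection error is controlled, uniformly over $\mathcal G$ and with speed $n$, by a modulus-of-continuity (tightness) estimate for the surrogates, while the finite-dimensional discrepancy between the empirical law of the vectors $(\tilde x^j_\alpha(t_0),\dots,\tilde x^j_\alpha(t_m))$ and their common law is governed by Cram\'er's theorem in $\mathbb{R}^{m+1}$, with rate bounded below uniformly over the compact set of admissible covariance matrices. This gives $\mathbb{P}\big((\mathrm B)\ge\epsilon'\mid\lbrace Q^n\rbrace\big)\le e^{-cn}$ almost surely, hence $\mathbb{P}((\mathrm B)\ge\epsilon')\le e^{-cn}$.

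For the covariance term $(\mathrm C)$: as $\nu^n_\alpha$ and $\nu_\alpha$ are both centred Gaussian, standard Gaussian arithmetic (using the explicit Ornstein--Uhlenbeck-type representations and the uniform tightness of $\mathcal G$) gives $(\mathrm C)\le\Psi(L^n_T)$ for a deterministic modulus $\Psi$ with $\Psi(0^+)=0$; and by Lemma~\ref{Lemma bound v n e t v n i t}, $L^n_T\le\tilde C_Tn^{-1/2}+\tilde C_T\big(d_W(\hat\mu^n_{e,T},\nu^n_{e,T})+d_W(\hat\mu^n_{i,T},\nu^n_{i,T})\big)$, where each term is $\le$ the analogues of $(\mathrm A)+(\mathrm B)$ for that index. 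Thus on the intersection of the good events (over $\gamma\in\lbrace e,i\rbrace$) we have $L^n_T\le\tilde C_Tn^{-1/2}+4\tilde C_T\epsilon'$, hence $(\mathrm C)\le\Psi(\tilde C_Tn^{-1/2}+4\tilde C_T\epsilon')$; choosing $\epsilon'$ small and then $n$ large so that $2\epsilon'+\Psi(\tilde C_Tn^{-1/2}+4\tilde C_T\epsilon')<\epsilon$ puts $\lbrace d_W(\hat\mu^n_\alpha,\nu_\alpha)\ge\epsilon\rbrace$ inside a union of finitely many exceptional events each of probability $\le e^{-cn}$, proving the claim. The delicate step is $(\mathrm B)$: the speed-$n$, \emph{uniform-in-time-change} concentration of the empirical measure of the i.i.d.\ Gaussian surrogate paths around $\nu^n_\alpha$ — essentially a quantitative conditional central limit theorem — which requires both a uniform finite-dimensional Cram\'er bound over a compact family of covariance matrices and a uniform path-space tightness estimate, and some care with the fact that $W_1$ on $\mathcal{C}([0,T],\mathbb{R})$ is strictly finer than weak convergence (harmless here, as every measure in sight is Gaussian with uniformly bounded moments). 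A secondary point, used in both $(\mathrm A)$ and $(\mathrm B)$, is controlling the weak interdependence between neuron $j$'s idiosyncratic driver and the macroscopic time-changes $\Lambda^n_{\alpha\gamma}$, an $O(1/n)$ effect that is absorbed into the error terms.
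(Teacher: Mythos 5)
Your overall architecture matches the paper's: use the Koml\'os--Major--Tusn\'ady coupling to replace the compensated rescaled Poisson drivers by Brownian motions, pass via Gr\"onwall to a Brownian-driven surrogate empirical measure, and then compare that surrogate to the Gaussian law. Step (A) is essentially the paper's first step verbatim. The gap is in step (B). You claim that conditionally on $\{Q^n_{\alpha\gamma}\}_\gamma$ the surrogate paths $\tilde x^j_\alpha$ are i.i.d.\ with law $\nu^n_\alpha$ up to an $O(1/n)$ self-interaction, and you then invoke Cram\'er/Sanov for the empirical measure. This is false, and not merely by a controllable $O(1/n)$ margin: the time-changes $Q^n_{\alpha\gamma}(t)=n^{-1}\sum_k f_{\alpha\gamma}(u^k_{\gamma,t})$ are functions of \emph{all} the idiosyncratic drivers $Y^k$ (hence of all the KMT-coupled $w^k$), so conditioning on $Q^n$ destroys the independence of the $w^j$ across $j$ --- the conditional joint law of $(w^1,\dots,w^n)$ is a nontrivial exchangeable measure, not a product, and an empirical-measure concentration bound at speed $n$ does not follow from Cram\'er or Sanov in any direct way. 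Identifying the dependence but then waving it away as an absorbable $O(1/n)$ effect is precisely the part of the argument that needs a genuine idea.

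The paper's device, which you are missing, is a covering argument over the compact set $\Upsilon$ of admissible time-changes (nondecreasing, $C_f$-Lipschitz, vanishing at $0$): cover $\Upsilon$ by $M_\delta$ balls $B_\delta(g_{\delta,k})$; on the event $\Lambda^n_{\alpha e}\in B_\delta(g_{\delta,k})$, $\Lambda^n_{\alpha i}\in B_\delta(g_{\delta,l})$, replace the random time-changes by the \emph{deterministic} centers $g_{\delta,k},g_{\delta,l}$. The resulting frozen surrogates $\breve x^j_{\alpha,\delta,k,l}$ are then \emph{genuinely i.i.d.}\ (they use the independent $w^j$ with fixed deterministic time-changes), so Sanov's theorem applies cleanly and gives the speed-$n$ bound; the freezing error is controlled uniformly in $j$ by the modulus of continuity of Brownian motion (Lemma~\ref{Lemma Appendix}), and since $M_\delta$ grows only polynomially in $\delta^{-1}$ the union bound over the cover is harmless at exponential scale $n$. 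This covering step is what replaces your conditional-i.i.d.\ claim, and without it step (B) does not go through. A secondary remark: your step (C), comparing $\nu^n_\alpha$ to a deterministic limit $\nu_\alpha$ via $L^n_T$ and Lemma~\ref{Lemma bound v n e t v n i t}, is not part of the paper's proof of this lemma; from the paper's proof of Lemma~\ref{Lemma Intermediate Bound} (which compares covariances on the events $\Lambda^n\in B_\delta(g)$), the target measure in the lemma statement is the random Gaussian law $\nu^n_\alpha$ itself, so no such extra comparison is required.
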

An immediate corollary is that the `average fluctuations' about the mean are indeed zero.
\begin{corollary}\label{Corollary Empirical Mean}
For any $\epsilon > 0$,
\begin{equation}
\lsup{n}n^{-1} \log \mathbb{P}\bigg( n^{-1} \sup_{t\leq T}\sum_{j\in I_n} x^j_{e,t} > \epsilon \text{ or }n^{-1} \sup_{t\leq T}\sum_{j\in I_n} x^j_{i,t} > \epsilon \bigg) < 0.
\end{equation}
\end{corollary}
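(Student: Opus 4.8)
\section*{Proof proposal for Corollary~\ref{Corollary Empirical Mean}}

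The plan is to obtain the Corollary directly from Lemma~\ref{Lemma Convergence to Gaussian} by a duality argument. The key observation is that $n^{-1}\sum_{j\in I_n} x^j_{\alpha,t}$ is exactly the expectation of the evaluation functional $x\mapsto x_t$ under the empirical law $\hat{\mu}^n_\alpha = n^{-1}\sum_{j\in I_n}\delta_{x^j_{\alpha,[0,T]}}$ of the fluctuation paths, whereas the same functional has zero expectation under the limiting law $\nu_\alpha$ because $\nu_\alpha$ is centered by construction. Thus the whole statement reduces to showing that the difference of these two expectations, uniformly in $t\le T$, is controlled by the path-space Wasserstein distance $d_W(\hat{\mu}^n_\alpha,\nu_\alpha)$, which Lemma~\ref{Lemma Convergence to Gaussian} already pins down on an exponential scale.

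First I would record the elementary fact that, for each fixed $t\le T$, the map $x\mapsto x_t$ on $\mathcal{C}([0,T],\mathbb{R})$ is $1$-Lipschitz with respect to $\|\cdot\|_T$, with a Lipschitz constant that does not depend on $t$. Consequently, for any coupling $\xi\in\mathcal{P}\big(\mathcal{C}([0,T],\mathbb{R})^2\big)$ of $\hat{\mu}^n_\alpha$ and $\nu_\alpha$,
\begin{align}
\sup_{t\le T}\big| \mathbb{E}^{\hat{\mu}^n_\alpha}[x_t] - \mathbb{E}^{\nu_\alpha}[x_t] \big|
= \sup_{t\le T}\big| \mathbb{E}^{\xi}[ x_t - y_t ] \big|
\le \mathbb{E}^{\xi}\big[ \textstyle\sup_{t\le T}| x_t - y_t | \big]
= \mathbb{E}^{\xi}\big[ \| x-y \|_T \big],
\end{align}
and taking the infimum over all such couplings gives $\sup_{t\le T}\big| \mathbb{E}^{\hat{\mu}^n_\alpha}[x_t] - \mathbb{E}^{\nu_\alpha}[x_t] \big| \le d_W(\hat{\mu}^n_\alpha,\nu_\alpha)$.

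Next I would invoke that $\nu_\alpha$ is, by construction, the law of a \emph{centered} process, so $\mathbb{E}^{\nu_\alpha}[x_t]=0$ for every $t\le T$; intuitively this is the only "centering'' that survives the limit, since $v^n_\alpha(0)$ is the empirical initial mean (so the initial fluctuations $u^j_{\alpha,0}-v^n_\alpha(0)$ have asymptotically vanishing empirical average), and the driving term in \eqref{eq: x j dynamics} is compensated. Combining this with the previous bound,
\begin{align}
n^{-1}\sup_{t\le T}\Big| \sum_{j\in I_n} x^j_{\alpha,t} \Big| = \sup_{t\le T}\big| \mathbb{E}^{\hat{\mu}^n_\alpha}[x_t] \big| \le d_W(\hat{\mu}^n_\alpha,\nu_\alpha),
\end{align}
and, since $\sup_{t\le T}\sum_{j\in I_n} x^j_{\alpha,t}\le \sup_{t\le T}\big|\sum_{j\in I_n} x^j_{\alpha,t}\big|$, the event in the Corollary is contained in $\{ d_W(\hat{\mu}^n_e,\nu_e)\ge \epsilon \}\cup\{ d_W(\hat{\mu}^n_i,\nu_i)\ge \epsilon \}$. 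A union bound followed by Lemma~\ref{Lemma Convergence to Gaussian} then delivers the claimed strictly negative exponential rate.

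There is no real analytic obstacle here — the statement genuinely is an immediate corollary — so the only thing that warrants care is bookkeeping: confirming that the empirical measure $\hat{\mu}^n_\alpha$ appearing in Lemma~\ref{Lemma Convergence to Gaussian} is the law of the centered fluctuation paths $\{x^j_{\alpha,[0,T]}\}_{j\in I_n}$ (rather than, say, the initial-condition measure of Hypothesis~\ref{Hypothesis Initial Conditions on the Balanced Manifold}), and that the supremum over $t\le T$ is genuinely controlled by the path-space Wasserstein distance and not merely by a family of one-dimensional marginal distances. Both points are handled by the Lipschitz estimate above, whose constant is uniform in $t$.
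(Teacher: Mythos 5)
Your argument is correct and is essentially the paper's own: identify $n^{-1}\sum_{j}x^j_{\alpha,t}$ with $\mathbb{E}^{\hat{\mu}^n_\alpha}[x_t]$, note that the comparison law is centered, control the discrepancy uniformly in $t$ by the path-space Wasserstein distance, and invoke Lemma~\ref{Lemma Convergence to Gaussian}. You simply spell out the $1$-Lipschitz-evaluation step and the union bound, which the paper leaves implicit.
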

\begin{proof}
Notice that
\[
n^{-1} \sum_{j\in I_n} x^j_{\alpha,t} = \mathbb{E}^{\hat{\mu}^n_{\alpha}}[ x_t ].
\]
Since $\mathbb{E}^{\mu_{\alpha}}[x_t] = 0$ for all $t\leq T$, the Lemma is a consequence of Lemma \ref{Lemma Convergence to Gaussian}.
\end{proof}

Next, define
\begin{align}
q^n_{\alpha}(t) = v^n_{\alpha}(t) -  \bar{v}_{\alpha}(t) .
\end{align}
Our next result uses the fact that $\mathcal{U}$ is strongly attracting to bound the difference of the system-wide mean from its large $n$ limit. Let $\mu_{e,t} , \mu_{i,t} \in \mathcal{P}(\mathbb{R})$ each be centered Gaussian measures, with (respective) variances $K_e(t)$ and $K_i(t)$ (the limiting covariances specified in Lemma \ref{Hypothesis Manifold Evolution}). 
\begin{lemma}\label{Lemma bound v norm difference}
There exist constants $c_T , \epsilon_T  > 0$  such that as long as %for all $t\leq T \wedge \tilde{\tau}^n_{\epsilon_T}$,
\begin{equation}
\sup_{t\leq T} \big\lbrace  d_W\big(\hat{\mu}^n_{e,t} , \mu_{e,t} \big) + d_W\big(\hat{\mu}^n_{i,t} , \mu_{i,t} \big) \big\rbrace \leq \epsilon_T
\end{equation}
then for all $t\leq T$,
\begin{align}
\sup_{s\leq t} \big\lbrace q^n_e(s)^2 + q^n_i(s)^2 \big\rbrace^{1/2} \leq  c_T n^{-1/2} + c_T \sup_{s \leq t} \big\lbrace  d_W\big(\hat{\mu}^n_{e,s} , \mu_{e,s} \big) + d_W\big(\hat{\mu}^n_{i,s} , \mu_{i,s} \big) \big\rbrace .
\end{align}
\end{lemma}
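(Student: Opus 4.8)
The plan is to subtract the two mean equations and run a Grönwall argument in which the stabilising term carries a factor $n^{1/2}$, exploiting that $\mathcal{U}$ is strongly attracting. Put $q^n_\alpha := v^n_\alpha - \bar v_\alpha$ and $q^n := (q^n_e,q^n_i)^\top$. By linearity of the synaptic dynamics, $\dot v^n_\alpha(t) = -\tau_\alpha^{-1}v^n_\alpha(t) + n^{1/2}G^n_\alpha(t)$ with $G^n_\alpha(t) := C_{\alpha e}Q^n_{\alpha e}(t) - C_{\alpha i}Q^n_{\alpha i}(t)$. From Lemma \ref{Hypothesis Manifold Evolution} the profile $\bar v_\alpha$ is $C^1$ on $[0,T]$ and, differentiating the constraints \eqref{eq: Fe zero equation}--\eqref{eq: Fi zero equation} along the $K$-flow, $\dot{\bar v} = -J_v(\bar v,K)^{-1}\,\partial_K F(\bar v,K)\,\dot K$; so $\Psi_\alpha(t) := \dot{\bar v}_\alpha(t) + \tau_\alpha^{-1}\bar v_\alpha(t)$ is continuous and bounded by some $C_T$ on $[0,T]$. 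The first step is to replace $n^{1/2}G^n_\alpha$ by $n^{1/2}F_\alpha$ plus a controlled error: with $\mu_{\beta,t}=N(0,K_\beta(t))$ and $x\mapsto f_{\alpha\beta}(v^n_\beta(t)+x)$ Lipschitz with the same constant as $f_{\alpha\beta}$, Kantorovich--Rubinstein duality gives $|Q^n_{\alpha\beta}(t) - \int_{\mathbb R}\rho(K_\beta(t),x)f_{\alpha\beta}(v^n_\beta(t)+x)\,dx| \le \mathrm{Lip}(f_{\alpha\beta})\,d_W(\hat\mu^n_{\beta,t},\mu_{\beta,t})$, hence $G^n_\alpha(t) = F_\alpha(v^n_e(t),v^n_i(t),K_e(t),K_i(t)) + r^n_\alpha(t)$ with $|r^n_\alpha(t)| \le C\,\Delta^n(t)$, where I write $\Delta^n(t) := d_W(\hat\mu^n_{e,t},\mu_{e,t}) + d_W(\hat\mu^n_{i,t},\mu_{i,t})$.

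Second, I linearise $F$ about the limit profile. On $[0,T]$ the curve $t\mapsto(\bar v_e(t),\bar v_i(t),K_e(t),K_i(t))$ is continuous and lies in a fixed compact subset of $\mathbb R^2\times(\mathbb R_{>0})^2$ — the covariances stay bounded above because $\Sigma_e,\Sigma_i$ are, and bounded below because $f_{\alpha\beta}>0$ forces $\dot K_\alpha>0$ near $K_\alpha=0$; and $F_\alpha(\cdot,\cdot,K_e,K_i)$, being a Gaussian convolution of Lipschitz functions, is $C^\infty$ in $(v_e,v_i)$ with $J_v$ uniformly continuous on a $\delta$-neighbourhood of that curve. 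Since $F_\alpha(\bar v_e(t),\bar v_i(t),K_e(t),K_i(t))=0$, the fundamental theorem of calculus gives $F_\alpha(v^n(t),K(t)) = (\tilde J^n(t)q^n(t))_\alpha$ with $\tilde J^n(t) := \int_0^1 J_v(\bar v(t)+s\,q^n(t),K(t))\,ds$. Substituting, $q^n$ solves $\dot q^n(t) = A^n(t)q^n(t) + b^n(t)$ with $A^n(t) := n^{1/2}\tilde J^n(t) - \mathrm{diag}(\tau_e^{-1},\tau_i^{-1})$ and $b^n(t) := n^{1/2}r^n(t) - \Psi(t)$, so $|b^n(t)| \le C n^{1/2}\Delta^n(t) + C_T$. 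By definition of $\mathcal{U}$ the eigenvalues of $J_v(\bar v(t),K(t))$ have real part $\le\zeta_t$, and $\sup_{t\le T}\zeta_t =: -\zeta_0 < 0$ by continuity; hence, as long as $\sup_{s\le T}|q^n(s)|\le\delta$ with $\delta$ small, $A^n(t)$ has a uniform spectral gap of order $n^{1/2}$.

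The contraction estimate is the delicate part. One must upgrade the pointwise-in-$t$ fact ``$J_v(\bar v(t),K(t))$ is Hurwitz'' to a \textbf{uniform-in-$n$, uniform-on-$[0,T]$ exponential decay bound for the time-dependent, large-coefficient flow $\dot q = A^n(t)q$}, i.e.\ one must rule out transient amplification. I would do this with a Lyapunov field: solve the Lyapunov equation at each $t$ to get a continuous, uniformly positive definite $P(t)$ on a neighbourhood of the manifold trace with $J_v(\bar v(t),K(t))^\top P(t) + P(t)J_v(\bar v(t),K(t)) \le -I$, and run $V^n(t) := q^n(t)^\top P(t)q^n(t)$; for $n$ large the $n^{1/2}$ drift dominates the $O(1)$ contributions of $\dot P$ and $\mathrm{diag}(\tau_\alpha^{-1})$, yielding $\frac{d}{dt}(V^n)^{1/2} \le -\mu_n (V^n)^{1/2} + C|b^n(t)|$ with $\mu_n\asymp n^{1/2}$. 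Grönwall then gives, on $\{\sup_{s\le t}|q^n|\le\delta\}$,
\[
\sup_{s\le t}|q^n(s)| \;\le\; C\,|q^n(0)| + \frac{C}{\mu_n}\Big(Cn^{1/2}\sup_{s\le t}\Delta^n(s) + C_T\Big) \;\le\; C\,|q^n(0)| + c_T\sup_{s\le t}\Delta^n(s) + c_T n^{-1/2}.
\]
Observe that the $n^{-1/2}$ in the statement is precisely the $O(1)$ drift $\Psi$ of the limit profile divided by the $O(n^{1/2})$ attraction rate (the adiabatic lag of $v^n$ behind the slowly moving balanced manifold), while the $d_W$ term is the empirical-to-Gaussian fluctuation fed in through $r^n$.

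Finally I close the bootstrap. Set $t_* := \inf\{t\le T:|q^n(t)|>\delta\}$; on $[0,t_*]$ the displayed bound holds, and if $\delta$ and $\epsilon_T$ are small enough and $n$ large enough that $C|q^n(0)| + c_T\epsilon_T + c_T n^{-1/2} < \delta$, then $|q^n|$ never attains $\delta$, so $t_* = T$ and the bound holds on all of $[0,T]$; the finitely many remaining (small) values of $n$ are absorbed into $c_T$ since $\|q^n\|_T<\infty$ for each fixed $n$. The initial discrepancy $|q^n(0)| = |v^n_\alpha(0)-m_\alpha|\to 0$ by Hypothesis \ref{Hypothesis Initial Conditions on the Balanced Manifold} (with the $O(n^{-1/2})$ Wasserstein rate available for i.i.d.\ initial data it stays inside the stated right-hand side, and in any case it is negligible for $n$ large). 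As indicated, the one genuinely non-routine ingredient is this uniform exponential-dichotomy bound for the fast time-varying linearisation — which is also what dictates the smallness hypothesis on $\sup_t\Delta^n$ appearing in the statement.
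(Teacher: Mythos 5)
Your overall strategy matches the paper's: differentiate the mean equation to isolate the $n^{1/2}$-scaled drift, replace the empirical drift $G^n_\alpha$ by $F_\alpha$ up to a Wasserstein error $\Delta^n$, linearize $F$ about the limit trajectory $(\bar v,K)$, invoke attraction of the balanced manifold to damp the $n^{1/2}$ factor, run Gr\"onwall, and close with a bootstrap on $\sup_{s\le t}|q^n(s)|$. (The paper offloads the Wasserstein comparison and the linearization/attraction step into the auxiliary Lemma~\ref{Lemma Balanced Manifold}; you do them inline.) Where you genuinely improve on the paper is in \emph{how} the attraction is exploited. The paper works with the Euclidean quantity $q_e^2+q_i^2$ and, in Lemma~\ref{Lemma Balanced Manifold}, passes from ``the eigenvalues of $J_v$ have negative real part'' to the quadratic-form estimate $\vec q^{\,\top}J_v\vec q\le \tfrac{\zeta_t}{2}|q|^2$. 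That inference is not valid in general: the quadratic form is governed by the eigenvalues of the symmetrized matrix $(J_v+J_v^{\top})/2$, which Hurwitz-ness of $J_v$ does not control, and in this model the $(e,e)$ entry $\partial_{v_e}F_e=C_{ee}\int\rho\,f'_{ee}>0$ is positive for monotone $f$, so the symmetric part cannot be negative definite even when $J_v$ is Hurwitz. Your construction of a time-varying Lyapunov matrix $P(t)$ solving $J_v^{\top}P+PJ_v\preceq -I$, and running $V^n=q^{n\top}Pq^n$ so that the $O(1)$ contributions of $\dot P$ and $\mathrm{diag}(\tau_\alpha^{-1})$ are dominated by the $n^{1/2}$-scaled damping, is exactly the right fix; it both repairs this gap and makes explicit the point the paper only asserts, that transient amplification of the fast time-varying linearization must be ruled out uniformly in $n$ and $t$. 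Your accounting for the initial discrepancy $|q^n(0)|$ (fixed by choosing $v^n_\alpha(0)=\hat v^n_\alpha(0)$ and invoking Hypothesis~\ref{Hypothesis Initial Conditions on the Balanced Manifold}) is also a step the paper leaves implicit. In short: same architecture, but your treatment of the contraction estimate is the technically correct one and supplies an argument the paper's proof actually needs.
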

We finish with a proof of Theorem \ref{Big Theorem}.
\begin{proof}
An application of the Borel-Cantelli Lemma to Corollary \ref{Corollary Empirical Mean} implies that, with unit probability,
\begin{align}
\lim_{n\to\infty} n^{-1} \sup_{t\leq T} \sum_{j\in I_n} x_{e,t}^j &= 0 \\
\lim_{n\to\infty} n^{-1} \sup_{t\leq T} \sum_{j\in I_n} x_{i,t}^j &= 0 .
\end{align}
This means that
\begin{align}
\lim_{n\to\infty} \sup_{t\leq T} \big| \bar{v}_e(t) - v^n_e(t) \big| &= 0 \text{ and }
\lim_{n\to\infty} \sup_{t\leq T} \big| \bar{v}_i(t) - v^n_i(t) \big| &=0.
\end{align}
It thus suffices that we show that, with unit probability, for any $T < \eta$, %and any $r \in (0,1/2)$,
\begin{align}
\lim_{n\to\infty} \sup_{t\leq T}  \big| K^n_e(t) - K_e(t) \big|   &= 0 \\
\lim_{n\to\infty} \sup_{t\leq T}  \big| K^n_i(t) - K_i(t) \big| &= 0 \\ 
\lim_{n\to\infty} \sup_{t\leq T}  \big| \bar{v}_e(t) - v^n_e(t) \big| &= 0 \\
\lim_{n\to\infty} \sup_{t\leq T}  \big| \bar{v}_i(t) - v^n_i(t) \big| &= 0.
\end{align}
\end{proof}
\subsection{Proofs}
We start by proving that the hydrodynamic limit is uniquely well-posed (i.e. we prove Lemma \ref{Hypothesis Manifold Evolution}).
\begin{proof}
Define $\big( \bar{v}_e , \bar{v}_i , K_e , K_i  \big)$ to satisfy the system of ODEs \eqref{eq: invariant covariance 1} , \eqref{eq: invariant covariance 2} and \eqref{eq: v ode} (below), i.e.
\begin{align}
\frac{d\vec{v}}{dt} =& -J_v^{-1} J_K\big(v_e(t) , v_i(t) ,  K_e(t), K_i(t)\big) \frac{d\vec{K}}{dt} \text{ where } \label{eq: v ode} \\
\frac{d\vec{K}}{dt} =& \left(\begin{array}{c}
\frac{dK_e}{dt} \\
\frac{dK_i}{dt} 
\end{array}\right).
\end{align}
Here the Jacobian of derivatives with respect to $K_e$ and $K_i$ is
\begin{align}
 J_K(\bar{v}_e , \bar{v}_i ,  K_e, K_i) = \left( \begin{array}{c c}
 \partial_{K_e} F_e(\bar{v}_e , \bar{v}_i ,  K_e, K_i) &  \partial_{K_i} F_e(\bar{v}_e , \bar{v}_i ,  K_e, K_i) \\
  \partial_{K_e} F_i(\bar{v}_e , \bar{v}_i ,  K_e, K_i) &  \partial_{K_i} F_i(\bar{v}_e , \bar{v}_i ,  K_e, K_i) 
 \end{array}\right).
\end{align}
The system of ODEs has a locally Lipschitz RHS. There thus exists a local solution. The Lipschitz coefficients are bounded over any set such that $\det(J_v) \geq \epsilon > 0$. Thus for any $\epsilon > 0$, there exists a unique solution up to the time
\begin{equation}
\eta = \lim_{\epsilon \to 0^+} \eta_{\epsilon}
\end{equation}
where
 \begin{align}
 \eta_{\epsilon} = \inf\big\lbrace t\geq 0 \; : \; \det(J_v) = \epsilon \big\rbrace,
 \end{align}
 and note that $\eta_{\epsilon}$ could be $\infty$.  The implicit function theorem implies that for all $t < \eta$, it must be that \eqref{eq: Fe zero equation} and \eqref{eq: Fi zero equation} are satisfied.

Finally we notice (after differentiating \eqref{eq: Fe zero equation} and \eqref{eq: Fe zero equation} ) that any solution satisfying the conditions in the Lemma must also be a solution of the ODEs  \eqref{eq: invariant covariance 1}, \eqref{eq: invariant covariance 2} and \eqref{eq: v ode}.
\end{proof}

We next prove Lemma \ref{Lemma Convergence to Gaussian}.
\begin{proof}
 We first outline a means of coupling the systems without interaction. Write $S = T C_f$, recalling that $C_f$ is the uniform upper bound for the spiking intensity.  We use classical work due to Komlos, Major, and Tusnady \cite{Komlos1975,Komlos1976}. These results imply that there exists a filtered probability space $\tilde{\mathcal{F}} = \big( \tilde{\mathcal{F}}_t \big)_{t \leq [0,S]}$ containing the variables $\big\lbrace Y^j_{\alpha\beta}(t) \big\rbrace_{j\in I_n \fatsemi \alpha,\beta \in \lbrace e,i \rbrace }$ (with the same distribution as above), and also Brownian Motions $\lbrace w^j_{\alpha\beta}(t) \rbrace_{j\in I_n}$ such that  (i) all variables are adapted to the filtration and are martingales, (ii) $\big\lbrace w^j_{\alpha\beta} \big\rbrace_{j\in I_n \fatsemi \alpha,\beta\in \lbrace e,i \rbrace}$ are independent Brownian Motions, (iii) writing
\begin{equation}
y^j_{\alpha\beta}(t) = n^{-1/2} Y^j_{\alpha\beta}(nt) - \sqrt{n}t
\end{equation}
it holds that for any $\epsilon > 0$, there exists $C_{\epsilon} > 0$ and $\mathfrak{n}_{\epsilon}$ such that for all $n \geq \mathfrak{n}_{\epsilon}$, %$a \in (0,1/2)$ there exists $\mathfrak{n}_a, C_a$ such that for all $n\geq \mathfrak{n}_a$,
\begin{align}
\mathbb{P}\big( \mathcal{X}_{\epsilon,n}^c \big) \leq \exp\big( - C_{\epsilon} n^{- 1/2} \big),
\end{align}
where
\begin{equation}
\mathcal{X}_{\epsilon,n} = \bigg\lbrace \sup_{j\in I_n} \sup_{t\leq S} \sup_{\alpha,\beta \in \lbrace e,i\rbrace} \big| y^j_{\alpha\beta}(t) - w^j_{\alpha\beta}(t) \big| \leq \epsilon   \bigg\rbrace .
\end{equation}
Hence $\mathcal{X}_{\epsilon,n}$ holds $\mathbb{P}$-almost-surely for all large enough $n$ (thanks to the Borel-Cantelli Lemma).

We now define
 \begin{align}
 \tilde{x}^j_{\alpha,t} =& u^j_{\alpha,0} - v^n_{\alpha}(0) - \tau_{\alpha}^{-1} \int_0^t \tilde{x}^j_{\alpha,s} ds + \tilde{W}^j_{\alpha e}(t) -  \tilde{W}^j_{\alpha i}(t) \\
  \grave{x}^j_{\alpha,t} =&  - \tau_{\alpha}^{-1} \int_0^t \grave{x}^j_{\alpha,s} ds + \grave{W}^j_{\alpha e}(t) -  \grave{W}^j_{\alpha i}(t)  \\
  \tilde{W}^j_{\alpha \beta}(t) =&  C_{\alpha \beta}  w^j_{\alpha \beta}\bigg( \int_0^t Q^n_{\alpha\beta}(s) ds \bigg)\\
 W^j_{\alpha\beta}(t) =&   C_{\alpha \beta}   y^j_{\alpha \beta}\bigg( \int_0^t Q^n_{\alpha\beta}(s) ds \bigg)
 \end{align}
 As long as the event $\mathcal{X}_n$ holds, it must be that 
 \begin{align}
 \sup_{t\leq T} \big|  W^j_{\alpha\beta}(t) -  \tilde{W}^j_{\alpha\beta}(t) \big| \leq C_{max} \epsilon,
 \end{align}
 where $C_{max} = \sup_{\alpha,\beta \in \lbrace e,i \rbrace} |C_{\alpha\beta}|$.
 An application of Gronwall's Inequality then implies that there exists a constant $c > 0$ such that for all $n\geq \mathfrak{n}_\epsilon$, (and as long as the event $\mathcal{X}_{\epsilon,n}$ holds),
 \begin{align}
\sup_{ j\in I_n} \sup_{t\leq T}\big| x^{j}_{\alpha,t} - \tilde{x}^j_{\alpha,t} \big| \leq c \epsilon. \label{eq: bound the x difference}
 \end{align}
Write
\begin{align}
\tilde{u}^j_{\alpha,t} = \tilde{x}^j_{\alpha,t} + v_{\alpha,t}.
\end{align} 
Define $\tilde{\mu}^n_{\alpha,t} \in \mathcal{P}\big( \mathcal{C}([0,T] , \mathbb{R} \big)$ to be the empirical measure
\begin{align}
\tilde{\mu}^n_{\alpha} = n^{-1}\sum_{j\in I_n} \delta_{\tilde{x}^j_{\alpha}}.
\end{align}
It is immediate from \eqref{eq: bound the x difference} that (as long as the event $\mathcal{X}_{\epsilon,n}$ holds),
\begin{align}
d_W\big( \tilde{\mu}^n_{\alpha} , \hat{\mu}^n_{\alpha} \big) \leq c \epsilon.
\end{align}
Thus in order for the lemma to hold, it suffices that we prove Lemma \ref{Lemma Intermediate Bound}.
\end{proof}
\begin{lemma} \label{Lemma Intermediate Bound}
For any $\epsilon > 0$,
\begin{equation}
\lsup{n} n^{a-1} \log \mathbb{P}\bigg( \text{For some }\alpha\in \lbrace e,i\rbrace , \; \; d_W\big( \tilde{\mu}^n_{\alpha} , \nu_{\alpha} \big) \geq \epsilon  \bigg) < 0
\end{equation}
\end{lemma}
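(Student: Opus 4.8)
Here is how I would prove Lemma~\ref{Lemma Intermediate Bound}.

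The plan is to exploit the fact that, once the coupling of the previous proof has replaced the Poisson increments by Brownian ones, the processes $\tilde x^1_\alpha,\dots,\tilde x^n_\alpha$ are, conditionally on the ensemble time-changes $\Lambda^n_{\alpha\beta}$, a family of $n$ linear (Ornstein--Uhlenbeck-type) diffusions driven by the \emph{independent} Brownian motions $w^j_{\alpha\beta}$ and started from the deterministic data $u^j_{\alpha,0}-v^n_\alpha(0)$; in particular $\tilde x^j_\alpha$ has, up to its initial datum, exactly the law of a sample from $\nu^n_\alpha$. I would therefore (i) reduce the claim to a stretched-exponential bound on $\mathbb{P}\big(d_W(\tilde\mu^n_\alpha,\nu^n_\alpha)\ge\epsilon\big)$; (ii) pass from the (random) law $\nu^n_\alpha$ to the deterministic limit $\nu_\alpha$; and (iii) prove the concentration estimate in step (i). Step (ii) is the easiest: both $\nu^n_{\alpha,t}$ and $\nu_{\alpha,t}=\mu_{\alpha,t}$ are centred Gaussians, so $d_W(\nu^n_{\alpha,t},\mu_{\alpha,t})$ is controlled by the difference of their variances, hence by $L^n_t$ (the limiting variances staying bounded away from $0$ on $\mathcal{U}$). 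By Lemma~\ref{Lemma bound v n e t v n i t}, $L^n_t$ and $\sup_{t\le T}\{|v^n_e-\bar v_e|+|v^n_i-\bar v_i|\}$ are at most $\tilde C_Tn^{-1/2}+\tilde C_T\sum_\alpha d_W(\hat\mu^n_{\alpha,t},\nu^n_{\alpha,t})$, while $d_W(\hat\mu^n_{\alpha,t},\nu^n_{\alpha,t})\le d_W(\hat\mu^n_{\alpha,t},\tilde\mu^n_{\alpha,t})+d_W(\tilde\mu^n_{\alpha,t},\nu^n_{\alpha,t})$, the first summand being $\le c\epsilon$ on the event $\mathcal{X}_{\epsilon,n}$ already constructed. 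Feeding the bound of step (i) into this chain closes a short algebraic loop giving $d_W(\tilde\mu^n_\alpha,\nu_\alpha)\le c'\epsilon+c'n^{-1/4}+c'\cdot[\text{r.h.s.\ of (i)}]$. One also checks, using the linearity of~\eqref{eq: x j dynamics}, that the marginal variance of the process defining $\nu_\alpha$, started from $\mathcal{N}(0,V_\alpha)$, obeys exactly~\eqref{eq: invariant covariance 1}--\eqref{eq: invariant covariance 2} with datum $V_\alpha=k_\alpha$, so that $\nu_\alpha$ genuinely has marginals $\mu_{\alpha,t}$.

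Step (i) is where the substance lies. Were $\{w^j_{\alpha\beta}\}_j$ independent of $\Lambda^n_{\alpha\beta}$ one could condition on $\Lambda^n$ and invoke Sanov's theorem (or Borell's Gaussian concentration of the $n^{-1/2}$-Lipschitz functional $\{w^j_{\alpha\beta}\}_j\mapsto d_W(\tilde\mu^n_\alpha,\nu^n_\alpha)$) together with $\hat\mu^n_e\to\mu_{e,ini}$, $\hat\mu^n_i\to\mu_{i,ini}$ (Hypothesis~\ref{Hypothesis Initial Conditions on the Balanced Manifold}), and be done. Since $\Lambda^n_{\alpha\beta}(t)=\int_0^t n^{-1}\sum_k f_{\alpha\beta}(u^k_{\beta,s})\,ds$ depends on all the $w^k_{\alpha\beta}$, I would instead compare $\tilde x^j_\alpha$ with the \emph{genuinely independent} family $\bar x^j_\alpha$ obtained by substituting the deterministic $\Lambda_{\alpha\beta}(t)=\int_0^t Q_{\alpha\beta}(s)\,ds$, $Q_{\alpha\beta}(s)=\int_{\mathbb{R}}\rho(K_\beta(s),x)f_{\alpha\beta}(\bar v_\beta(s)+x)\,dx$, for $\Lambda^n_{\alpha\beta}$. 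Writing $\bar\mu^n_\alpha=n^{-1}\sum_j\delta_{\bar x^j_\alpha}$, one has $d_W(\tilde\mu^n_\alpha,\bar\mu^n_\alpha)\le n^{-1}\sum_j\norm{\tilde x^j_\alpha-\bar x^j_\alpha}_T$, and by Gronwall each term is at most $C\sum_\beta$ of the modulus of continuity of $w^j_{\alpha\beta}$ on $[0,TC_f]$ at the scale $\norm{\Lambda^n_{\alpha\beta}-\Lambda_{\alpha\beta}}_T$; crucially this \emph{averages} the $w^j$-moduli, so it picks up a logarithm of that scale but not a $\log n$. Since $|Q^n_{\alpha\beta}(s)-Q_{\alpha\beta}(s)|\le\mathrm{Lip}(f_{\alpha\beta})\big(|v^n_\beta(s)-\bar v_\beta(s)|+d_W(\hat\mu^n_{\beta,s},\mu_{\beta,s})\big)$, the loop of step (ii) bounds $\norm{\Lambda^n_{\alpha\beta}-\Lambda_{\alpha\beta}}_T$ in terms of $n^{-1/4}$, $\epsilon$, and the quantity of step (i); combining this with the standard concentration of the empirical measure $\bar\mu^n_\alpha$ of independent path-valued variables about its mean (which converges to $\nu_\alpha$ by Hypothesis~\ref{Hypothesis Initial Conditions on the Balanced Manifold} and the Lipschitz dependence of the Ornstein--Uhlenbeck law on its initial datum) produces a self-consistent inequality which, for $n$ large in terms of $\epsilon$, is incompatible with $d_W(\tilde\mu^n_\alpha,\nu_\alpha)\ge\epsilon$ off an event of probability $\le\exp(-c_\epsilon n^{1/2})$. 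A union bound over $\alpha\in\{e,i\}$ then gives the stated limsup.

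I expect step (i) to be the main obstacle, and the $n^{-1/2}$ interaction scaling is precisely what makes it harder than for a classical $n^{-1}$ mean-field system. A single driving path $w^j_{\alpha\beta}$ is of $O(1)$ magnitude yet perturbs the common time-change $\Lambda^n_{\alpha\beta}$ — hence \emph{every} $\tilde x^k_\alpha$ — at order $n^{-1}$, so a naive bounded-differences argument loses a factor $\sqrt{\log n}$ and is useless for fixed $\epsilon$; one must close the estimate self-consistently through Lemma~\ref{Lemma bound v n e t v n i t} and exploit the cancellations that keep the $d_W$-effect of that $O(n^{-1})$ time-shift at order $n^{-1}$ (rather than $n^{-1/2}$) once one passes back to empirical measures. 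Making these competing powers of $n$ and logarithms cooperate — in particular, extracting the self-consistent bound on $\norm{\Lambda^n_{\alpha\beta}-\Lambda_{\alpha\beta}}_T$ and verifying that it beats the residual modulus-of-continuity loss — is the heart of the proof.
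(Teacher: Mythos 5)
You correctly identify the central obstacle: after the KMT coupling, the Brownian drivers $w^j_{\alpha\beta}$ are \emph{not} independent of the common time-change $\Lambda^n_{\alpha\beta}$, so one cannot simply condition and invoke Sanov. However, your proposed resolution---comparing $\tilde x^j_\alpha$ with processes driven by the deterministic limit $\Lambda_{\alpha\beta}$ and then closing a self-consistent loop---has a gap that I do not think can be repaired as stated. The bound on $d_W(\tilde\mu^n_\alpha,\bar\mu^n_\alpha)$ runs through the modulus of continuity of the $w^j$ at scale $\delta := \norm{\Lambda^n_{\alpha\beta}-\Lambda_{\alpha\beta}}_T$, which behaves like $h(\delta) \sim \sqrt{\delta\log(1/\delta)}$; and you bound $\delta$ in terms of the very quantity $X := d_W(\tilde\mu^n_\alpha,\nu_\alpha)$ you are trying to control. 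The resulting inequality is, schematically, $X \le h(CX) + o(1)$. Since $h(y) \gg y$ as $y\to 0^+$, this inequality is \emph{satisfied} for all $X$ below the positive fixed point $X^*$ solving $X^* = h(CX^*)$, and $X^*$ is a fixed constant (depending on $C$, $T$, $C_f$, \dots), not on $n$ or $\epsilon$. So the ``self-consistent inequality'' is not incompatible with $X \ge \epsilon$ once $\epsilon < X^*$, and the argument fails to deliver the lemma for arbitrarily small $\epsilon$.

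The paper sidesteps this circularity entirely by \emph{not} attempting to pin $\Lambda^n_{\alpha\beta}$ near $\Lambda_{\alpha\beta}$. Instead, it observes that $\Lambda^n_{\alpha\beta}$ a priori lies in a fixed compact set $\Upsilon$ (monotone, $C_f$-Lipschitz, starting at $0$), covers $\Upsilon$ by $M_\delta = O(\delta^{-2})$ balls $B_\delta(g_{\delta,k})$, and takes a union bound over the (polynomially many) centres. On each event $\{\Lambda^n_{\alpha e}\in B_\delta(g_{\delta,k}),\,\Lambda^n_{\alpha i}\in B_\delta(g_{\delta,l})\}$ one replaces the random time-change by the \emph{fixed} centre $g_{\delta,k}$, so the frozen processes $\breve x^j_{\alpha,\delta,k,l}$ are genuinely i.i.d.\ and Sanov's theorem gives the $n^{-1}\log$ rate; the error from freezing is controlled by $n^{-1}\sum_j \phi_{C_f\delta}(w^j)$, which concentrates exponentially below any threshold once $\delta$ is chosen small enough (Lemma~\ref{Lemma Appendix}). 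Crucially, here $\delta$ is chosen \emph{first} as a function of $\epsilon$ alone, so the modulus-of-continuity cost $h(\delta)$ can be made $<\epsilon$ without any feedback through $X$, and the union over $M_\delta = O(\delta^{-2})$ balls costs only an additive $O(\log(1/\delta))$ at exponential scale. This compactness-and-covering device is the key idea missing from your proposal.
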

\begin{proof}

 Let $\Upsilon \subset \mathcal{C}([0,T], \mathbb{R}\big)$ consist of all functions $H$ that are (i) Lipschitz, with Lipschitz constant less than or equal to $C_f$, and (ii) such that $H(0) = 0$, and (iii) if $t \geq s$ then necessarily $H(t) \geq H(s)$. Since $\Upsilon$ is compact (thanks to the Arzela-Ascoli Theorem), for any $\epsilon > 0$ it admits a covering of the form
 \begin{align}
 \Upsilon \subseteq \bigcup_{i=1}^{M_{\delta}} B_{\delta}\big(g_{\delta,i}\big) 
 \end{align}
 for some $ \lbrace g_{\delta,i} \rbrace_{1\leq i \leq M_{\delta}}\subset \Upsilon$, and writing
 \[
 B_{\delta}(g_{\delta,i}) = \big\lbrace f \in \Upsilon \; : \; \sup_{t\leq T} \big| f(t) - g_{\delta,i}(t) \big| < \delta \big\rbrace.
 \]
 One checks that the functions can be chosen such that $|M_{\delta}| \leq \rm{Const} \times \delta^{-2}$, for a constant that holds for all $\delta \leq 1$.   Write $\mathcal{V}_{\delta} = \big\lbrace g_{\delta,i} \big\rbrace_{1\leq i \leq M_{\delta}}$. Employing a union-of-events bound, we now find that
\begin{multline}
 \mathbb{P}\bigg( \text{For some }\alpha \in \lbrace e,i\rbrace , \; \; d_W\big( \tilde{\mu}^n_{\alpha} , \nu_{\alpha} \big) \geq \epsilon  \bigg) \\
 \leq \sum_{\alpha\in \lbrace e,i\rbrace  } \sum_{k,l=1}^{M_{\delta}} \mathbb{P}\bigg(  d_W\big( \tilde{\mu}^n_{\alpha} , \nu_{\alpha} \big) \geq \epsilon , \Lambda^n_{\alpha e} \in B_{\delta}(g_{\delta,k}) ,  \Lambda^n_{\alpha i} \in B_{\delta}(g_{\delta,l}) \bigg). \label{eq: decompose probability summation}
\end{multline}
Fixing $g_{\delta,k}$ and $g_{\delta,l}$, write
\begin{align}
\breve{W}^j_{\alpha\beta,\delta,k}(t) =& C_{\alpha\beta} w^j_{\alpha \beta}\big( g_{\delta,k}(t) \big) \\
 \breve{x}^j_{\alpha,\delta,k,l,t} =& u^j_{\alpha,0} - v^n_{\alpha}(0) - \tau_{\alpha}^{-1} \int_0^t \breve{x}^j_{\alpha,\delta,k,l,s} ds + \breve{W}^j_{\alpha e,\delta , k}(t) -  \breve{W}^j_{\alpha i , \delta , l}(t).
\end{align}
Writing
\begin{align}
\breve{\mu}^n_{\alpha,\delta,k,l} = n^{-1}\sum_{j\in I_n} \delta_{\breve{x}^j_{\alpha,\delta,k,l}} \in \mathcal{P}\big( \mathcal{C}([0,T],\mathbb{R}) \big),
\end{align}
the triangle inequality implies that
\begin{align}
d_W\big( \tilde{\mu}^n_{\alpha} , \nu_{\alpha} \big) \leq d_W\big( \tilde{\mu}^n_{\alpha} ,\breve{\mu}^n_{\alpha,\delta,k,l} \big) + d_W\big(\breve{\mu}^n_{\alpha,\delta,k,l}, \nu_{\alpha} \big) .
\end{align}
In light of \eqref{eq: decompose probability summation}, it thus suffices that we prove that, as long as $\delta$ is sufficiently small,
\begin{align}
\sup_{\alpha\in \lbrace e,i\rbrace  } \sup_{1\leq k,l \leq M_{\delta}}  \lsup{n} n^{-1}\log \mathbb{P}\bigg(  d_W\big( \tilde{\mu}^n_{\alpha} , \breve{\mu}^n_{\alpha,\delta,k,l} \big) \geq \epsilon / 2 , \Lambda^n_{\alpha e} \in B_{\delta}(g_{\delta,k}) ,  \Lambda^n_{\alpha i} \in B_{\delta}(g_{\delta,l}) \bigg) < 0. \label{eq: to prove gaussian approx 1} \\
\sup_{\alpha\in \lbrace e,i\rbrace  } \sup_{1\leq k,l \leq M_{\delta}}  \lsup{n} n^{-1}\log \mathbb{P}\bigg(  d_W\big( \nu_{\alpha} , \breve{\mu}^n_{\alpha,\delta,k,l} \big) \geq  \epsilon / 2 , \Lambda^n_{\alpha e} \in B_{\delta}(g_{\delta,k}) ,  \Lambda^n_{\alpha i} \in B_{\delta}(g_{\delta,l}) \bigg) < 0.\label{eq: to prove gaussian approx 2}
\end{align}
Starting with \eqref{eq: to prove gaussian approx 1}, one easily checks that (analogously to above )
\begin{align}
\sup_{t\leq T} \big|  \breve{x}^j_{\alpha,k,l,t} - \tilde{x}^j_{\alpha,t} \big| \leq 2 \sup_{t\leq T} \big| \breve{W}^j_{\alpha e,\delta , k}(t) - \tilde{W}^j_{\alpha e}(t)\big| + 2\sup_{t\leq T} \big|  \breve{W}^j_{\alpha i , \delta , l}(t) -  \tilde{W}^j_{\alpha i}(t) \big| . \label{eq: intermediate breve x tilde x bound}
\end{align}
It follows from \eqref{eq: intermediate breve x tilde x bound} that
\begin{align}
d_W\big( \tilde{\mu}^n_{\alpha} ,\breve{\mu}^n_{\alpha,\delta,k,l} \big) \leq \frac{2}{n}\sum_{j\in I_n}\bigg\lbrace \sup_{t\leq T} \big| \breve{W}^j_{\alpha e,\delta , k}(t) - \tilde{W}^j_{\alpha e}(t)\big| + \sup_{t\leq T} \big|  \breve{W}^j_{\alpha i , \delta , l}(t) -  \tilde{W}^j_{\alpha i}(t) \big| \bigg\rbrace .
\end{align}
If $ \Lambda^n_{\alpha e} \in B_{\delta}(g_{\delta,k}) $, then necessarily $\sup_{t\leq T} \big| \Lambda^n_{\alpha w}(t) - g_{\delta,k}(t) \big| \leq \delta$. This in turn implies that
\begin{align}
\sup_{t\leq T} \big| \breve{W}^j_{\alpha e,\delta , k}(t) - \tilde{W}^j_{\alpha e}(t)\big| \leq \sup_{s,t\leq S : |s-t | \leq C_f \delta} \big| w^j_{\alpha e}(t) - w^j_{\alpha e}(s) \big| 
\end{align}
We analogously find that if $ \Lambda^n_{\alpha i} \in B_{\delta}(g_{\delta,l}) $, then necessarily $\sup_{t\leq T} \big| \Lambda^n_{\alpha i}(t) - g_{\delta,l}(t) \big| \leq \delta$. This in turn implies that
\begin{align}
\sup_{t\leq T} \big| \breve{W}^j_{\alpha i,\delta , l}(t) - \tilde{W}^j_{\alpha i}(t)\big| \leq \sup_{s,t\leq S : |s-t | \leq C_f \delta} \big| w^j_{\alpha i}(t) - w^j_{\alpha i}(s) \big| .
\end{align}
In summary, we find that if $ \Lambda^n_{\alpha e} \in B_{\delta}(g_{\delta,k}) $ and $ \Lambda^n_{\alpha i} \in B_{\delta}(g_{\delta,l}) $, then necessarily
\begin{align}
d_W\big( \tilde{\mu}^n_{\alpha} ,\breve{\mu}^n_{\alpha,\delta,k,l} \big) \leq \frac{2}{n} \sum_{j\in I_n} \bigg\lbrace  \sup_{s,t\leq S : |s-t | \leq C_f \delta} \big| w^j_{\alpha e}(t) - w^j_{\alpha e}(s) \big|  +  \sup_{s,t\leq S : |s-t | \leq C_f \delta} \big| w^j_{\alpha i}(t) - w^j_{\alpha i}(s) \big| \bigg\rbrace .
\end{align}
We thus find that, by taking $\delta$ to be sufficiently small,
\begin{multline}
 \lsup{n} n^{-1}\log \mathbb{P}\bigg(  d_W\big( \tilde{\mu}^n_{\alpha} , \breve{\mu}^n_{\alpha,\delta,k,l} \big) \geq \epsilon / 2, \Lambda^n_{\alpha e} \in B_{\delta}(g_{\delta,k}) ,  \Lambda^n_{\alpha i} \in B_{\delta}(g_{\delta,l}) \bigg) \\
 \leq  \lsup{n} n^{-1}\log \bigg\lbrace M_{\delta}^2 \mathbb{P}\bigg(  \frac{2}{n} \sum_{j\in I_n} \bigg\lbrace  \sup_{s,t\leq S : |s-t | \leq C_f \delta} \big| w^j_{\alpha e}(t) - w^j_{\alpha e}(s) \big|\\  +  \sup_{s,t\leq S : |s-t | \leq C_f \delta} \big| w^j_{\alpha i}(t) - w^j_{\alpha i}(s) \big| \bigg\rbrace \geq \epsilon / 4 \bigg) \bigg\rbrace  < 0
\end{multline}
where the last step holds as long as $\delta$ is sufficiently small, thanks to Lemma \ref{Lemma Appendix} in the Appendix.

It remains to prove \eqref{eq: to prove gaussian approx 2}. To this end, define $\nu_{\alpha,\delta,k,l} \in \mathcal{P}\big( \mathcal{C}([0,T],\mathbb{R}) \big)$ to be the law of $\breve{x}^1_{\alpha,k,l}$. Thanks to the triangle inequality,
% the centered Gaussian variable $\big( r_{\alpha,t} \big)_{t\leq T}$, which is such that for independent Brownian Motions $W_{\alpha e}(t)$ and $W_{\alpha i}(t)$,
%\begin{align}
%r_{\alpha,t} =  r_{\alpha,0} - \tau_{\alpha}^{-1} \int_0^t r_{\alpha,s} ds + C_{\alpha e}  W_{\alpha e}\big( g_{\delta,k}(t) \big) - C_{\alpha i}  W_{\alpha i}\big( g_{\delta,l}(t) \big) .
% \end{align}
% The initial condition $r_{\alpha,0}$ is centered Gaussian, with variance the same as the covariance of $\mu_{\alpha,ini}$. 
 \begin{align}
 d_W\big( \nu_{\alpha} , \breve{\mu}^n_{\alpha,\delta,k,l} \big) \leq d_W\big( \nu_{\alpha} , \nu_{\alpha,\delta,k,l} \big) + d_W\big( \nu_{\alpha,\delta,k,l} ,  \breve{\mu}^n_{\alpha,\delta,k,l} \big).
 \end{align}
In order that \eqref{eq: to prove gaussian approx 2} holds, it thus suffices that we prove that for arbitrary $\tilde{\epsilon} > 0$, as long as $\delta$ is small enough,
\begin{align}
\sup_{\alpha\in \lbrace e,i\rbrace  } \sup_{1\leq k,l \leq M_{\delta}}   \bigg\lbrace d_W\big( \nu_{\alpha} , \nu_{\alpha,\delta,k,l} \big) \chi\big\lbrace \Lambda^n_{\alpha e} \in B_{\delta}(g_{\delta,k}) ,  \Lambda^n_{\alpha i} \in B_{\delta}(g_{\delta,l}) \big\rbrace \bigg\rbrace \leq .\label{eq: to prove gaussian approx 3} \\
\sup_{\alpha\in \lbrace e,i\rbrace  }  \lsup{n} \sup_{1\leq k,l \leq M_{\delta}} n^{-1}\log \mathbb{P}\bigg(  d_W\big( \nu_{\alpha,\delta,k,l} ,  \breve{\mu}^n_{\alpha,\delta,k,l} \big) \geq \tilde{\epsilon} , \Lambda^n_{\alpha e} \in B_{\delta}(g_{\delta,k}) ,  \Lambda^n_{\alpha i} \in B_{\delta}(g_{\delta,l}) \bigg) < 0.\label{eq: to prove gaussian approx 4}
\end{align}
Now \eqref{eq: to prove gaussian approx 3} follows from the definitions. This is because $\nu_{\alpha}$ and $\nu_{\alpha,\delta,k,l}$ are both centered and Gaussian. The covariances must also converge uniformly as $\delta \to 0$, as long as $\Lambda^n_{\alpha e} \in B_{\delta}(g_{\delta,k})$ and $\Lambda^n_{\alpha i} \in B_{\delta}(g_{\delta,l})$.

\eqref{eq: to prove gaussian approx 4} is a consequence of Sanov's Theorem \cite{Dembo1998}. This is because $\lbrace \breve{x}^j_{\alpha,k,l} \rbrace_{j\in I_n}$ are independent and identically distributed, each with probability law $\nu_{\alpha,k,l}$. 
\end{proof}
We next prove Lemma \ref{Lemma bound v n e t v n i t}.
\begin{proof}
Define
\begin{align}
\zeta^n_s = \sup_{s \leq t} \big\lbrace | v^n_{e}(s) -  \bar{v}_e(s) |
+ | v^n_{i}(s) -  \bar{v}_{i}(s) | + L^n_s \big\rbrace 
\end{align}
It follows from Lemmas \ref{Lemma intermedaite L n t bound} and \ref{Lemma bound v norm difference} that there exists a constant $a_T > 0$ such that for all $t\leq T$,
\[
\zeta_t \leq a_T \int_0^t \zeta_s ds + a_T n^{-1/2}+ a_T  d_W\big( \hat{\mu}^n_{e,t} , \nu_{e,t} \big)  + a_T  d_W\big( \hat{\mu}^n_{i,t} , \nu_{i,t} \big)
\]
We can thus apply Gronwall's Inequality to obtain the Lemma.
\end{proof}
\begin{lemma} \label{Lemma intermedaite L n t bound}
There exists a constant $C_T > 0$ such that for all $t\leq T$ and all $n\geq 1$,
\begin{align}
L^n_t \leq C_T t \sup_{s \leq t}\big( \big| v^n_e(s) - \bar{v}_e(s) \big| +\big| v^n_i(s) - \bar{v}_i(s) \big| \big) +   t d_W\big( \hat{\mu}^n_{e,t} , \nu_{e,t} \big)  +t d_W\big( \hat{\mu}^n_{i,t} , \nu_{i,t} \big)
\end{align}
\end{lemma}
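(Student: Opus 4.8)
The plan is to use that both $\nu^n_\alpha$ and its deterministic large-$n$ limit $\nu_\alpha$ (the centered Gaussian process with marginals $N(0,K_\alpha(t))$) are Ornstein--Uhlenbeck-type processes sharing the same initial law and differing only through the (random, resp.\ deterministic) time-change that drives their martingale parts. Set $R^n_\alpha(s):=C_{\alpha e}^2 Q^n_{\alpha e}(s)+C_{\alpha i}^2 Q^n_{\alpha i}(s)$; conditionally on $\{Q^n_{\alpha\beta}(s)\}_{s\le T}$ the process of law $\nu^n_\alpha$ solves the linear SDE $dx=-\tau_\alpha^{-1}x\,ds+dN^n_\alpha$, where $N^n_\alpha$ is a continuous Gaussian martingale with $\langle N^n_\alpha\rangle_s=\int_0^s R^n_\alpha(u)\,du$. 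Variation of constants together with the It\^o isometry then gives, for $r\le s$,
\[
\mathbb{E}^{\nu^n_\alpha}[x_r x_s]=e^{-(r+s)/\tau_\alpha}\,\mathbb{E}^{\nu^n_\alpha}[x_0^2]+\int_0^{r} e^{-(r+s-2u)/\tau_\alpha}\,R^n_\alpha(u)\,du ,
\]
and the identical identity holds for $\nu_\alpha$ with $R^n_\alpha$ replaced by $R_\alpha(s):=C_{\alpha e}^2 Q_{\alpha e}(s)+C_{\alpha i}^2 Q_{\alpha i}(s)=\Sigma_\alpha(\bar{v}_e(s),\bar{v}_i(s),K_e(s),K_i(s))$, the covariance rate of the limiting process, which is precisely the forcing term in the covariance ODEs of Lemma \ref{Hypothesis Manifold Evolution}. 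Since $\nu^n_\alpha$ and $\nu_\alpha$ have the same initial law the $e^{-(r+s)/\tau_\alpha}\mathbb{E}[x_0^2]$ terms cancel, and since $e^{-(r+s-2u)/\tau_\alpha}\le 1$ on $\{u\le r\}$ we obtain
\[
\sup_{r,s\le t}\big|\mathbb{E}^{\nu^n_\alpha}[x_r x_s]-\mathbb{E}^{\nu_\alpha}[x_r x_s]\big|\le\int_0^t\big|R^n_\alpha(u)-R_\alpha(u)\big|\,du .
\]
Summing over $\alpha\in\{e,i\}$, the lemma is reduced to a pointwise bound on $|R^n_\alpha-R_\alpha|$.

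The pointwise estimate on $|R^n_\alpha(s)-R_\alpha(s)|$ is the step demanding the most care. Here $Q^n_{\alpha\beta}(s)=n^{-1}\sum_{k\in I_n}f_{\alpha\beta}\big(v^n_\beta(s)+x^k_{\beta,s}\big)$ is the integral of the map $y\mapsto f_{\alpha\beta}(v^n_\beta(s)+y)$ against the empirical law $\hat\mu^n_{\beta,s}$ of the fluctuations $\{x^k_{\beta,s}\}_{k\in I_n}$, while $Q_{\alpha\beta}(s)=\int_{\mathbb{R}}\rho(K_\beta(s),x)\,f_{\alpha\beta}(\bar{v}_\beta(s)+x)\,dx$ is the integral of $y\mapsto f_{\alpha\beta}(\bar{v}_\beta(s)+y)$ against $N(0,K_\beta(s))$, the time-$s$ marginal $\nu_{\beta,s}$ of $\nu_\beta$. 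Inserting the hybrid quantity $\int_{\mathbb{R}}\rho(K_\beta(s),x)\,f_{\alpha\beta}(v^n_\beta(s)+x)\,dx$ decomposes $Q^n_{\alpha\beta}(s)-Q_{\alpha\beta}(s)$ into (i) a \emph{measure-displacement} term, the difference of the integral of one and the same globally Lipschitz function (Lipschitz constant $L_f$, the common Lipschitz constant of the $f_{\alpha\beta}$) under $\hat\mu^n_{\beta,s}$ versus under $\nu_{\beta,s}$, which by Kantorovich--Rubinstein duality, and because the evaluation map $\gamma\mapsto\gamma(s)$ is $1$-Lipschitz for $\norm{\cdot}_t$, is at most $L_f\,d_W(\hat\mu^n_{\beta,s},\nu_{\beta,s})$; and (ii) a \emph{mean-shift} term $\int_{\mathbb{R}}\rho(K_\beta(s),x)\big(f_{\alpha\beta}(v^n_\beta(s)+x)-f_{\alpha\beta}(\bar{v}_\beta(s)+x)\big)\,dx$, which is at most $L_f\,|v^n_\beta(s)-\bar{v}_\beta(s)|$ again by Lipschitzness. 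Hence $|R^n_\alpha(s)-R_\alpha(s)|\le C_{max}^2 L_f\sum_{\beta\in\{e,i\}}\big(|v^n_\beta(s)-\bar{v}_\beta(s)|+d_W(\hat\mu^n_{\beta,s},\nu_{\beta,s})\big)$ with $C_{max}:=\max_{\alpha,\beta\in\{e,i\}}C_{\alpha\beta}$.

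Assembling the two ingredients, integrate in $s\in[0,t]$ and sum over $\alpha\in\{e,i\}$:
\[
L^n_t\le 2C_{max}^2 L_f\int_0^t\sum_{\beta\in\{e,i\}}\big(|v^n_\beta(s)-\bar{v}_\beta(s)|+d_W(\hat\mu^n_{\beta,s},\nu_{\beta,s})\big)\,ds .
\]
Bound $\int_0^t|v^n_\beta(s)-\bar{v}_\beta(s)|\,ds\le t\sup_{s\le t}\big(|v^n_e(s)-\bar{v}_e(s)|+|v^n_i(s)-\bar{v}_i(s)|\big)$ for the first group of terms, and for the Wasserstein terms use that the restriction of a path law on $[0,s]$ to $[0,s']$ is $1$-Lipschitz for $s'\le s$, so that $s\mapsto d_W(\hat\mu^n_{\beta,s},\nu_{\beta,s})$ is nondecreasing and hence $\int_0^t d_W(\hat\mu^n_{\beta,s},\nu_{\beta,s})\,ds\le t\,d_W(\hat\mu^n_{\beta,t},\nu_{\beta,t})$. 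This gives the stated inequality with $C_T:=2C_{max}^2 L_f$ (enlarge $C_T$ if one insists on coefficient $1$ in front of the two Wasserstein terms).

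The one genuinely delicate point is the reconciliation of the two Gaussian laws in the first step: one has to be careful that $\nu_\alpha$ is exactly the deterministic limit of $\nu^n_\alpha$ --- same initial law, limiting covariance rate $R_\alpha=\Sigma_\alpha$. The consistency of the two roles of $\Sigma_\alpha$, as quadratic-variation rate of the limiting martingale and as the forcing of the covariance ODE, is what makes the difference of covariance functions a \emph{linear} functional of $R^n_\alpha-R_\alpha$ with a kernel bounded by $1$, and this is precisely what supplies the prefactor $t$. There is also some light bookkeeping in passing between path laws on $[0,t]$ and their time-$s$ marginals. Everything else --- the Gaussian covariance computation and the split of $R^n_\alpha-R_\alpha$ into a Wasserstein-controlled displacement term and a Lipschitz-controlled mean-shift term --- is routine, and, notably, no Gronwall estimate is required at this stage; Gronwall enters only when the present lemma is combined with Lemma \ref{Lemma bound v norm difference} in the proof of Lemma \ref{Lemma bound v n e t v n i t}.
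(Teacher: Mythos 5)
Your proof is correct and takes essentially the same route as the paper: both compare the covariance functions of the Ornstein--Uhlenbeck laws $\nu^n_\alpha$ and the deterministic limit (your $\nu_\alpha$, the paper's $\mu_\alpha$), reduce the difference to a pointwise bound on the difference of diffusion rates $R^n_\alpha-R_\alpha$, and then split $Q^n_{\alpha\beta}-Q_{\alpha\beta}$ into a Lipschitz mean-shift term controlled by $|v^n_\beta-\bar v_\beta|$ and a Kantorovich--Rubinstein term controlled by the Wasserstein distance. Two small presentational differences are worth flagging, both in your favor: (i) you obtain the covariance estimate in one stroke from the variation-of-constants formula, using $e^{-(r+s-2u)/\tau_\alpha}\le 1$ on $\{u\le r\le s\}$, whereas the paper derives a diagonal ODE bound by Gronwall and then a separate off-diagonal argument via $\frac{d}{ds}\mathbb{E}[x_tx_s]=-\tau_\alpha^{-1}\mathbb{E}[x_tx_s]$; and (ii) your pivot term $\int\rho(K_\beta(s),x)f_{\alpha\beta}(v^n_\beta(s)+x)\,dx$ uses the \emph{limiting} variance $K_\beta(s)$, so the decomposition closes with only the two terms present in the lemma statement, while the paper's pivot uses an empirical variance $K^n_{\alpha\beta}$ and accordingly picks up an extra $|K^n_{\alpha\beta}(t)-K_{\alpha\beta}(t)|$ term that must then be re-absorbed into the Wasserstein distance. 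Your observation that $s\mapsto d_W(\hat\mu^n_{\beta,s},\nu_{\beta,s})$ is nondecreasing (because restriction of path laws from $[0,t]$ to $[0,s]$ is $1$-Lipschitz under the sup norm) is the right justification for replacing $\int_0^t d_W\,ds$ by $t\,d_W(\cdot,\cdot)$; the paper leaves this step implicit. The only cosmetic mismatch is that your constant $2C_{\max}^2L_f$ multiplies the Wasserstein terms whereas the lemma as stated has coefficient $1$ there; as you note, this is harmless since the lemma is only invoked up to a constant.
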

\begin{proof}
Define
\begin{align}
\tilde{L}^n_t = \sup_{s \leq t} \big| \mathbb{E}^{\nu^n_e}[  x_s^2 ] - \mathbb{E}^{\mu_e}[x_s^2] \big| + \sup_{s \leq t} \big| \mathbb{E}^{\nu^n_i}[ x_s^2 ] - \mathbb{E}^{\mu_i}[ x_s^2] \big|.
\end{align}
We first claim that there exists a constant $C > 0$ such that for all $t\leq T$,
\begin{align} \label{eq: tilde L n t claim}
\tilde{L}^n_t \leq C t \sup_{s\leq t}  \sup_{\alpha,\beta \in \lbrace e,i \rbrace } \bigg\lbrace \bigg| Q^n_{\alpha\beta}(s) -
C^2_{\alpha\beta } \int_{\mathbb{R}} \rho(K_{\alpha\beta}(s),x) f_{\alpha \beta} ( \bar{v}_{\beta}(s) + x) dx \bigg| \bigg\rbrace .
\end{align}
Indeed standard identities for the Ornstein-Uhlenbeck process dictate that for $\alpha \in \lbrace e,i \rbrace$,
\begin{align}
\frac{d}{dt}\mathbb{E}^{\nu^n_\alpha}[ x_t^2 ] = -\frac{2}{\tau_{\alpha}}\mathbb{E}^{\nu^n_\alpha}[ x_t^2 ] + C_{\alpha e}^2 Q^n_{\alpha e}(t) +  C_{\alpha i}^2 Q^n_{\alpha i}(t). \label{eq: comparison 1}
\end{align}
We similarly find that
\begin{multline}
\frac{d}{dt}\mathbb{E}^{\mu_\alpha}[ x_t^2 ] = -\frac{2}{\tau_{\alpha}}\mathbb{E}^{\mu_\alpha}[ x_t^2 ] + C_{\alpha e}^2  \int_{\mathbb{R}} \rho(K_{\alpha e}(s),x) f_{\alpha e} ( \bar{v}_{e}(s) + x) dx\\ +  C_{\alpha i}^2\int_{\mathbb{R}} \rho(K_{\alpha i}(s),x) f_{\alpha i} ( \bar{v}_{i}(s) + x) dx .\label{eq: comparison 2}
\end{multline}
\eqref{eq: tilde L n t claim} now follows from an application of Gronwall's Inequality to \eqref{eq: comparison 1} and \eqref{eq: comparison 2}.

For any $s > t$, standard results on the Ornstein-Uhlenbeck process imply that for each $\alpha \in \lbrace e,i\rbrace$,
\begin{align}
\frac{d}{ds}\mathbb{E}^{\nu^n_\alpha}[ x_t x_s ] =& - \tau_{\alpha}^{-1}\mathbb{E}^{\nu^n_\alpha}[ x_t x_s ] \label{eq: tilde L n claim 3} \\
\frac{d}{ds}\mathbb{E}^{\mu_\alpha}[ x_t x_s ] =& - \tau_{\alpha}^{-1} \mathbb{E}^{\mu_\alpha}[ x_t x_s ] .\label{eq: tilde L n claim 4} 
\end{align}
\eqref{eq: tilde L n claim 3}  and \eqref{eq: tilde L n claim 4} imply that
\begin{align}
\tilde{L}^n_t = L^n_t,
\end{align}
and therefore \eqref{eq: tilde L n t claim} implies that
\begin{align} \label{eq: L n t temporary inequality}
L^n_t \leq C t \sup_{s\leq t} \sup_{\alpha,\beta \in \lbrace e,i \rbrace }\bigg\lbrace \bigg| Q^n_{\alpha\beta}(s) -
C^2_{\alpha\beta } \int_{\mathbb{R}} \rho(K_{\alpha\beta}(s),x) f_{\alpha \beta} ( \bar{v}_{\beta}(s) + x) dx \bigg| \bigg\rbrace .
\end{align}
%One therefore finds that
%\begin{align} 
%L^n_t \leq C t \sup_{s\leq t}\sup_{\alpha,\beta \in \lbrace e,i \rbrace } \bigg\lbrace \bigg| Q^n_{\alpha\beta}(s) -
%C^2_{\alpha\beta } \int_{\mathbb{R}} \rho(K_{\alpha\beta}(s),x) f_{\alpha \beta} ( \bar{v}_{\beta}(s) + x) dx \bigg| \bigg\rbrace .
%\end{align}
Substituting definitions, it holds that there exists a constant $c > 0$ such that for all $\alpha,\beta \in \lbrace e,i \rbrace$,
\begin{multline}
\bigg| Q^n_{\alpha \beta}(t) -  \int_{\mathbb{R}} \rho(K_{\alpha\beta}(t),x) C^2_{\alpha\beta } f_{\alpha \beta} ( \bar{v}_{\beta}(t) + x) dx \bigg| \leq
\bigg| Q^n_{\alpha \beta}(t) -  \int_{\mathbb{R}} \rho(K^n_{\alpha\beta}(t),x) C^2_{\alpha\beta } f_{\alpha \beta} ( \bar{v}_{\beta}(t) + x) dx \bigg| \\+  C^2_{\alpha\beta }\bigg| \int_{\mathbb{R}}\big\lbrace \rho(K^n_{\alpha\beta}(t),x) - \rho(K_{\alpha\beta}(t),x) \big\rbrace  f_{\alpha \beta} ( \bar{v}_{\beta}(t) + x) dx \bigg|
\\ \leq c \big| \bar{v}_{\beta}(t) - v^n_{\beta}(t) \big| + c d_W\big( \hat{\mu}^n_{\beta,t} , \nu^n_{\beta,t} \big) + c \big| K^n_{\alpha\beta}(t) - K_{\alpha\beta}(t) \big|. \label{eq: bound Q n t difference}
\end{multline}
The Lemma now follows from \eqref{eq: L n t temporary inequality} and \eqref{eq: bound Q n t difference}.
%There exists a constant $c > 0$ such that for all $t\leq T$ and all $n\geq 1$,
%\begin{multline}
%\bigg| Q^n_{\alpha \beta}(t) -  \int_{\mathbb{R}} \rho(K_{\beta}(t),x) C^2_{\alpha\beta } f_{\alpha \beta} ( \bar{v}_{\beta} + x) dx \bigg| \\ \leq  c \bigg\lbrace \big| v^n_e(t) - \bar{v}_e(t) \big| + \big| v^n_i(t) - \bar{v}_i(t) \big| + d_W\big( \hat{\mu}^n_{\beta,t} , \nu_{\beta,t} \big) \bigg\rbrace  \label{Eq bound Q n difference}
%\end{multline}
\end{proof}

We next prove Lemma \ref{Lemma bound v norm difference}.
\begin{proof}
We compute that
\begin{multline}
\frac{d}{dt}\big( q^n_e(t)^2 + q^n_i(t)^2 \big) =   q^n_e(t) \bigg( -\tau_e^{-1}v^n_e(t)+ \frac{d\bar{v}_e(t)}{dt}    \bigg) + q^n_i(t) \bigg( -\tau_i^{-1}v^n_i(t) + \frac{d\bar{v}_i(t)}{dt} \bigg) \\+ \sqrt{n} \big( q^n_{e}(t) F^n_{e,t}-  q^n_{i}(t) F^n_{i,t} \big) 
\end{multline}
It follows from Lemma \ref{Hypothesis Manifold Evolution} that there must exist a constant $\tilde{c}_T$ such that for all $t\leq T$,
\begin{align*}
\bigg| -\tau_e^{-1}v^n_e(t) + \frac{d\bar{v}_e(t)}{dt}  \bigg| &\leq \tilde{c}_T \\
\bigg| -\tau_i^{-1}v^n_i(t) + \frac{d\bar{v}_i(t)}{dt}   \bigg| &\leq \tilde{c}_T.
\end{align*}
We can now employ Lemma \ref{Lemma Balanced Manifold} to conclude the Lemma.
%\begin{align}
%\frac{d}{dt}\big( q^n_e(t)^2 + q^n_i(t)^2 \big) \leq c_T \big( q^n_e(t)^2 + q^n_i(t)^2 \big)^{1/2} - \epsilon_T \sqrt{n} \big( q^n_{e}(t)^2 + q^n_i(t)^2 \big)
%\end{align}
%Since $q^n_e(0) = q^n_i(0) = 0$, we find that for all $t\leq T \wedge \tilde{\tau}^n_{\epsilon_T}$
%\begin{align}
%q^n_e(t)^2 + q^n_i(t)^2 \leq \frac{ \tilde{c}_T}{n\epsilon_T}.
%\end{align}
%We can thus define $c_T = \tilde{c}_T / \epsilon_T$.
%\begin{equation}
%q^n_{e}(t)^2 + q^n_i(t)^2 \leq \epsilon_T
%\end{equation}
%\begin{align}
%\frac{d}{dt} \big( q^n_e(t)^2 + q^n_i(t)^2 \big)  \leq   c_T \big( q^n_e(t)^2 + q^n_i(t)^2 \big)^{1/2} - \sqrt{n} \epsilon_T  \big( q^n_e(t)^2 + q^n_i(t)^2 \big).
%\end{align}
%As long as $\frac{ \tilde{c}_T}{n\epsilon_T} \leq \delta_T$ (which must always hold for large enough $n$) we find that $\tau_T =T$. This implies the corollary.
\end{proof}
 
\begin{lemma} \label{Lemma Balanced Manifold}
There exist constants $c_T , \epsilon_T  > 0$  such that for all $n \geq 1$, if  %and all  $t\leq T \wedge \tilde{\tau}^n_{\epsilon_T}$,  if 
\begin{equation}
q^n_{e}(t)^2 + q^n_i(t)^2 \leq \epsilon_T
\end{equation}
then necessarily
 %if $d_W( \hat{\mu}^n_{e,t} , \gamma_{e,t} ) \leq \delta_T$, $d_W( \hat{\mu}^n_{i,t} , \gamma_{i,t} ) \leq \delta_T$, $|v^n_{e}(t) - v_e(t) | \leq \delta_T$ and $|v^n_{i}(t) - v_i(t) | \leq \delta_T$, then necessarily
\begin{multline}
q^n_{e}(t) F^n_{e,t}-  q^n_{i}(t) F^n_{i,t} \leq - \frac{\zeta_t}{2}  \big( q^n_{e}(t)^2 + q^n_i(t)^2 \big) + c_T  \big\lbrace d_W\big(\hat{\mu}^n_{e,t} , \mu_{e,t} \big) \\+ d_W\big(\hat{\mu}^n_{i,t} , \mu_{i,t} \big) \big\rbrace \big( q^n_{e}(t)^2 + q^n_i(t)^2 \big)^{1/2}. \label{eq: lip decay delta n}
\end{multline}
where
\begin{align}
 F^n_{e,t} := n^{-1}\sum_{j\in I_n} \bigg( C_{ee} f_{ee}(u^j_{e,t}) - C_{ei} f_{ei}(u^j_{i,t}) \bigg) \\
 F^n_{i,t} := n^{-1}\sum_{j\in I_n} \bigg( C_{ie} f_{ie}(u^j_{e,t}) - C_{ii} f_{ii}(u^j_{i,t}) \bigg) 
\end{align}
\end{lemma}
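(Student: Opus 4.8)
The plan is to linearise each $F^n_{\alpha,t}$ about the manifold point $\big(\bar v_e(t),\bar v_i(t),K_e(t),K_i(t)\big)\in\mathcal{U}$ and then exploit the strict linear stability that is built into the definition of $\mathcal{U}$. First I would insert the decomposition $u^j_{\alpha,t}=v^n_\alpha(t)+x^j_{\alpha,t}$, so that
\[
F^n_{\alpha,t}= C_{\alpha e}\,\mathbb{E}^{\hat\mu^n_{e,t}}\big[f_{\alpha e}(v^n_e(t)+\cdot)\big]-C_{\alpha i}\,\mathbb{E}^{\hat\mu^n_{i,t}}\big[f_{\alpha i}(v^n_i(t)+\cdot)\big].
\]
Because each $f_{\alpha\beta}$ is globally Lipschitz and $\mu_{\beta,t}$ is the centred Gaussian of variance $K_\beta(t)$ (whose density is exactly $\rho(K_\beta(t),\cdot)$), replacing $\hat\mu^n_{\beta,t}$ by $\mu_{\beta,t}$ costs at most $\mathrm{Lip}(f_{\alpha\beta})\,d_W\big(\hat\mu^n_{\beta,t},\mu_{\beta,t}\big)$ and turns the right-hand side into $F_\alpha\big(v^n_e(t),v^n_i(t),K_e(t),K_i(t)\big)$. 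Hence there is a constant $c$ with
\[
\Big|F^n_{\alpha,t}-F_\alpha\big(v^n_e(t),v^n_i(t),K_e(t),K_i(t)\big)\Big|\le c\,\big( d_W(\hat\mu^n_{e,t},\mu_{e,t})+d_W(\hat\mu^n_{i,t},\mu_{i,t})\big).
\]

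Next I would Taylor-expand $F_\alpha$ in its velocity arguments about $\big(\bar v_e(t),\bar v_i(t),K_e(t),K_i(t)\big)$. Since this point lies in $\mathcal{U}$, Lemma~\ref{Hypothesis Manifold Evolution} gives $F_e=F_i=0$ there, so the zeroth-order term vanishes. The Gaussian kernel $\rho(K,\cdot)$ appearing in the definition of $F_\alpha$ is smooth with integrable derivatives of every order, so integration by parts transfers the $v$-derivatives onto $\rho$; consequently $F_\alpha$ is $C^2$ in $(v_e,v_i)$ with second derivatives bounded uniformly in $t\le T$, using only boundedness and Lipschitzness of $f_{\alpha\beta}$. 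Writing $\vec q^{\,n}=(q^n_e(t),q^n_i(t))$, this yields
\[
F^n_{\alpha,t}=\big(J_v(\bar v_e(t),\bar v_i(t),K_e(t),K_i(t))\,\vec q^{\,n}\big)_\alpha + R_\alpha + E_\alpha,\qquad |R_\alpha|\le c\,|\vec q^{\,n}|^2,
\]
with $E_\alpha$ the error from the previous paragraph.

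Substituting into $q^n_e(t)F^n_{e,t}-q^n_i(t)F^n_{i,t}$ produces a quadratic form $\langle \vec q^{\,n},M_t\vec q^{\,n}\rangle$, where $M_t=\mathrm{diag}(1,-1)\,J_v$ evaluated on the manifold, plus the remainder $q^n_e R_e-q^n_i R_i$, which is bounded by $c|\vec q^{\,n}|^3\le c\sqrt{\epsilon_T}\,|\vec q^{\,n}|^2$ — this is the only place the standing assumption $q^n_e(t)^2+q^n_i(t)^2\le\epsilon_T$ is used, and $\epsilon_T$ will be chosen small enough that this term is absorbed — plus $q^n_e E_e-q^n_i E_i$, which is bounded by $c\big(d_W(\hat\mu^n_{e,t},\mu_{e,t})+d_W(\hat\mu^n_{i,t},\mu_{i,t})\big)|\vec q^{\,n}|$ and is exactly the last term in the statement. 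It therefore remains to dominate $\langle\vec q^{\,n},M_t\vec q^{\,n}\rangle$ by $-\tfrac{\zeta_t}{2}|\vec q^{\,n}|^2$.

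The main obstacle is precisely this last, purely linear-algebraic, step: the linearisation $J_v$ is generically non-normal, so one cannot simply bound a Euclidean quadratic form by the spectral abscissa. I would handle it by a Lyapunov argument. Since $\big(\bar v_e(t),\bar v_i(t),K_e(t),K_i(t)\big)\in\mathcal{U}$ for every $t\le T$, continuity in $t$ together with compactness of $[0,T]$ gives $\sup_{t\le T}\zeta_t<0$, and one can build a family of positive-definite matrices $P_t$, continuous in $t$ and bounded above and below uniformly on $[0,T]$, with $P_tM_t+M_t^{\top}P_t\preceq \zeta_t P_t$; since $|\vec q|^2$ and $\langle\vec q,P_t\vec q\rangle$ are then equivalent with uniform constants, and since the downstream use in Lemma~\ref{Lemma bound v norm difference} (differentiating $q^n_e(s)^2+q^n_i(s)^2$ and applying Gronwall) is insensitive to such an equivalent change of norm, this delivers the asserted coercive bound after adjusting $c_T$ and shrinking $\epsilon_T$. (For sigmoidal $f_{\alpha\beta}$ one may instead use the sign pattern of $J_v$ — excitatory columns positive, inhibitory columns negative — to obtain a Euclidean bound directly; either route gives constants uniform over $[0,T]$, which is all the lemma requires.)
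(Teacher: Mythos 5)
Your proposal tracks the paper's proof closely in structure: both replace the empirical measures by the limiting Gaussians at a Wasserstein-distance cost, both use that $F_\alpha$ vanishes on the balanced manifold, and both linearize in $(v_e,v_i)$ to surface the Jacobian $J_v$ at the manifold (the paper via the intermediate-value theorem, you via second-order Taylor). On the remaining linear-algebraic step you have, in fact, spotted a genuine gap in the paper: the paper concludes $\vec{q}^{\,T} J_v \vec{q}\le -\zeta_t |\vec{q}|^2$ from the Hurwitz property of $J_v$ alone, which is false for non-normal matrices. Worse, for nondecreasing $f_{\alpha\beta}$ the Jacobian has sign pattern $\left(\begin{smallmatrix}+&-\\ +&-\end{smallmatrix}\right)$, and the larger eigenvalue of $(J_v+J_v^{T})/2$ equals $\tfrac12\big((a-d)+\sqrt{(a+d)^2+(c-b)^2}\,\big)\ge a=\partial_{v_e}F_e\ge 0$; so the Euclidean coercivity asserted in the lemma fails whenever $\partial_{v_e}F_e>0$, which is generic. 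Your parenthetical suggestion that the sign pattern rescues the Euclidean bound is therefore wrong. The Lyapunov route is the correct structural repair, but it yields coercivity only in a $t$-dependent weighted norm $\langle\vec q,P_t\vec q\rangle$, not the Euclidean one, and no amount of adjusting $c_T$ recovers the lemma as written; the lemma, and the Gronwall step in Lemma~\ref{Lemma bound v norm difference}, would have to be restated in the $P_t$-norm, with the $\dot P_t$ contribution controlled there.

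There is also a sign inconsistency you inherited without flagging: the LHS $q^n_e F^n_{e,t}-q^n_i F^n_{i,t}$ in the lemma led you to set $M_t=\mathrm{diag}(1,-1)J_v$, for which the Hurwitz hypothesis on $J_v$ says nothing (a Hurwitz $J_v$ easily gives an unstable $\mathrm{diag}(1,-1)J_v$, so the Lyapunov inequality $P_tM_t+M_t^{T}P_t\preceq\zeta_t P_t$ may have no positive-definite solution). The minus is a typo: differentiating $q^n_e(t)^2+q^n_i(t)^2$ produces $2\sqrt{n}\big(q^n_eF^n_{e,t}+q^n_iF^n_{i,t}\big)$, and the paper's own decomposition in its proof uses $+\,q^n_i\hat F_i$. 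With the corrected sign $M_t=J_v$, and your argument, subject to the norm caveat above, goes through.
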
 
\begin{proof}
Define the functions
\begin{align}
\hat{F}_e, \hat{F}_i  :& \mathbb{R}^2 \times \mathcal{P}(\mathbb{R}) \times  \mathcal{P}(\mathbb{R})   \mapsto \mathbb{R} \\
\hat{F}_e (v_{e} , v_i,  \gamma_e ,  \gamma_i ) =& C_{ee}  \int_{\mathbb{R}} f_{ee} ( v_e + x ) d\gamma_e(x) - C_{ei}  \int_{\mathbb{R}}   f_{ei}( v_i + x) d\gamma_i(x) \\
\hat{F}_i (v_{e} , v_i,  \gamma_e ,  \gamma_i ) =& C_{ie}  \int_{\mathbb{R}} f_{ie} ( v_e + x ) d\gamma_e(x) - C_{ii}  \int_{\mathbb{R}}   f_{ii}( v_i + x) d\gamma_i(x)   .
\end{align}
and we make the decomposition
\begin{multline}
q^n_{e}(t) F^n_{e,t}-  q^n_{i}(t) F^n_{i,t} = q^n_{e}(t) \big(\hat{F}_e (v^n_{e,t} , v^n_{i,t},  \mu_{e,t} ,  \mu_{i,t} ) - \hat{F}_e ( \bar{v}_{e,t} , \bar{v}_{i,t},  \mu_{e,t} ,  \mu_{i,t} ) \big)  \\
+ q^n_{i}(t) \big(\hat{F}_i (v^n_{e,t} , v^n_{i,t},  \mu_{e,t} ,  \mu_{i,t} ) - \hat{F}_i ( \bar{v}_{e,t} , \bar{v}_{i,t},  \mu_{e,t} ,  \mu_{i,t} ) \big) \\
+ q^n_e(t) \big(  \hat{F}_e (v^n_{e,t} , v^n_{i,t},  \hat{\mu}^n_{e,t} ,  \hat{\mu}^n_{i,t} ) - \hat{F}_e (v^n_{e,t} , v^n_{i,t},  \mu_{e,t} ,  \mu_{i,t} ) \big)\\
+ q^n_i(t) \big(  \hat{F}_i (v^n_{e,t} , v^n_{i,t},  \hat{\mu}^n_{e,t} ,  \hat{\mu}^n_{i,t} ) - \hat{F}_i (v^n_{e,t} , v^n_{i,t},  \mu_{e,t} ,  \mu_{i,t} ) \big), \label{eq: expand inner product q n F}
\end{multline}
noting that (by definition of the dynamics in Lemma \ref{Hypothesis Manifold Evolution}) 
\[
 \hat{F}_e ( \bar{v}_{e} , \bar{v}_i,  \mu_e ,  \mu_i ) =  \hat{F}_i ( \bar{v}_{e} , \bar{v}_i,  \mu_e ,  \mu_i )  = 0.
 \] 
Let the Jacobian $J: \mathbb{R}^2 \times \mathcal{P}(\mathbb{R})^2 \mapsto \mathbb{R}^{2 \times 2}$ be such that
\begin{align}
J(v_e,v_i,\mu_e , \mu_i) = \left(\begin{array}{c c}
\partial_{v_e} \hat{F}_e (v_{e} , v_i,  \mu_e ,  \mu_i ) & \partial_{v_i} \hat{F}_e (v_{e} , v_i,  \mu_e ,  \mu_i ) \\
\partial_{v_e} \hat{F}_i (v_{e} , v_i,  \mu_e ,  \mu_i ) & \partial_{v_i} \hat{F}_i (v_{e} , v_i,  \mu_e ,  \mu_i ) ,
\end{array}\right).
\end{align}
The Intermediate Value Theorem implies that
 \begin{multline}
 q^n_{e}(t) \big(\hat{F}_e (v^n_{e} , v^n_i,  \mu_e ,  \mu_i ) - \hat{F}_e ( \bar{v}_{e} , \bar{v}_i,  \mu_e ,  \mu_i ) \big)  \\
+ q^n_{i}(t) \big(\hat{F}_i (v^n_{e} , v^n_i,  \mu_e ,  \mu_i ) - \hat{F}_i ( \bar{v}_{e} , \bar{v}_i,  \mu_e ,  \mu_i ) \big) =  \vec{q}(t)^T J( \tilde{v}_e, \tilde{v}_i,\mu_e , \mu_i) \vec{q}(t)
\end{multline}
where for some $a\in [0,1]$, $\tilde{v}_e = a \bar{v}_e(t) + (1-a) v^n_e(t)$ and  $\tilde{v}_i = a \bar{v}_i(t) + (1-a) v^n_i(t)$. % $\mu_e = a\gamma_e(t) + (1-a) \hat{\mu}^n_{e,t} $ and $\mu_i = a\gamma_i(t) + (1-a) \hat{\mu}^n_{i,t} $.

Now, since the balanced manifold is by definition attracting, for the constant $\zeta_t$ defined just after Lemma \ref{Hypothesis Manifold Evolution}, 
\[
\vec{q}(t)^T J\big( \bar{v}_{e,t}, \bar{v}_{i,t} ,\mu_{e,t} , \mu_{i,t} \big) \vec{q}(t) \leq - \zeta_t \big( q_e(t)^2 + q_i(t)^2 \big).
\]
Now the map $(v_e,v_i) \mapsto J( v_e, v_i,\mu_{e,t}, \mu_{i,t} )$ is differentiable, which means that as long as $q_e(t)^2 + q_i(t)^2$ is sufficiently small it must be that the real parts of the eigenvalues of 
$J( \tilde{v}_e, \tilde{v}_i,\mu_{e,t} , \mu_{i,t})$ are less than or equal to $-\zeta_t / 2$. We thus find that as long as $q_e(t)^2 + q_i(t)^2$ is sufficiently small (which is ensured by taking $\epsilon_T$ sufficiently small), it must be that 
\begin{equation}
\vec{q}(t)^T J\big( \tilde{v}_{e,t}, \tilde{v}_{i,t} ,\mu_{e,t} , \mu_{i,t} \big) \vec{q}(t) \leq - \frac{\zeta_t}{2} \big( q_e(t)^2 + q_i(t)^2 \big).
\end{equation}
For the other terms in \eqref{eq: expand inner product q n F}, the definition of the Wasserstein Distance implies that there exists a constant $c$ such that 
\begin{align}
\big|  \hat{F}_e (v^n_{e,t} , v^n_{i,t},  \hat{\mu}^n_{e,t} ,  \hat{\mu}^n_{i,t} ) - \hat{F}_e (v^n_{e,t} , v^n_{i,t},  \mu_{e,t} ,  \mu_{i,t} ) \big| &\leq c \big( d_W\big( \hat{\mu}^n_{e,t} , \mu_{e,t} \big) +  d_W\big( \hat{\mu}^n_{i,t} , \mu_{i,t} \big) \big) \\
\big| \hat{F}_i (v^n_{e,t} , v^n_{i,t},  \hat{\mu}^n_{e,t} ,  \hat{\mu}^n_{i,t} ) - \hat{F}_i (v^n_{e,t} , v^n_{i,t},  \mu_{e,t} ,  \mu_{i,t} ) \big| &\leq c  \big( d_W\big( \hat{\mu}^n_{e,t} , \mu_{e,t} \big) +  d_W\big( \hat{\mu}^n_{i,t} , \mu_{i,t} \big) \big) .
\end{align}
We can now conclude that the Lemma holds.
%and let the eigenvalues be $\lambda_t(v_e,v_i ) , \hat{\lambda}_t(v_e,v_i )$, with 
%\[
%\mathfrak{Re}\big( \lambda(v_e,v_i,\mu_e , \mu_i)  \big) \leq \mathfrak{Re}\big( \hat{\lambda}(v_e,v_i,\mu_e , \mu_i) \big).
%\]
%We can also define the eigenvalues to be such that they are continuous functions of their arguments.
%
%Since $T < \eta$, our definition of $\eta$ implies that
%\begin{align}
%\inf_{t\leq T}\mathfrak{Re}\big( \hat{\lambda}(v_e(t),v_i(t),\gamma_e(t) , \gamma_i(t) ) \big) < 0
%\end{align}
%Since $\hat{\lambda}$ is a continuous function of its arguments, it must therefore hold that, as long as $\epsilon_T$ is sufficiently small,
%\begin{multline}\label{eq: eig uperr bound}
%\inf_{t\leq T} \inf\bigg\lbrace  \mathfrak{Re}\big( \hat{\lambda}(v_e(t),v_i(t),\mu_e(t) , \mu_i(t) ) \big)\; : \;  d_W( \hat{\mu}^n_{\alpha,t} , \mu_{\alpha,t} ) \leq \epsilon_T, \text{ and } \\  |v^n_{e}(t) - v_e(t) | \leq \delta_T  \text{ and } |v^n_{i}(t) - v_i(t) | \leq \delta_T \bigg\rbrace := -\epsilon_T < 0.
%\end{multline}
% In light of \eqref{eq: eig uperr bound}, this implies the lemma.
\end{proof}

%\section{Typical Assumptions}
%The next assumption ensures that excitement dominates inhibition at very low levels of network activity. 
%\begin{hypothesis}
%It is assumed that the following limits all exist, and satisfy the bounds
%\begin{align}
%\lim_{z\to -\infty} f_{ee}(z) &> 0 \\
%\lim_{z\to -\infty} f_{ie}(z) &= 0 \\
%\lim_{z\to-\infty} f_{ii}(z) &= 0 \\
%\lim_{z\to-\infty} f_{ei}(z) &= 0.
%\end{align}
%\end{hypothesis}

%A fundamental assumption is that inhibition dominates excitation at high levels of network activity. This ensures that the net level of neural activity does not blow up as $n\to\infty$ (recall that the scaling of the interaction strength is only $n^{-1/2}$).
%\begin{hypothesis}
%It is assumed that the following limits each exist, and satisfy the inequality
%\begin{align}%%\tau_e C_{ee}\lim_{z\to \infty} f_{ee}(z) <& \tau_i  C_{ie} \lim_{z\to \infty} f_{ie}(z) \label{eq: first domination inequality} \\
%C_{ei} f_{ei} \bigg(  \tau_i  C_{ie}\lim_{z\to \infty}  f_{ie}(z) -   \tau_i C_{ii} \sup_{z\in \mathbb{R}} f_{ii}(z) \bigg) >  C_{ee}\lim_{z\to \infty} f_{ee}(z)  .
%\end{align}
%\end{hypothesis}
 
\section{Numerical Simulations}

In this section we compare the analytic results to stochastic simulations. The firing-rate functions are assumed to be of the form
\begin{align}
f_{ee}(x) =& 0.5* \big( \tanh( c_{ee} x) + 2 \big) \\
f_{ei}(x) =& 0.5* \big( \tanh( c_{ie} x) + 2 \big)  \\
f_{ii}(x) =&   \tanh(c_{ii} x) + 1 \\
f_{ie}(x) =&   \tanh(c_{ei} x) + 1 
\end{align}
where $\lbrace c_{\alpha\beta} \rbrace_{\alpha,\beta \in \lbrace e,i \rbrace}$ are positive constants. It is essential that our initial conditions lie on the balanced manifold (as noted in Hypothesis \ref{Hypothesis Initial Conditions on the Balanced Manifold}). We fix the desired empirical covariances at time $0$ (written as $k_e > 0$ and $k_i > 0$). Let $m_e$ and $m_i$ be (respectively) the mean excitatory activity at time $0$, and the mean inhibitory activity at time $0$. We solve for $m_e$ and $m_i$ numerically by requiring that $F_e(m_e,m_i,k_e,k_i) = 0$ and $F_i(m_e,m_i,k_e,k_i) = 0$. Solutions for $m_e$ and $m_i$ were obtained using MATLAB's \texttt{fsolve} function. For the following examples, it was verified that the initial condition lies on the balanced manifold (i.e. the linearized dynamics is stable in a neighborhood of the initial condition).

For the stochastic simulations, the initial conditions were chosen to be such that
\begin{align}
u^j_{e,0} &= m_e + \sqrt{k_e} \tilde{x}^j_{e,0} \\
u^j_{i,0} &= m_i + \sqrt{k_i} \tilde{x}^j_{i,0} .
\end{align}
where $\lbrace \tilde{x}^j_{e,0} , \tilde{x}^j_{i,0} \rbrace_{j\in I_n}$ are independent samples from $\mathcal{N}(0,1)$.% Write

\medskip
In what follows, we use the parameter values:
$C_{ee} = 1, \, C_{ei} = 1.5, \, C_{ie} = 0.5, \, C_{ii} = 0.5, \, c_{ee} = 1, \, c_{ii} = 0.5, \, \tau_e = 1, \, \tau_i = 1, \, n = 10000, \, n_e = 5000, \, n_i = 5000 \,.$

\subsection{Results for $k_e=k_i=1$}

%\begin{center}
%    \begin{minipage}{0.49\textwidth}
%        \centering
%        \includegraphics[width=\textwidth]{./run3/F_n_t.png}
%        {\small $F^n_{e,t}$ and $F^n_{i,t}$ trajectories}
%        \label{fig:F_t}
%    \end{minipage}
%    \hfill
%    \begin{minipage}{0.49\textwidth}
%        \centering
%        \includegraphics[width=\textwidth]{./run3/tF_n_t.png}
%        {\small $\tilde F^n_{e,t}$ and $\tilde F^n_{i,t}$ trajectories}
%        \label{fig:tF_n_t}
%    \end{minipage}

%    \vspace{0.5cm}

    %\begin{minipage}{0.49\textwidth}
    %    \centering
   %     \includegraphics[width=\textwidth]{./run3/evastJ_n_.png}
%        {\small Eigenvalues of $\tilde J^n_t$}
%        \label{fig:evastJ_n}
%    \end{minipage}
%    \hfill
%    \begin{minipage}{0.49\textwidth}
%        \centering
%        \includegraphics[width=\textwidth]{./run3/evasJ_n_t.png}
 %       {\small Eigenvalues of $J_v(t)$}
 %       \label{fig:evasJ_n_t}
 %   \end{minipage}

%\vspace{0.5cm}

  %  \begin{minipage}{0.49\textwidth}
  %      \centering
  %      \includegraphics[width=\textwidth]{./run3/F_t.png}
  %      {\small $F^n_{e,t}$ and $F^n_{e,t}$ trajectories}
  %      \label{fig:F_n_t}
  %  \end{minipage}
    
%  \vspace{0.5cm}

%In the following, we compare deterministic versus stochastic trajectories.

\begin{center}
    \begin{minipage}{0.49\textwidth}
        \centering
        \includegraphics[width=\textwidth]{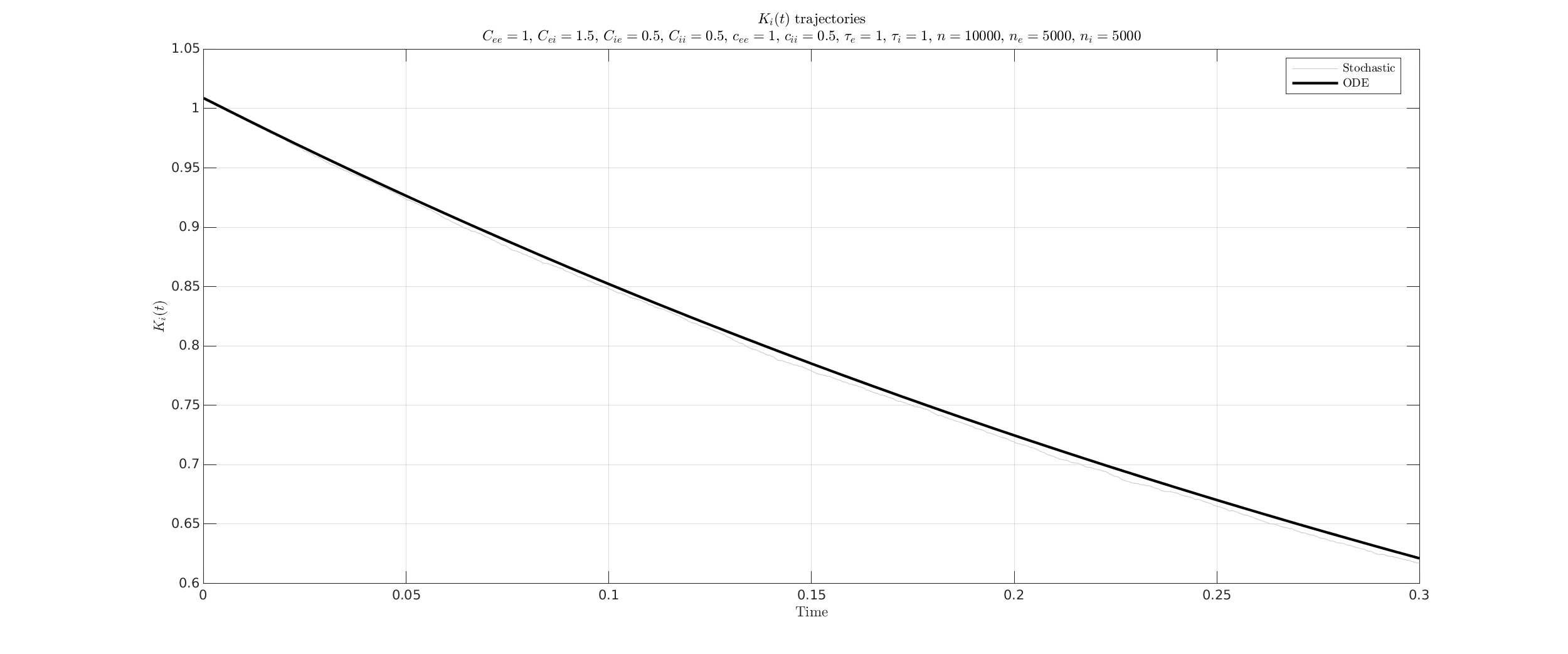}
        {\small $K^n_i(t)$ vs. $K_i(t)$}
        \label{fig:comparison_K_i}
    \end{minipage}
    \hfill
    \begin{minipage}{0.49\textwidth}
        \centering
        \includegraphics[width=\textwidth]{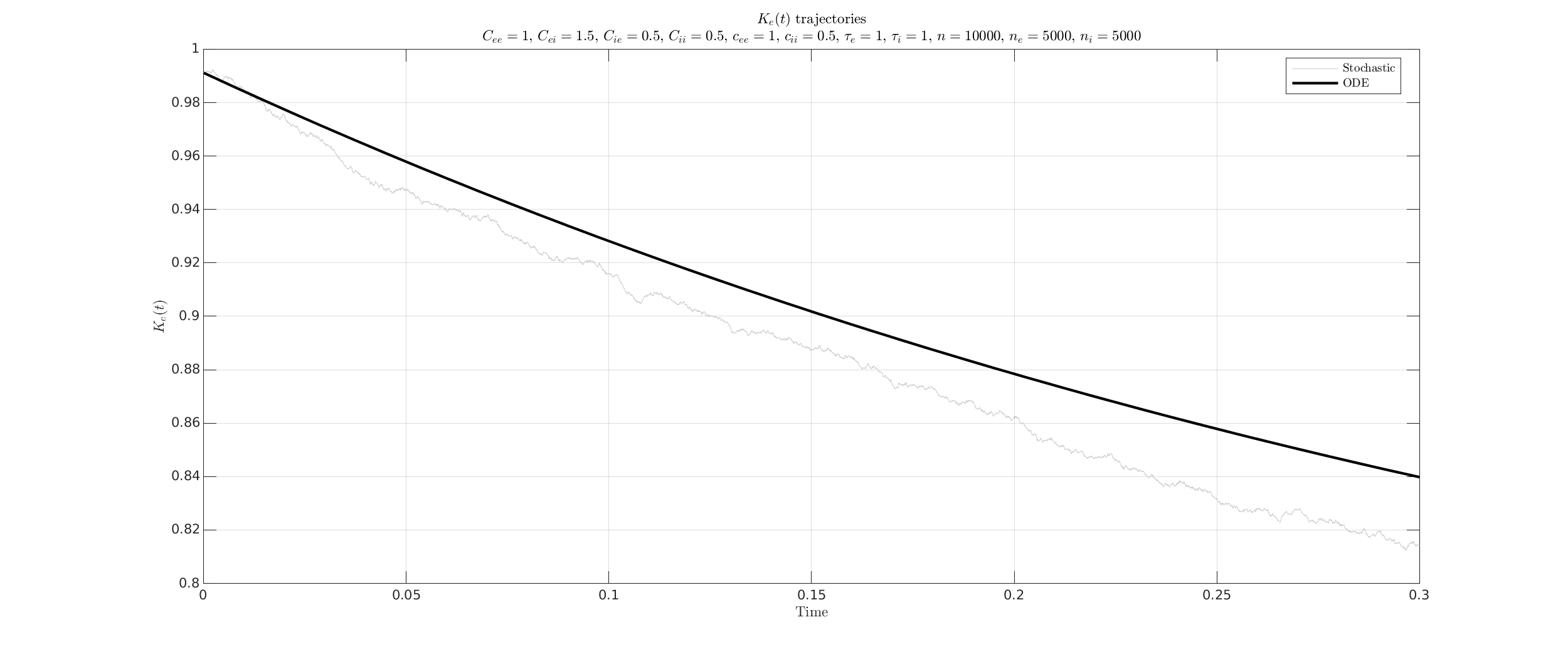}
        {\small $K^n_e(t)$ vs. $K_e(t)$}
        \label{fig:comparison_K_e}
    \end{minipage}

    \vspace{0.5cm}

    \begin{minipage}{0.49\textwidth}
        \centering
        \includegraphics[width=\textwidth]{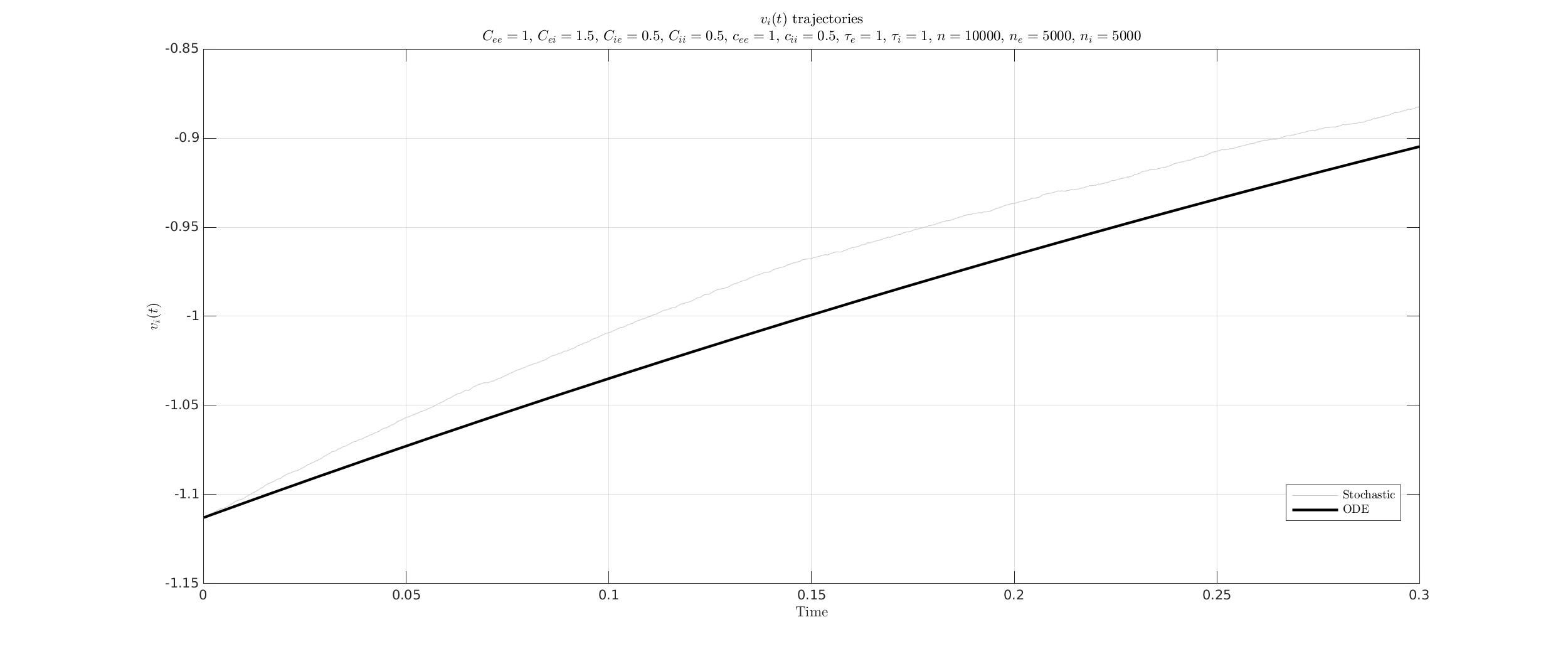}
        {\small $v^n_i(t)$ vs. $\bar{v}_i(t)$}
        \label{fig:comparison_v_i}
    \end{minipage}
    \hfill
    \begin{minipage}{0.49\textwidth}
        \centering
        \includegraphics[width=\textwidth]{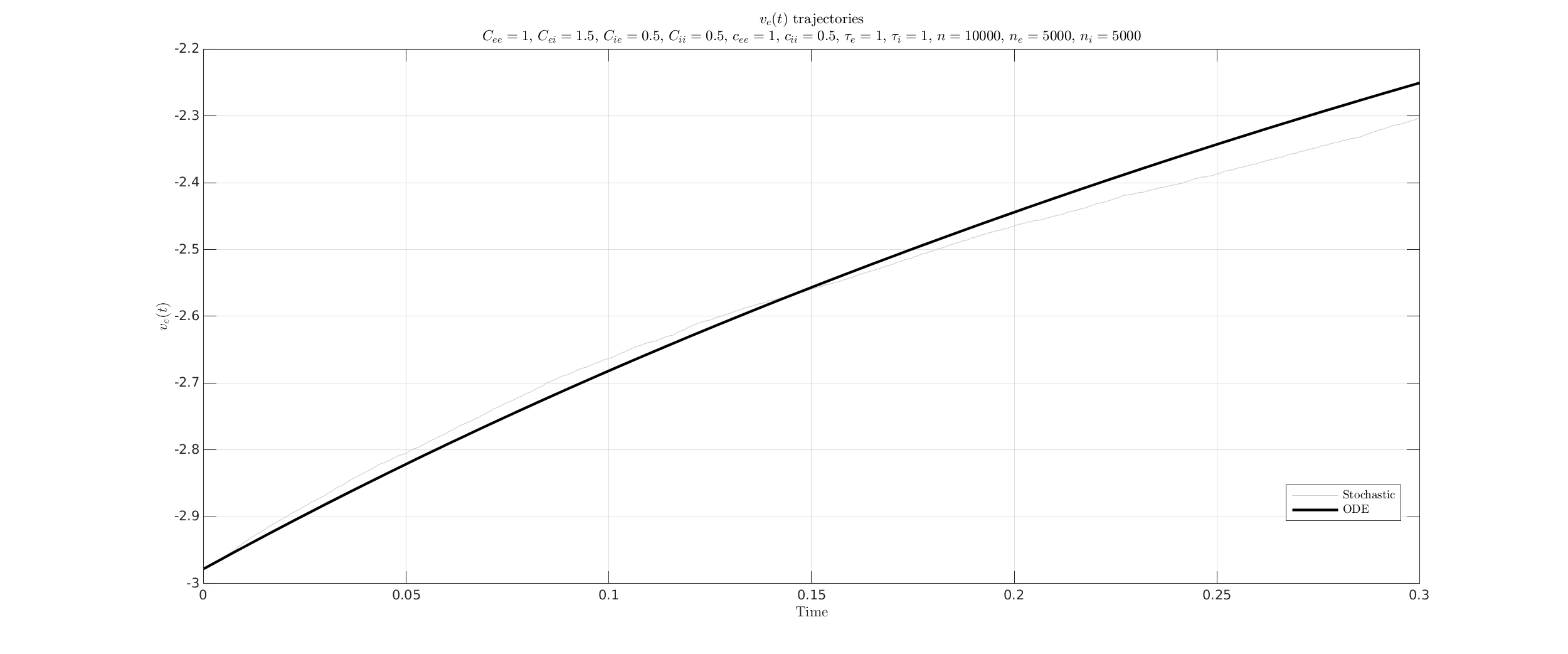}
        {\small $v^n_e(t)$ vs. $\bar{v}_e(t)$}
        \label{fig:comparison_v_e}
    \end{minipage}

    \vspace{0.5cm}

 %   \begin{minipage}{0.49\textwidth}
 %       \centering
 %       \includegraphics[width=\textwidth]{./run3/comparison_hat_v_i.png}
 %       {\small $v^n_i(t)$ vs. $\hat{v}_i(t)$}
%      \label{fig:comparison_hat_v_i}
%    \end{minipage}
%    \hfill
%    \begin{minipage}{0.49\textwidth}
 %       \centering
 %       \includegraphics[width=\textwidth]{./run3/comparison_hat_v_e.png}
 %       {\small $v^n_e(t)$ vs. $\hat{v}_e(t)$}
 %       \label{fig:comparison_hat_v_e}
 %   \end{minipage}
\end{center}

\subsection{Results for $k_e=1$, $k_i=1/2$}

%\begin{center}
%    \begin{minipage}{0.49\textwidth}
%        \centering
%        \includegraphics[width=\textwidth]{./run4/F_n_t.png}
%        {\small $F^n_{e,t}$ and $F^n_{i,t}$ trajectories}
     %   \label{fig:F_t}
%    \end{minipage}
%    \hfill
%    \begin{minipage}{0.49\textwidth}
%        \centering
%        \includegraphics[width=\textwidth]{./run4/tF_n_t.png}
%        {\small $\tilde F^n_{e,t}$ and $\tilde F^n_{i,t}$ trajectories}
%     %   \label{fig:tF_n_t}
%    \end{minipage}
%    \vspace{0.5cm}
%
%    \begin{minipage}{0.49\textwidth}
%        \centering
%        \includegraphics[width=\textwidth]{./run4/evastJ_n_.png}
%        {\small Eigenvalues of $\tilde J^n_t$}
%     %   \label{fig:evastJ_n}
%    \end{minipage}
%    \hfill
%    \begin{minipage}{0.49\textwidth}
%       \centering
%        \includegraphics[width=\textwidth]{./run4/evasJ_n_t.png}
%        {\small Eigenvalues of $J_v(t)$}
       % \label{fig:evasJ_n_t}
%    \end{minipage}
%    \vspace{0.5cm}
%    \begin{minipage}{0.49\textwidth}
%        \centering
%        \includegraphics[width=\textwidth]{./run4/F_t.png}
 %       {\small $F^n_{e,t}$ and $F^n_{e,t}$ trajectories}
      %  \label{fig:F_n_t}
%    \end{minipage}
%   
%    \vspace{0.5cm}

%    Below, we compare deterministic versus stochastic trajectories.

%    \vspace{0.5cm}

 \begin{center}   
    \begin{minipage}{0.49\textwidth}
        \centering
        \includegraphics[width=\textwidth]{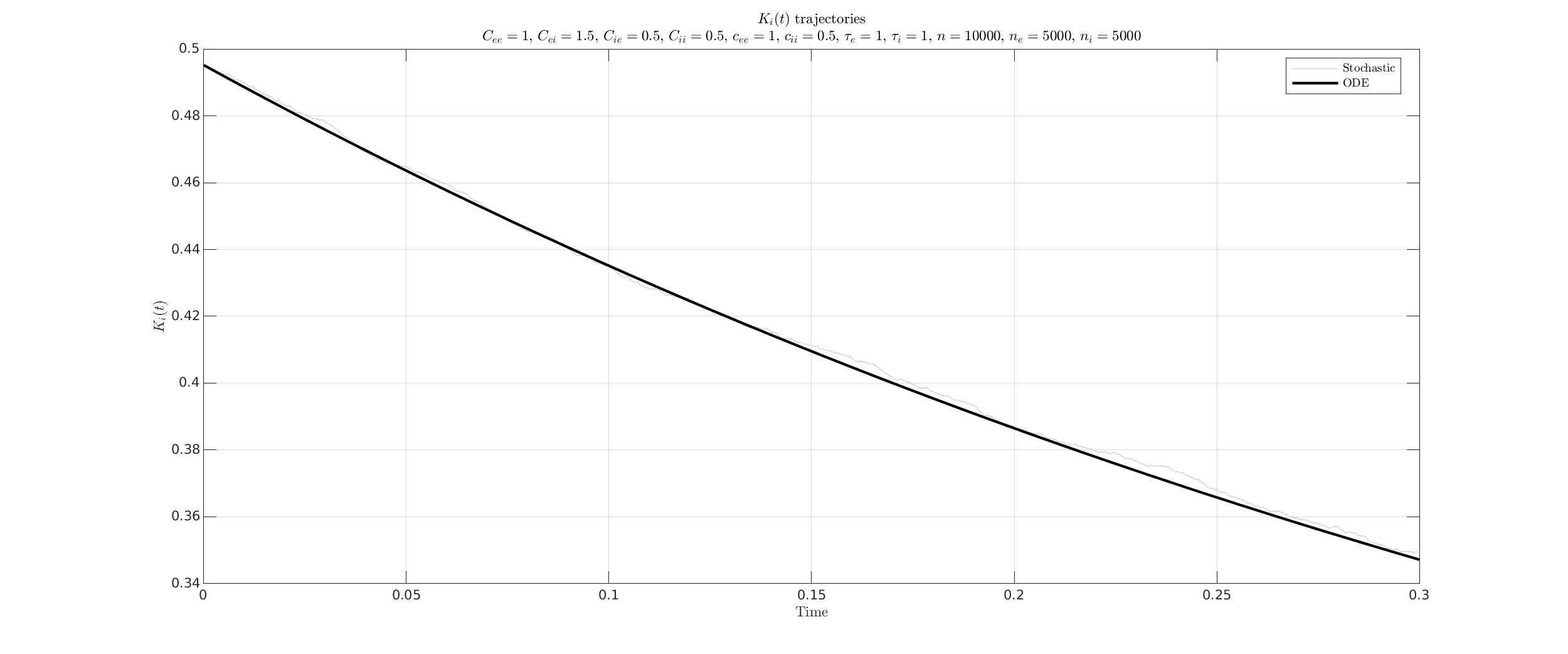}
        {\small $K^n_i(t)$ vs. $K_i(t)$}
      %  \label{fig:comparison_K_i}
    \end{minipage}
    \hfill
    \begin{minipage}{0.49\textwidth}
        \centering
        \includegraphics[width=\textwidth]{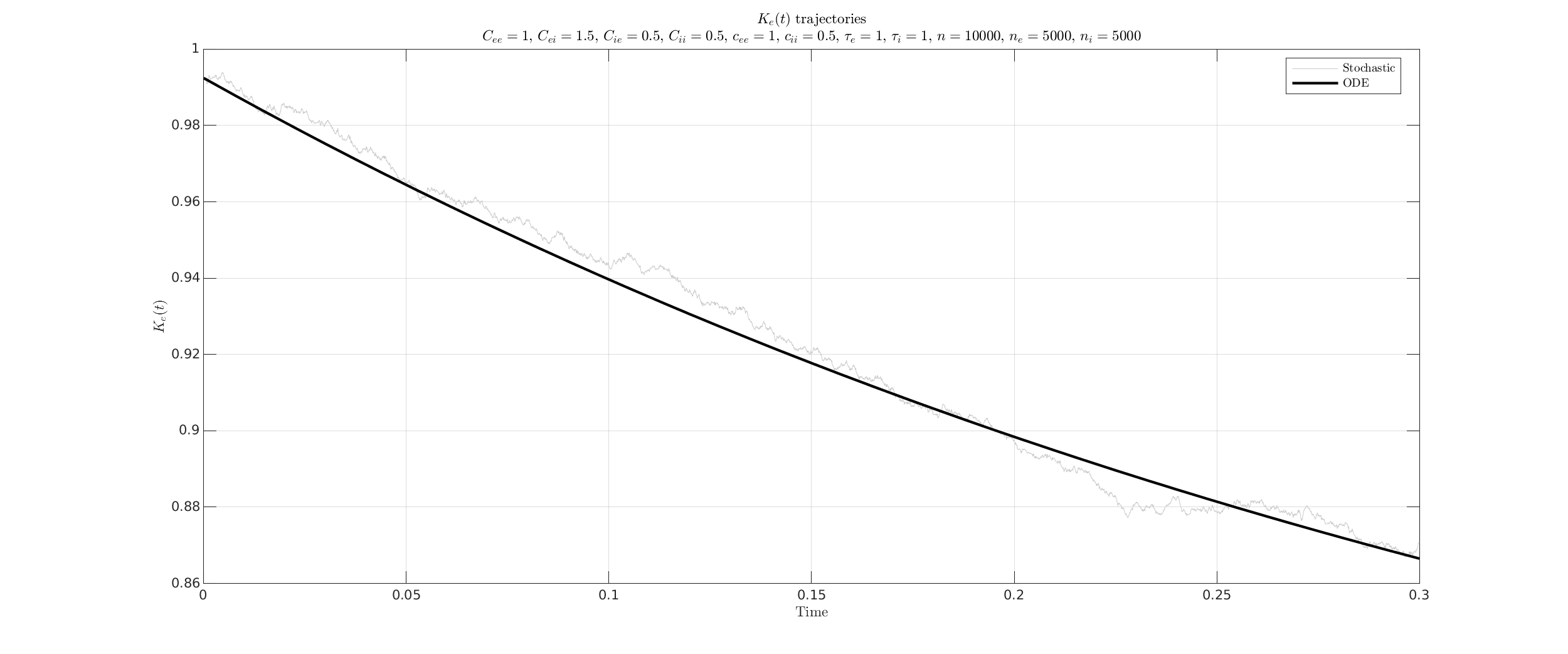}
        {\small $K^n_e(t)$ vs. $K_e(t)$}
      %  \label{fig:comparison_K_e}
    \end{minipage}

    \vspace{0.5cm}

    \begin{minipage}{0.49\textwidth}
        \centering
        \includegraphics[width=\textwidth]{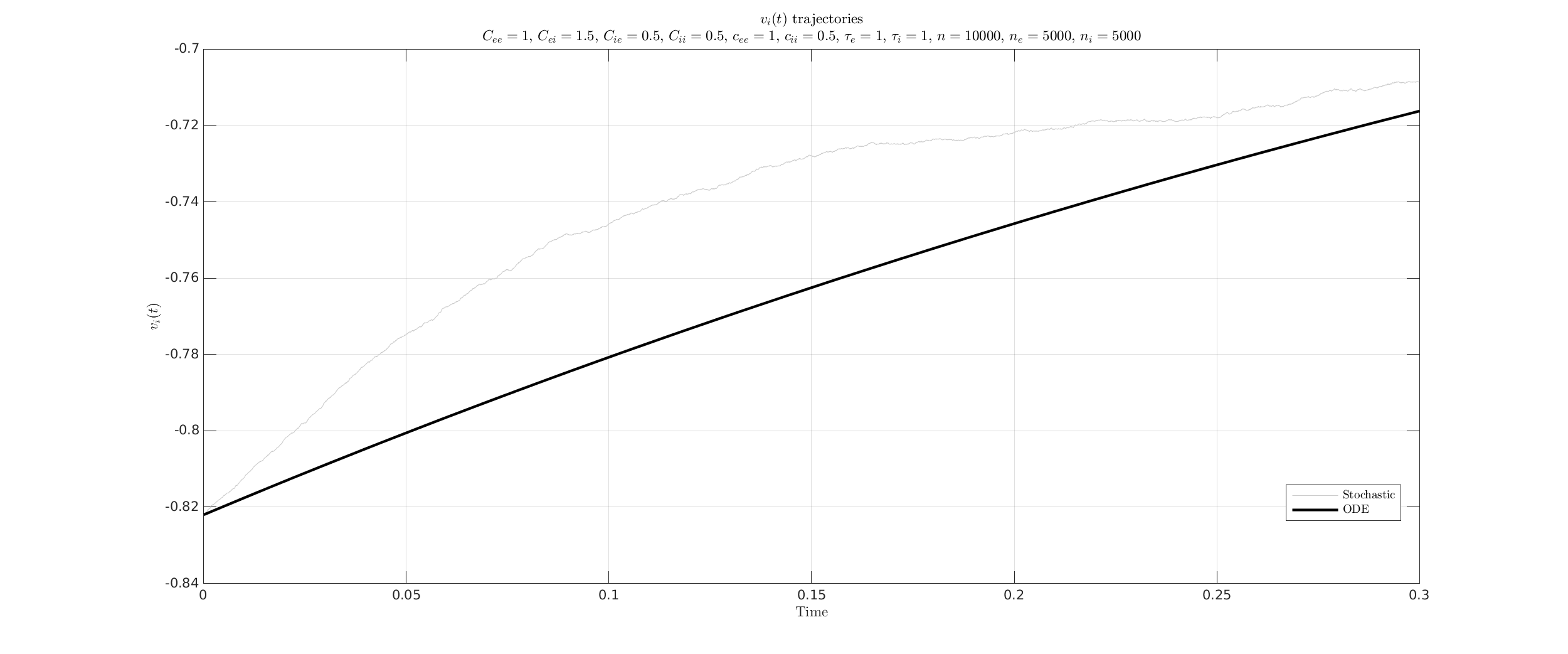}
        {\small $v^n_i(t)$ vs. $\bar{v}_i(t)$}
      %  \label{fig:comparison_v_i}
    \end{minipage}
    \hfill
    \begin{minipage}{0.49\textwidth}
        \centering
        \includegraphics[width=\textwidth]{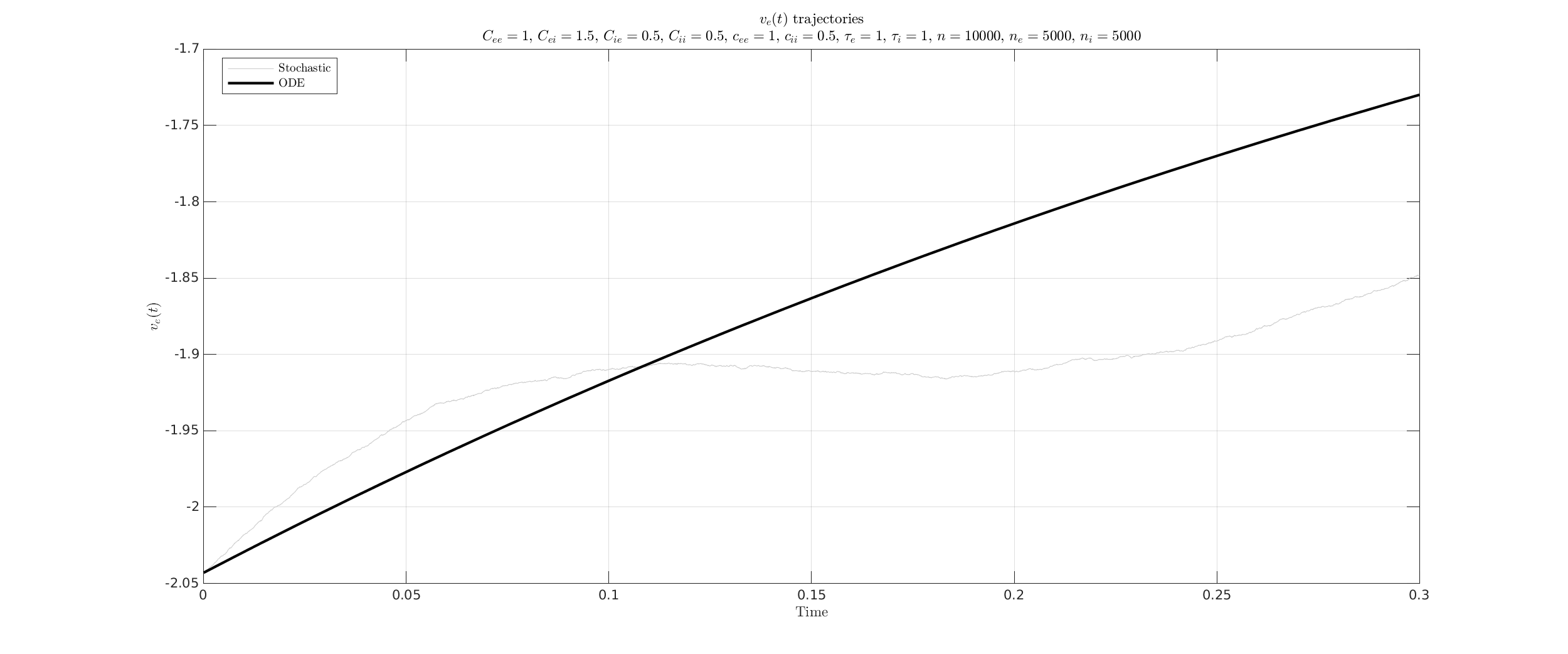}
        {\small $v^n_e(t)$ vs. $\bar{v}_e(t)$}
     %   \label{fig:comparison_v_e}
    \end{minipage}

    \vspace{0.5cm}

\end{center}
% \begin{minipage}{0.49\textwidth}
%        \centering
%        \includegraphics[width=\textwidth]{./run4/comparison_hat_v_i.png}
%        {\small $v^n_i(t)$ vs. $\hat{v}_i(t)$}
   %     \label{fig:comparison_hat_v_i}
%    \end{minipage}
%    \hfill
%    \begin{minipage}{0.49\textwidth}
%        \centering
%        \includegraphics[width=\textwidth]{./run4/comparison_hat_v_e.png}
%        {\small $v^n_e(t)$ vs. $\hat{v}_e(t)$}
   %     \label{fig:comparison_hat_v_e}
%    \end{minipage}
\section{Discussion and Conclusion}

We have determined equations that describe the hydrodynamic limit of high-dimensional `balanced' interacting particle systems. Two of the important novelties in our model are that (i) noise is directly afferent on the synaptic connections (as those who postulate that spike-transmission failure is the dominant source of noise in the brain) (ii) the effect of one neuron on another scales as $(O(n^{-1/2}))$, (iii) neurons are either excitatory of inhibitory. Since the landmark work of Sompolinsky and Van Vreeswijk \cite{Vreeswijk1996} this paradigm has been extremely popular in the theoretical neuroscience community, but the literature on rigorous mathematical proofs is comparatively underdeveloped.

An interesting aspect of our results is that the system-wide activity only concentrates (in the large $n$ limit) upto the time that it leaves the balanced manifold. An intriguing further research direction would be to construct an example where this happens and try to understand what will happen afterward. One would expect an abrupt and discontinuous change in the neural activity. Such breakdowns of the balance between excitation and inhibition have certainly been conjectured in the theoretical neuroscience community.

In our numerical simulations, the functions $\lbrace f_{\alpha\beta} \rbrace_{\alpha,\beta}$ are sigmoidal, with a relatively small gradient. This small gradient implies that the convergence to the hydrodynamic limit is relatively slow as $n\to\infty$. Indeed, writing $\bar{\zeta}_t = \inf_{s\leq t} \zeta_s$ (and recalling that $-\zeta_t$ is the real part of the eigenvalues of the Jacobian at the balanced manifold at time $t$, typically $\zeta_t \ll 1$ for sigmoidal firing rate functions). Our proof  demonstrates that
\begin{align}
\big| v^n_e(t) - \bar{v}_e(t) \big| + \big| v^n_i(t) - \bar{v}_i(t) \big| \simeq O\big( n^{-1/2} \bar{\zeta}_t^{-1} \big).
\end{align}
Thus even for large $n \sim 10^4$ (as in our numerical experiments) the stochastic fluctuations are still substantial. In future work, we will determine higher-order approximations that will be accurate for smaller $n$.

%In this situation, one expects that a better approximation would be provided by the $n$-dependent system $\big( \hat{v}^n_e(t) , \hat{v}^n_i(t) , \hat{K}^n_e(t) , \hat{K}^n_i(t) \big)$, which is the solution of the system
%\begin{align}
%\frac{d\hat{v}^n_e}{dt} =& - \hat{v}^n_e / \tau_e + \sqrt{n} F_e( \hat{v}^n_e , \hat{v}^n_i , \hat{K}^n_e(t) , \hat{K}^n_i(t) %) \\
%\frac{d\hat{v}^n_i}{dt} =& - \hat{v}^n_i / \tau_i + \sqrt{n} F_i( \hat{v}^n_e , \hat{v}^n_i , \hat{K}^n_e(t) , \hat{K}^n_i (t) ) \\
%\frac{d\hat{K}^n_e}{dt} =& -2 \hat{K}^n_e(t) / \tau_e +  \Sigma_e(\hat{v}^n_e(t), \hat{v}^n_i(t) ,\hat{K}^n_e(t) , \hat{K}^n_i(t) ) \label{eq: invariant covariance 3} \\
%\frac{d\hat{K}^n_i}{dt} =& - 2 \hat{K}^n_i(t) /  \tau_i  +  \Sigma_i(\hat{v}^n_e(t), \hat{v}^n_i(t),\hat{K}^n_e(t) , \hat{K}^n_i(t) ) \label{eq: invariant covariance 4} .
%\end{align}
%In fact our proof demonstrates that
%\begin{align}
%\big| \hat{v}^n_e(t) - v_e(t) \big| + \big| \hat{v}^n_i(t) - v^n_i(t) \big| \simeq \rm{Const}\times n^{-1/2},
%\end{align}
%and the constant is independent of $\zeta_t$. 

\clearpage
\section{Appendix: Bounds on Counting Process Fluctuations}

Let $y(t)$ be a unit-intensity counting process, $n$ a positive integer, and define
\begin{align}
w(t) = y(nt) - nt.
\end{align}
\begin{lemma}
For all $x \geq 0$,
 \begin{align}
\mathbb{P}\big( \sup_{t\leq T} \big| w(t) \big| \geq x \sqrt{n} \big) \leq 2  \exp\bigg(  - \frac{x^2}{4T} \bigg).
\end{align}
\end{lemma}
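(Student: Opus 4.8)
The plan is to recognize the final statement as a standard maximal inequality for a compensated Poisson process and to prove it via the exponential martingale together with Doob's maximal inequality. First I would observe that $t \mapsto y(nt)$ is a Poisson process of intensity $n$, so that $w(t) = y(nt) - nt$ is a right-continuous, mean-zero martingale with respect to its natural filtration. For any fixed $\theta \in \mathbb{R}$, the classical exponential martingale associated with the compensated Poisson process,
\[
M_\theta(t) = \exp\big( \theta w(t) - nt(e^\theta - 1 - \theta) \big),
\]
is a positive martingale with $\mathbb{E}[M_\theta(t)] = 1$. Since $e^\theta - 1 - \theta \geq 0$ for all $\theta$, whenever $\theta > 0$ the process $t \mapsto \exp(\theta w(t))$ is a nonnegative submartingale.

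Next I would apply Doob's maximal inequality to this submartingale on $[0,T]$: for $\theta > 0$,
\[
\mathbb{P}\big( \sup_{t\le T} w(t) \ge x\sqrt{n} \big) = \mathbb{P}\big( \sup_{t\le T} e^{\theta w(t)} \ge e^{\theta x \sqrt{n}} \big) \le e^{-\theta x \sqrt{n}}\, \mathbb{E}\big[ e^{\theta w(T)} \big] = \exp\big( -\theta x\sqrt{n} + nT(e^\theta - 1 - \theta) \big).
\]
Running the same argument with $-w$ in place of $w$ (also a martingale, whose exponential martingale is $\exp(-\theta w(t) - nt(e^{-\theta} - 1 + \theta))$, and using the elementary fact that $0 \le e^{-\theta} - 1 + \theta \le e^\theta - 1 - \theta$ for $\theta \ge 0$) controls the lower deviation $\mathbb{P}(\inf_{t\le T} w(t) \le -x\sqrt{n})$ by the same right-hand side. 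A union bound then gives
\[
\mathbb{P}\big( \sup_{t\le T} |w(t)| \ge x\sqrt{n} \big) \le 2\exp\big( -\theta x\sqrt{n} + nT(e^\theta - 1 - \theta) \big).
\]

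Finally I would optimize the exponent over $\theta$. Choosing $\theta = \dfrac{x}{2T\sqrt{n}}$ and bounding $e^\theta - 1 - \theta \le \theta^2$ yields exponent $\le -\dfrac{x^2}{2T} + nT\cdot\dfrac{x^2}{4T^2 n} = -\dfrac{x^2}{4T}$, which is exactly the asserted estimate. I do not expect any serious obstacle here; the only point requiring a little care is the optimization step, since the quadratic bound $e^\theta-1-\theta\le\theta^2$ is used at $\theta = x/(2T\sqrt n)$ — this is immediate in the regime $x = O(\sqrt{n})$ that is all that is needed for the $n\to\infty$ applications, and if one wants the inequality written out uniformly in $x$ one simply replaces the quadratic bound by the exact Bennett rate $nT\,h\!\big(x/(T\sqrt n)\big)$ with $h(u) = (1+u)\log(1+u) - u$ and tracks constants. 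Everything else (right-continuity for Doob, the two exponential-martingale identities) is routine.
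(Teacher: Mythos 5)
Your proposal matches the paper's proof essentially line for line: both use the exponential martingale $\exp(\theta w(t) - nt(e^\theta-1-\theta))$, Doob's maximal inequality on the submartingale $\exp(\theta w)$, a union bound over the upper and lower deviations, the quadratic bound $e^\theta - 1 - \theta \le \theta^2$, and the substitution $\theta = x/(2T\sqrt n)$. One small point in your favor: you explicitly flag that the quadratic bound requires $\theta$ to stay bounded (hence the substituted $\theta$ must be $O(1)$, i.e.\ $x=O(\sqrt n\,T)$), whereas the paper enforces this silently by restricting to $b \in (0,1]$ and then substituting without remarking on the constraint.
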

\begin{proof}
\begin{align}
\mathbb{P}\big( \sup_{t\leq T} \big| w(t) \big| \geq x \sqrt{n} \big) \leq \mathbb{P}\big( \sup_{t\leq T}   w(t)  \geq x \sqrt{n} \big) + \mathbb{P}\big( \inf_{t\leq T}  w(t)   \leq -x \sqrt{n} \big). \label{eq: the above}
\end{align}
Starting with the first term on the RHS of \eqref{eq: the above}, thanks to Doob's submartingale inequality, for any constant $b \in (0,1]$,
\begin{align}
\mathbb{P}\big( \sup_{t\leq T} w(t) \geq x \sqrt{n} \big) \leq & \mathbb{E}\bigg[ \exp\big( bw(T) - b \sqrt{n} x \big) \bigg] \\
= & \exp\bigg( nT \big( \exp(b)-b-1 \big) - b \sqrt{n} x \bigg) \\
\leq & \exp\bigg( b^2 n T - b\sqrt{n} x \bigg) \\
\leq & \exp\bigg(  - \frac{x^2}{4T} \bigg),
\end{align}
after substituting $b = x / (2\sqrt{n}T)$. We analogously find that
\begin{align}
 \mathbb{P}\big( \inf_{t\leq T}  w(t)   \leq -x \sqrt{n} \big) \leq \exp\bigg(  - \frac{x^2}{4T} \bigg)
\end{align}

\end{proof}
\begin{corollary} \label{Corollary Expectation Square}
For any $ b > 0$,
\begin{align}
\lim_{T \to 0^+} \mathbb{E}\bigg[  \exp\bigg( b n^{-1} \sup_{t\leq T} \big| w(t) \big|^2 \bigg) \bigg] = 1.
\end{align}
\end{corollary}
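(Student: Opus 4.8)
The plan is to deduce the corollary directly from the Gaussian-type tail bound just established, via the layer-cake representation of the exponential moment. Abbreviate $M_n := n^{-1/2}\sup_{t\le T}|w(t)|$, so that the preceding Lemma reads $\mathbb{P}(M_n \ge x) \le 2\exp(-x^2/(4T))$ for every $x \ge 0$, and note that this bound is uniform in $n$. Since $M_n \ge 0$ we have $\exp(bM_n^2) \ge 1$ pointwise, so it suffices to show that the nonnegative quantity $\mathbb{E}\big[\exp(bM_n^2) - 1\big]$ tends to $0$ as $T \to 0^+$.

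First I would write $\exp(bM_n^2) - 1 = \int_0^{M_n^2} b e^{bu}\,du = \int_0^\infty b e^{bu}\,\mathbf{1}\{u < M_n^2\}\,du$, take expectations, and apply Tonelli's theorem to obtain $\mathbb{E}\big[\exp(bM_n^2) - 1\big] = \int_0^\infty b e^{bu}\,\mathbb{P}(M_n^2 > u)\,du$. Then I would insert the tail bound in the form $\mathbb{P}(M_n^2 > u) = \mathbb{P}(M_n > \sqrt{u}) \le 2e^{-u/(4T)}$, which gives $\mathbb{E}\big[\exp(bM_n^2) - 1\big] \le 2b\int_0^\infty \exp\!\big((b - \tfrac{1}{4T})u\big)\,du$.

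Provided $T < 1/(4b)$, the exponent $b - 1/(4T)$ is strictly negative, the integral converges, and one gets $\mathbb{E}\big[\exp(bM_n^2) - 1\big] \le \dfrac{2b}{1/(4T) - b} = \dfrac{8bT}{1 - 4bT}$, a bound independent of $n$ that tends to $0$ as $T \to 0^+$. Combining this with the trivial lower bound $\mathbb{E}\big[\exp(bM_n^2)\big] \ge 1$ and invoking the squeeze theorem yields the stated limit.

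There is no genuine obstacle here: the computation is routine. The only point requiring a word of care is that the estimate is only meaningful once $T$ is small enough that $4bT < 1$ (for larger $T$ the exponential moment may well be infinite), but since the statement concerns the limit $T \to 0^+$ this restriction is harmless. It is also worth recording that the bound obtained is uniform in $n$, which is the form in which the estimate is used in the preceding arguments.
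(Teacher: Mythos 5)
Your proof is correct, and since the paper leaves the Corollary without an explicit proof, the layer-cake argument you give (tail bound from the preceding Lemma, Tonelli, then integrate the subgaussian tail against $be^{bu}$) is exactly the standard deduction the authors intend the reader to supply. The computation $\mathbb{E}[\exp(bM_n^2)-1]\le 8bT/(1-4bT)$ for $T<1/(4b)$ is right, and your observation that the estimate is uniform in $n$ and that the exponential moment may be infinite when $4bT\ge 1$ is a useful clarification of what the statement does and does not assert.
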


Define the modulus of continuity $\phi_{\epsilon}: \mathcal{C}([0,S] , \mathbb{R}) \mapsto \mathbb{R}$ to be
\begin{align}
\phi_{\epsilon}(w) = \sup_{0\leq s,t \leq S \; : \; |s-t| \leq \epsilon}\big| w(s) - w(t) \big|
\end{align}
\begin{lemma}\label{Lemma Appendix}
For any $\delta > 0$,
\begin{align}
 \lim_{\epsilon \to 0^+} \lsup{n} n^{-1} \log \mathbb{P}\bigg( \sum_{j\in I_n} \phi_{\epsilon}(w^j) \geq \delta n \bigg)  = -\infty.
\end{align}
\end{lemma}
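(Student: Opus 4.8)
The plan is to view $\sum_{j\in I_n}\phi_\epsilon(w^j)$ as a sum of independent identically distributed nonnegative random variables, apply a Chernoff bound that is \emph{uniform in $n$}, and then exploit the fact that the modulus of continuity of a fixed Brownian path tends to $0$ as $\epsilon\to0^+$. Concretely: by construction the $\{w^j\}_{j\in I_n}$ are independent standard Brownian motions on $[0,S]$ (the ones furnished by the Komlos--Major--Tusnady coupling in the proof of Lemma~\ref{Lemma Intermediate Bound}), so the $\phi_\epsilon(w^j)$ are i.i.d., each distributed as $X_\epsilon:=\phi_\epsilon(w^1)\ge0$. For any $\lambda>0$, Markov's inequality applied to $\exp\!\big(\lambda\sum_{j\in I_n}\phi_\epsilon(w^j)\big)$ gives
\[
\mathbb{P}\Big(\sum_{j\in I_n}\phi_\epsilon(w^j)\ge\delta n\Big)\le e^{-\lambda\delta n}\big(\mathbb{E}[e^{\lambda X_\epsilon}]\big)^n,
\]
so that, for \emph{every} $n$,
\[
n^{-1}\log\mathbb{P}\Big(\sum_{j\in I_n}\phi_\epsilon(w^j)\ge\delta n\Big)\le -\lambda\delta+\log\mathbb{E}[e^{\lambda X_\epsilon}];
\]
since the right-hand side does not depend on $n$, the same bound holds with $\limsup_{n}$ on the left.

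Next I would show that $\mathbb{E}[e^{\lambda X_\epsilon}]\to1$ as $\epsilon\to0^+$, for each fixed $\lambda>0$. The point is that $\epsilon\mapsto\phi_\epsilon(w)$ is nondecreasing and, for any continuous $w$ on the compact interval $[0,S]$, $\phi_\epsilon(w)\to0$ as $\epsilon\to0^+$ by uniform continuity; since Brownian paths are almost surely continuous, $X_\epsilon\downarrow0$ almost surely. To pass this through the exponential I would dominate $e^{\lambda X_\epsilon}\le e^{2\lambda\norm{w^1}_S}$ (using $\phi_\epsilon(w^1)\le 2\norm{w^1}_S$), the right-hand side being integrable for every $\lambda$ by the reflection-principle estimate $\mathbb{P}(\norm{w^1}_S\ge x)\le 2\exp(-x^2/(2S))$ — the $\epsilon$-free counterpart of the first lemma of this appendix. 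Dominated convergence then gives $\mathbb{E}[e^{\lambda X_\epsilon}]\to\mathbb{E}[1]=1$.

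Combining the two steps, for every $\lambda>0$ one obtains
\[
\lim_{\epsilon\to0^+}\limsup_{n\to\infty}n^{-1}\log\mathbb{P}\Big(\sum_{j\in I_n}\phi_\epsilon(w^j)\ge\delta n\Big)\le -\lambda\delta
\]
(the $\epsilon$-limit exists by the monotonicity of $\phi_\epsilon$ in $\epsilon$), and letting $\lambda\to\infty$ forces the left-hand side to equal $-\infty$, which is the claim. I do not expect a genuine obstacle: the only delicate points are keeping the Chernoff estimate uniform in $n$ — so that one may send $\epsilon\to0$ first and $\lambda\to\infty$ afterwards — and choosing a dominating function for dominated convergence that is independent of $\epsilon$, which is precisely where the sub-Gaussian tail of $\norm{w^1}_S$ is used.
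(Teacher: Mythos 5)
Your proof is correct and follows essentially the same approach as the paper: a Chernoff bound reducing the problem to showing that $\mathbb{E}\big[e^{\lambda \phi_\epsilon(w^1)}\big]\to 1$ as $\epsilon\to 0^+$, then sending the Chernoff parameter $\lambda\to\infty$. The only difference is that you supply the dominated-convergence justification (almost-sure uniform continuity of Brownian paths on $[0,S]$, the bound $\phi_\epsilon(w^1)\le 2\norm{w^1}_S$, and the sub-Gaussian tail of $\norm{w^1}_S$) for the key limit, which the paper merely asserts; you also correctly identify the $w^j$ as the KMT Brownian motions, which is necessary since for the centered Poisson processes $w(t)=y(nt)-nt$ introduced at the head of the appendix the modulus of continuity does not vanish as $\epsilon\to 0^+$.
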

\begin{proof}
Thanks to Chernoff's Inequality, for any $b > 0$
\begin{align}
 \mathbb{P}\bigg( \sum_{j\in I_n} \phi_{\epsilon}(w^j) \geq \delta n \bigg) \leq & \mathbb{E}\bigg[ \exp\bigg( b\sum_{j\in I_n} \phi_{\epsilon}(w^j) - b \delta n \bigg) \bigg] \\
 =& \mathbb{E}\bigg[ \exp\bigg( b \phi_{\epsilon}(w^1) - b \delta \bigg) \bigg]^n
\end{align}
Now for any $b > 0$,
\begin{align}
\lim_{\epsilon \to 0^+} \mathbb{E}\bigg[ \exp\big( b \phi_{\epsilon}(w^1) \big) \bigg] = 1.
\end{align}
Hence by taking $b$ to be arbitrarily large, we obtain the Lemma.
\end{proof}

\bibliographystyle{plain}
\bibliography{library}

\end{document}